\documentclass[ecta,nameyear,draft]{econsocart}
\RequirePackage[colorlinks,citecolor=blue,linkcolor=blue,urlcolor=blue,pagebackref]{hyperref}
\usepackage{makecell} 
\RequirePackage{amsthm,amsmath,amsfonts,amssymb}
\usepackage{dsfont}
\usepackage{multirow}
\usepackage{nicefrac}
\usepackage{relsize}
\usepackage{comment}
\RequirePackage{graphicx}
\usepackage{pdflscape}
\usepackage{enumitem} 
\usepackage{footmisc}
\usepackage[section]{algorithm}
\usepackage{algpseudocode}
\usepackage{array}
\usepackage{colortbl} 
\usepackage{booktabs} 
\usepackage{diagbox}
\startlocaldefs

\theoremstyle{plain}

\newtheorem{theorem}{Theorem}

\newtheorem{lemma}{Lemma}
\newtheorem{corollary}{Corollary}
\newtheorem{assumption}{Assumption}

\theoremstyle{definition}
\newtheorem{definition}{Definition}
\newtheorem{example}{Example}
\newtheorem{remark}{Remark}

\newcommand{\onesmat}{\mathbf{1}}
\newcommand{\R}{{D}}
\newcommand{\bpi}{\boldsymbol{\pi}}
\newcommand{\E}{\text{\textnormal{E}}}
\newcommand{\X}{X}

\newcommand{\tht}{{\scriptscriptstyle{\textrm{IPW}}}}
\newcommand{\twls}{{\scriptscriptstyle{\textnormal{WLS}}}}
\newcommand{\tols}{{\scriptscriptstyle{\textnormal{OLS}}}}
\newcommand{\tobs}{{\scriptscriptstyle{\textnormal{obs}}}}

\newcommand{\0}{\mathbf{0}}
\newcommand{\bpiInv}{\bpi^{-1}}
\newcommand{\thj}{{\scriptscriptstyle{\textnormal{HJ}}}}
\newcommand{\m}{\mathbf{m}}
\newcommand{\I}{I}
\newcommand{\tci}{\scriptscriptstyle{\textnormal{CI}}}
\newcommand{\tmi}{\scriptscriptstyle{\textnormal{MI}}}
\newcommand{\tgr}{{\scriptscriptstyle{\textnormal{GR}}}}
\newcommand{\tqmle}{{\scriptscriptstyle{\textnormal{QMLE}}}}

\newcommand{\tas}{{\scriptscriptstyle{\textnormal{AS}}}}
\newcommand{\tL}{{\scriptscriptstyle{\textnormal{L}}}}
\newcommand{\tNO}{{\scriptscriptstyle{\textnormal{NOH}}}}
\newcommand{\toc}{{\scriptscriptstyle{\textnormal{Opt}}}}
\newcommand{\toci}{{\scriptscriptstyle{\textnormal{Opt-I}}}}

\newcommand{\Om}{\mathbf{\omega}}
\newcommand{\ones}[1]{ 1_{\scriptscriptstyle {#1}}}
\newcommand{\dmat}{\mathbf{\Omega}}
\newcommand{\V}{\text{\textnormal{Var}}}
\newcommand{\cV}{\text{\textnormal{Cov}}}

\newcommand{\diag}{\operatorname{diag}}
\newcommand{\bfI}{\text{\textnormal{I}}}

\newcommand{\matp}{\mathbf{p}}
\newcommand{\dtildep}{\widetilde{ \dmat }_{\hspace{-.6mm}{}{/}}{}_{\scriptscriptstyle \hspace{-.6mm}\matp}}
\newcommand{\tgn}{{\scriptscriptstyle{\textnormal{N}}}}
\newcommand{\s}{\mathbf{S}}
\newcommand{\A}{\mathbf{A}}

\newcommand{\Tr}{\mathbf{Tr}}

\newcommand{\Ome}{\mathbf{\Omega}}
\newcommand{\Otildep}{\tilde{ \Ome }_{\hspace{-.6mm}{}{/}}{}_{\scriptscriptstyle \hspace{-.6mm}\matp}}
\usepackage{chngcntr}
\counterwithin{table}{section}
\newcommand{\viii}[1]{\left\lvert\kern-0.25ex\left\lvert\kern-0.25ex\left\lvert #1 
    \right\rvert\kern-0.25ex\right\rvert\kern-0.25ex\right\rvert_2}
\newcommand{\viiii}[1]{\left\lvert\kern-0.25ex\left\lvert\kern-0.25ex\left\lvert #1 
    \right\rvert\kern-0.25ex\right\rvert\kern-0.25ex\right\rvert}

\endlocaldefs

\begin{document}

\begin{frontmatter}

\title{Design-based Estimation Theory for Complex Experiments}
\begin{aug}
%
%
%
\author[add1]{\fnms{Haoge}~\snm{Chang}\ead[label=e1]{hc3516@columbia.edu}}
\address[add1]{%
\orgdiv{Department of Economics},
\orgname{Columbia University}}
\end{aug}

\begin{funding}
The author expresses special thanks to Joel Middleton for extensive guidance in the development of this research. I thank my advisors, Don Andrews, Xiaohong Chen, and P. M. Aronow, for their guidance and support. I thank Patrick Lopatto and Anna Wilke for reading the paper carefully and providing valuable and extensive feedback. I thank Jason Abaluck, Max Cytrynbaum, Lucas Finamor, Paul Goldsmith-Pinkham, Philip Haile, Zijian He, John Eric Humphries, Bjoern Hoeppner, Yuichi Kitamura, Cyrus Samii, Pedro Sant'anna, Fredrik S\"{a}vje, Michael Sullivan, Ye Wang, Ed Vytlacil and Longqi Yang for helpful advice and discussions. 
\end{funding}
\coeditor{\fnm{[Name} \snm{Surname}; will be inserted later]}

\begin{abstract}
This paper considers the estimation of treatment effects in randomized experiments with complex experimental designs, including cases with interference between units. We develop a design-based estimation theory for general experimental designs. Our theory facilitates the analysis of many design-estimator pairs that researchers commonly employ in practice and provides procedures to consistently estimate asymptotic variance bounds. We propose new classes of estimators with favorable asymptotic properties from a design-based point of view. In addition, we propose a scalar measure of experimental complexity which can be linked to the design-based variance of the estimators. We demonstrate the performance of our estimators using simulated datasets based on an actual network experiment studying the effect of social networks on insurance adoptions.
\end{abstract}

\begin{keyword}
\kwd{Analysis of Randomized Experiments}
\kwd{Design-based Inference}
\kwd{Regression Adjustment}
\end{keyword}

\end{frontmatter}

\section{Introduction}
Randomized experiments have become a standard tool in  economic research. Traditionally presented as estimating the average effect of a binary treatment, modern experimental designs have been greatly enriched to capture a variety of economically relevant effects, such as time effects (e.g., \cite{athey2022design} and \cite{roth2021efficient}), peer effects (e.g., \cite{sacerdote2014experimental}), social incentives (e.g., \cite{ashraf2018social}), and spillover effects (e.g., \cite{hudgens2008toward}, \cite{miguel2004worms} and \cite{cai2015social}). Many such experimental designs involve nonstandard treatment assignment mechanisms and/or interference of treatment status among experimental units according to spatial/network/time proximity.\footnote{By interference, we mean the exposure of one unit to treatment may include other units' assignments. This typically arises when researchers are interested in some spillover effects, e.g. \cite{hudgens2008toward} and \cite{aronow2017estimating}. } We refer to these experimental designs are as \textit{complex experiments}.\footnote{The term \textit{complex} is borrowed from the survey sampling literature \citep{chaudhuri2005survey}, where it refers to survey designs that depart from common types of random sampling.}  Many researchers analyze experimental data using a regression model with (possibly clustered) robust standard errors. Although such procedures are justifiable for simple experimental designs,\footnote{For example, linear regression models are justifiable in two-arm completely randomized designs (\cite{freedman2008regression}, \cite{lin2013agnostic}).} they can be ad hoc when applied to complex experimental designs. It is not clear to what extent the results rely on the modeling assumptions and how to interpret the results when the regression models are thought to be misspecified.

Design-based statistical theory provides a powerful framework for analyzing complex experiments. In the design-based framework, the randomization of treatment assignment is the sole source of statistical randomness. Estimation and inferential theory are formulated on this randomness alone, without reference to any other stochastic model (e.g., sampling from a superpopulation  and/or random disturbance terms). This framework has important implications for weighting in the estimation of average treatment effects and for the estimation of standard errors. In simple experiments, the design-based framework provides procedures compatible with current empirical practices for analyzing experimental data with regression models. But in more complex settings, the design-based framework can nevertheless be adapted to provide general-purpose estimation strategies that do not rely on regression models for validity.

Estimation theory in the design-based setting has been investigated for many designs on a case-by-case basis. Many important insights have been derived from studying particular experimental designs, but a design-based estimation theory that can be applied to general experimental designs has not hitherto been developed. A design-based estimation theory with broad applicability is important for practice, as it provides guidance to empirical researchers using novel experimental designs that deviate from well-analyzed cases or simple experimental designs that deviate from the standard ones due to practical limitations and implementation reasons. Such designs appear frequently in economic research. 

This paper studies design-based estimation theory for general experimental designs. Our results can be applied to standard designs (e.g., completely randomized designs, clustered randomized designs, and pairwise randomized designs) as well as complex designs where analytical results were not previously available. Under mild regularity assumptions, we provide procedures to consistently and efficiently estimate the average effects of interest and procedures to consistently estimate asymptotic variance bounds.\footnote{In the design-based framework, the asymptotic variance is not generally identified.  Starting with \cite{splawa1990application}, the common solution to the issue of unidentified variances has been to estimate a {\it variance bound}, an identified quantity that is provably greater than the variance. The variance bound formula reduces to the standard (cluster) robust standard errors in simple designs. For example, see \cite{lin2013agnostic} and \cite{schochet2021design}.} We also provide a novel scalar measure of experimental complexity which can be linked to the design-based variance of the estimators, enabling researchers to understand the strengths and weaknesses of particular experimental designs. This measure can be used in the designing-stage of the experiment before collecting any outcome data.

\sloppy Building off of recent advances in design-based estimation theory \citep{middleton2018unified,middleton2021unifying}, the paper makes three main contributions. As the first contribution, we extend the theoretical analysis of many standard estimators to a broader class of experimental designs. Specifically, we analyze a family of design-estimator pairs commonly employed by researchers in practice. We define the class of moment estimators and study their properties with general experimental designs. Special cases of these estimators include the inverse-probability weighted (IPW), Hajek, weighted least squares (WLS), and generalized regression estimators.\footnote{Generalized regression estimators have the same form as doubly-robust estimators in the observational setting, as noted by \cite{kang2007demystifying}. } We provide conditions for convergence to probability limits and characterize the asymptotic variances for these estimators.\footnote{Refer to Section \ref{Setup} for the definition of asymptotics in this setting. } We provide procedures for consistent plug-in variance-bound estimation for general designs under a weak moment assumption.

As a second contribution, we offer new estimators that have desirable asymptotic properties and are applicable with general experimental designs. The new estimators increase estimation precision by having smaller design-based asymptotic variances. The new classes of estimators are based on the class of generalized regression estimators. The first class we consider is the class of standard \textit{Quasi-Maximum Likelihood GR estimators} (QMLE-GR). This class follows the classical model-assisted estimation strategy in the survey analysis literature  \citep{sarndal2003model} and it is useful when the researcher has a good approximating model for potential outcomes and covariates. However, in terms of asymptotic variances, this strategy is not guaranteed to be superior to the baseline IPW estimator when the model is misspecified. This problem motivates the second class of estimators, the \textit{no-harm GR estimators} (No-harm-GR). This class of estimators is based on the QMLE estimates but estimates a multiplicative constant in addition. Estimators of this class have an asymptotic variance no worse than that of the baseline IPW estimator. This class of estimators is inspired by the \cite{cohen2020no}'s estimators in a two-arm completely randomized design. The final class is the \textit{optimal GR estimators} (Opt-GR). This class of estimators leads to the greatest reduction of asymptotic variances when compared with estimators using the same class of parametric models for adjustments. This class of estimators can be traced back to \cite{lin2013agnostic}, and \cite{middleton2018unified} studies such estimators for linear models in two-arm experiments. We further consider refinements that combine some of the above approaches, leading to a class of Optimal-Imputed GR estimators (Opt-I GR). We demonstrate the finite sample performances of the proposed estimators using simulated datasets based on an actual network experiment (\cite{cai2015social}). 

As a third contribution, we propose measures of experimental complexity, as a result of the characterization of asymptotic variances. These measures are the largest eigenvalues of the variance-covariance matrices of the inverse probability-weighted treatment assignment indicators. Theoretically, these quantities govern the rate of convergence of moment-type estimators from a design-based point of view. We shall also give a minimax interpretation for such measures: they are the worst-case variance of the IPW estimators when the outcomes are restricted to a unit ball. A collection of such measures provides useful scalar summaries of the relative strengths and weaknesses of an experimental design for measuring different effects of interest. We believe that these measures are useful for researchers to better understand their experimental designs in complex settings and we demonstrate their uses in the simulations.
\subsection{Literature Review}
This paper builds on the insights in \cite{middleton2018unified,middleton2021unifying}, which proposed the use of matrix spectral theory in the design-based framework. This paper inherits and generalizes the insight. Compared with the previous works, this paper 1) provides a rigorous asymptotic analysis for a large class of estimators (moment-type estimators), 2) considers general asymptotic variance bound estimation under weaker conditions, 3) proposes and analyzes new classes of estimators (QMLE-GR, No-harm-GR and Opt-GR), 4) proposes the measures of complexity and 5) specializes the results to network experiments. 

This paper adds to the literature on design-based estimation theory. The survey sampling literature includes a large body of literature on design-based estimation theory (for example, see \cite{sarndal2003model} and \cite{chaudhuri2005survey}). Many results in the literature focus on estimating average/total quantities in complex (but not fully general) survey designs and do not consider interference. We consider the case of estimating the contrast of multiple average quantities under general experimental designs and our setup accommodates interference.

We contribute to the literature on estimation theory for the design-based analysis of experiments \citep{imbens2015causal}. \cite{freedman2008b,freedman2008randomization, freedman2008regression}, \cite{lin2013agnostic}, \cite{bloniarz2016lasso}, \cite{wu2018loop}, \cite{guo2021generalized}, \cite{cohen2020no} and \cite{lei2021regression} study estimation problems in two-arm completely randomized designs. A collection of papers studies estimation and inferential theory with various experimental designs, for example, \cite{middleton2015unbiased, lu2016randomization,li2019rerandomization, roth2021efficient, schochet2021design, negi2021revisiting,athey2022design} and \cite{gao2023causal}. \cite{hudgens2008toward}, \cite{aronow2017estimating}, \cite{hu2022average} and \cite{gao2023causal} study estimation theory in experiments with interference, and \cite{pollmann2020causal} studies spatial experiments. \cite{aronow2013class} considers unbiased difference-type estimation for complex experiments but the paper does not provide any guarantees of variance reduction. Our paper builds on the previous insights and considers the case of general experimental designs, parametric linear and nonlinear models for adjustments, and various strategies for estimating the adjusting models (QMLE, No-harm, Optimal and Optimal-I). 

This paper is also related to the literature on variance characterization and variance bound estimation in design-based settings, for example, \cite{robins1988confidence,mukerjee2018using, aronow2014sharp,pashley2021insights,de2020level,xu2022}. \cite{harshaw2021optimized} studies the problem of optimizing variance bounds. 

Our paper is also related to the literature that analyzes experiment data accounting for variation from both model-based and design-based uncertainties, for example, \cite{bugni2018inference}, \cite{bugni2019inference}, \cite{bai2021inference}, \cite{bai2022inference}, \cite{cytrynbaum2021designing} and \cite{bugni2022inference}. \cite{abadie2020sampling} provides inferential results for the linear regression model that allows for both design-based and sampling-based uncertainty.

The organization of the paper is as follows. Section \ref{Setup} includes model setup, notations and basic assumptions. Section \ref{Section:NetworkExperiments} gives a network experiment example. 
Section \ref{Section:Estimation} defines moment-type estimators and provides estimation and variance bound estimation results. Section \ref{Section:Nonlinear} includes results on various model-assisted estimation strategies (QMLE, No-harm, Optimal and Optimal-I). Section \ref{Section:Simulation} provides simulation results based an actual network experiment (\cite{cai2015social}).

\section{Setup and Notations}\label{Setup}
We consider a Neyman causal model \citep{splawa1990application,imbens2015causal}, where one conducts a randomized experiment with $k$ treatment arms on $n$ experimental units. Each unit $i\in\{1,...,n\}$ is associated with a $k$-vector of nonrandom \textit{potential outcomes}:
\begin{align}
\left(y_{i}(1), y_{i}(2), ..., y_{i}(k)\right)\in\mathbb{R}^k.
\end{align}
 Each unit $i$ is randomly assigned to one of the $k$ treatment arms. We denote the random vector of assignment indicators by 
 \begin{align}
     (\R_{1i}, \R_{2i},...,\R_{ki})\in\{0,1\}^k,
 \end{align}
 where $\R_{ai}=1$ means that the unit $i$ is assigned to the treatment arm $a$ and $\R_{ai}=0$ otherwise. For unit $i$, the observed outcome is generated according to 
 \begin{align}
     Y_i^{\tobs}=\sum_{a=1}^k \R_{ai}y_{i}(a).
 \end{align} One may also observe for each unit $i$ an additional $p$-dimensional row vector of pretreatment covariates $x_i=(x_{1i},x_{2i},...,x_{pi})\in\mathbb{R}^{p}$. In this paper, we assume that the dimension of the covariates does not change with the sample size. We stack the covariate vectors vertically to create a matrix $\mathbf{x}\in\mathbb{R}^{n\times p}$. The observed data for unit $i$ can then be represented as $(Y_i^{\tobs},\R_{1i}, \R_{2i},...,\R_{ki},x_i)\in\mathbb{R}^{1+k+p}$. 
Let $\pi_{ai}=\E[D_{ai}]$ denote the probability of assignment of unit $i$ to treatment arm $a$. We shall hereafter assume that $\pi_{ai}$ is positive for all treatment arm $a$ and unit $i$, unless stated otherwise.

The parameters of interest are contrasts (linear combination) between the group-specific means of the potential outcomes. For example, in a two-arm experiment, the parameter of interest could be the average treatment effect (ATE) between the treated group and control group and it is defined as $\frac{1}{n} \sum_i  \left(y_{i}(2)-y_{i}(1)\right)$. 
 
\subsection{Notation}\label{Section:Notation}
Let $y^1$, $y^2$, ..., $y^k$ represent column $n$-vectors of potential outcomes associated with each of the arms, with the $i$th element of each vector corresponding to the $i$th unit. Thus, $y^a=[y_{i}(a)]_{i=1}^n=\left(y_{1}(a),y_{2}(a),...,y_{n}(a)\right)'\in\mathbb{R}^n$.  We stack these vectors vertically to create a column vector $y$ with length $kn$: \begin{align}
y = \left(
     y^{1'},
     \hdots,
     y^{k'}
\right)'\in\mathbb{R}^{kn},
\end{align}
\indent Let $1_{\scriptscriptstyle n}$ be a column $n$-vector of ones. A $kn \times k$ \textit{intercept matrix} is defined as
{\small
\begin{align}\label{intercept}
\onesmat = & \left[
\begin{matrix}
1_{\scriptscriptstyle n} &  & & 
\\  & 1_{\scriptscriptstyle n} &   & 
\\  &  & \ddots &
\\  &  & & 1_{\scriptscriptstyle n} 
\end{matrix} \right]\in\mathbb{R}^{kn\times k}. 
\end{align}}
Entries left blank are equal to 0.\footnote{Formally, this matrix is defined as $\onesmat=[a_{st}]_{s=1,..,kn}^{t=1,...,k}$, where $a_{st}=1$ if  $(t-1)\leq \frac{s}{n}\leq t$ and 0 otherwise.  } A \textit{k}-vector of the average potential outcomes of the arms can then be written as $\mu_n=\frac{1}{n} \onesmat' y$. From here on, we denote the $k$-vector average potential outcomes as $\mu_n$ and estimators as $\hat{\nu}_n^{\textnormal{(type)}}\in\mathbb{R}^k$. The superscript denotes the type of estimator. For example, an inverse-probability weighted estimator for the average potential outcomes will be denoted as $\hat{\nu}_n^{\tht}$. \\
\indent Next, define an $n \times n$ diagonal matrix with $n$ assignment indicators for treatment arm 1 on the diagonal and 0 otherwise, 
\begin{align}
\R^1 =&
\diag(\{\R_{1i}\}_{i=1}^n)\hspace{2mm}\in\mathbb{R}^{n\times n},
\end{align}
and define $\R^2$, $\R^3$, $\hdots$, $\R^k$ analogously. Arrange these matrices to create a diagonal $kn \times kn$ matrix
{\small
\begin{align}
\R =&
\left[ \begin{matrix}
\R^{1} \\ & \R^{2} \\ & & \ddots \\& & &  \R^{k}
\end{matrix}\right] \hspace{2mm}\in\mathbb{R}^{kn\times kn}.
\end{align}}
The $kn\times kn$ diagonal matrix of assignment probabilities is written as $\bpi=\E[\R]$. \\
\indent For the purpose of covariate adjustments, we also define the $kn \times (k+p) $ matrix,
{\small
\begin{align}
\X = & \left[
\begin{matrix}
1_{\scriptscriptstyle n} & & & & \mathbf{x}
\\  & 1_{\scriptscriptstyle n} & & & \mathbf{x}
\\  &  & \ddots & & \vdots
\\  &  & & 1_{\scriptscriptstyle n} & \mathbf{x}
\end{matrix} \right]\in\mathbb{R}^{kn\times (k+p) },
\end{align}}
which augments the intercept matrix $\onesmat$ with covariates.

\indent Let $c\in\mathbb{R}^k$ denote an arbitrary column contrast vector such that the parameter of interest can be written as $\frac{1}{n}c'\onesmat' y$. For example, in a two-arm experiment, the ATE is defined by choosing $c=\left(-1 ,1\right)'$ and $\frac{1}{n}c'\onesmat' y=\frac{1}{n} \sum_i  \left(y_{i}(2)-y_{i}(1)\right)$. We write the parameter of interest associated with a contrast vector $c$ as $\mu_n^c= \frac{1}{n}c'\onesmat' y$.

\indent To conclude, in this notation, we say researchers observe the assignment $\R$, outcomes $\R y$,  and a matrix of $p$ pretreatment covariates $\mathbf{x}\in\mathbb{R}^{n\times p}$. In a randomized experiment, $\bpi$ is known,  or can be approximated to arbitrary precision by repeating the randomization procedure and collecting draws \citep{fattorini}. 

\indent We let $\I_{k}$ denote the identity matrix of dimension $k\times k$ and $\0_{k}$ a zero matrix of dimension $k\times k$. $\ones{k}$ denotes a column $k$-vector of 1s. We define $[k]=\{1,...,k\}$ and $[k_1,k_2]=\{k_1,...,k_2\}$, for arbitrary positive integers $k$, $k_1$ and $k_2$. A list of mathematical objects, operators, and quantities used in this paper is included in Appendix \ref{Mathematical Objects, Operators and Quantites}.
\subsection{Asymptotic Schemes}
All results in this paper are asymptotic. We consider a nested sequence of increasing populations, $\{U_n\}_{n}$, where the index $n$ indicates the size of the population under study. Each unit in the population is characterized by its fixed potential outcomes and pretreatment covariates. The potential outcomes and the pretreatment covariates are fixed and the population grows deterministically. $\{U_n\}_{n\geq 1}$ are nested: $U_1\subset U_2\subset ...\subset  U_n...$. Each finite population $U_n$ has an associated experimental design and a realized randomization. Although the populations form a nested sequence, the sequences of realized assignments do not. This asymptotic scheme is widely used in the literature \citep{isaki1982survey,aronow2014sharp,li2017general}. 

We work in the finite-population (design-based) framework \citep{imbens2015causal}, where the potential outcomes $y$ and pretreatment covariates $\mathbf{x}$ are considered fixed parameters, and the only source of randomness in our model is from the random treatment assignments $\{D_{ai}\}_{a\in[k],i\in[n]}$.
\\
\indent In general, the true parameter values are quantities that change with the sample size $n$. We will write (finite) population quantities with a subindex $n$. For example, the average potential outcomes will be denoted as $\mu_n=\frac{1}{n}\onesmat'y$. We call $\hat{\nu}_n$ a consistent estimator for $\nu_n$ if $\hat{\nu}_n-\nu_n=o_p(1)$, where the stochasticity is generated by the experimental design. With an abuse of language, we refer to $\nu_n$ as the probability limit of $\hat{\nu}_n$.

We state two assumptions for data moments, which are needed for the convergence of estimators. Recall that $k$ denotes the number of treatment arms and $p$ denotes  the number of pretreatment covariates.
\begin{assumption}[Bounded fourth moments]\label{A:BoundedFourthMoments}
For all $n$,
\begin{equation}
\frac{1}{n}\sum_{a=1}^k\sum_{i=1}^n y_{ai}^4<C_1, \hspace{5pt} \frac{1}{n}\sum_{s=1}^p\sum_{i=1}^n x_{si}^4<C_1, 
\end{equation}
where $C_1$ is a finite constant.
\end{assumption}
\indent For our analysis of weighted least squares (WLS) estimators below, we require the design matrix to be invertible for large $n$. 
\begin{assumption}[Invertibility]\label{A:Invertibility} there exists an integer $n_0>0$ such that for all $n>n_0$, $\lambda_{\min}\left(\frac{1}{n}\mathbf{x}'\mathbf{x} \right)\geq c_{\ref{A:Invertibility}}$,
where $\lambda_{\min}\left(\frac{1}{n}\mathbf{x}'\mathbf{x} \right)$ is the smallest eigenvalue of the matrix $\frac{1}{n}\mathbf{x}'\mathbf{x}$ and $c_{\ref{A:Invertibility}}$ is a positive constant.
\end{assumption}

\section{An example: network experiments}\label{Section:NetworkExperiments}
This section provides a concrete example to illustrate the setup described above. We consider the network experiments proposed in \cite{aronow2017estimating}. Components of the experimental design include:
\begin{itemize}
    \item \textbf{A finite population} $U_n$ with units indexed by $i\in [n]$. Each unit has a trait vector $\xi_i\in\Xi_n$ (i.e., network connections) and a pretreatment covariate vector $x_i\in\mathbb{R}^p$. Let $\Xi_n$ denote the set of traits.
    \item \textbf{An experimental design} that randomly selects units into $M$ treatment values. One realization of the assignment vector has the form $Z=(Z_1,...,Z_n)\in\{0,...,M-1\}^n$. The distribution of the random assignment vector $Z$, denoted as $P(Z)$, is known. Let $\Omega_n\subset\{0,...,M-1\}^n $ denote the set of possible random assignment vectors.
    \item \textbf{An exposure mapping} that maps the assignment treatment vectors and a unit-specific trait to an exposure value, $F_n:\Omega_n\times\Xi_n\to\Delta_n$, where $\Delta_n$ denotes the set of possible exposure values. This map is specified by the researcher depending on the research questions at hand. $\Delta_n$ is usually specified to be a finite set.\footnote{In this setup, it is possible that the exposure mappings are misspecified. See \cite{aronow2017estimating}, \cite{savje2021causal}, \cite{savje2021average}, and \cite{leung2022causal} for a discussion of estimation and inferential theories in this context. We will proceed as if the exposure mapping is correctly specified. Some theories on estimation and inference can also be found in  \cite{wang2020design} and \cite{gao2023causal}.
 } 
\end{itemize}
For one experiment, researchers randomly draw an assignment vector $Z$ and observe the scalar outcomes $\{Y_i(Z)\}_{i=1}^n$. Notice that up to this stage the outcome for unit $i$ has depended on the entire assignment vector. Consistent estimation under unrestricted interference is deemed virtually impossible \citep{savje2021average}. One strategy to alleviate the problem is to restrict the interference patterns using exposure mappings, under the assumption that the potential outcomes are correctly specified according to the exposure value:
\begin{assumption}\label{A:EffectiveExposure}
For $i=1,...,n$ and $Z,\tilde{Z}'\in\Omega_n$, $Y_i(Z)=Y_i(Z')$ if  $F_n(Z,\xi_i)=F_n(\tilde{Z},\xi_i)$. $\Delta_n$ is a finite set and does not change with $n$.
\end{assumption}
Assumption \ref{A:EffectiveExposure} implies that the exposures are "effective treatments" as defined in \cite{manski2013identification}. The assumption that $\Delta_n$ equals a finite set $\Delta$ is a typical assumption made in the literature. We note that this experimental setup is very general, and it can be generalized to other settings in which the exposure mappings are not necessarily mediated by a network.\\
\indent Enumerate the element in $\Delta$ as $\{1,...,k\}$. With Assumption \ref{A:EffectiveExposure}, one can write the potential outcomes associated with unit $i$ as $\left(y_{i}(1),...,y_{i}(k)\right)\in\mathbb{R}^k$. The assignment vector associated with unit $i$ can be written as $\left(\R_{1i},...,\R_{ki}\right)$. This maps the problem back to our framework. 
To make the assumptions concrete, we provide an example from Section 9 of \cite{aronow2017estimating}.
\begin{example}\label{Example:Network}
Consider a situation where we observe $n$ units connected in undirected networks. Each unit $i$ is associated with the trait $\theta_i$, which is the $i$th row vector of the unnormalized adjacency matrix. The treatment values are $\{0,1\}$, and the treatment assignment vector is denoted as $Z\in\{0,1\}^n$ with $Z_i$ denoting the treatment assignment of unit $i$. The exposure mapping is assumed to be:
\begin{equation}
F_n(Z,\theta_i)=
\begin{cases}
d_{11} (\text{Direct+Indirect Exposure}): \hspace{4mm} Z_iI(Z'\theta_i>0)=1\\
d_{10} (\text{Isolated Direct Exposure}): \hspace{4mm}Z_iI(Z'\theta_i=0)=1\\
d_{01} (\text{Indirect Exposure}): \hspace{4mm}(1-Z_i)I(Z'\theta_i>0)=1\\
d_{00} (\text{No Exposure}): \hspace{4mm}(1-Z_i)I(Z'\theta_i=0)=1\\
\end{cases}.    
\end{equation}
Units are assigned to treatment and control groups independently with probability $p$.
\end{example}
The theories examined in this paper can be viewed as offering strategies for analyzing and planning such (though not limited to) experiments.

\section{Estimation and Variance Estimation}\label{Section:Estimation}  

 In this section, we study the problem of estimating average potential outcomes, $\mu_n=\frac{1}{n}\onesmat'y\in\mathbb{R}^k$.
 
\indent Section \ref{Section:Consistency} introduces the class of moment-type estimators that allows the simultaneous analysis for many commonly-used estimators. It nests the class of linear estimators introduced in \cite{mukerjee2018using} and \cite{middleton2021unifying}, which includes the IPW, Hajek (HJ), Weighted Least Square (WLS), Completely Imputed (CI), Missing Imputed (MI) and Generalized Regression (GR) estimators. We establish the convergence rate of the moment-type estimators, and give conditions under which the WLS is a consistent estimator of the average potential outcomes.

In Section \ref{Section:Variances}, we study asymptotic variance characterization, bounding, and variance bound estimation for the moment-type estimators. We provide a simple and general asymptotic variance formula. We highlight that the asymptotic variances can be written in the bilinear form $\frac{1}{n}z'\dmat z$, where $z$ reflects the choice of estimators and the matrix $\dmat$, introduced in Definition \ref{FO}, reflects the experimental design. This forms the basis for variance bounding and variance bound estimation. We discuss variance bounding and provide a plug-in variance bound estimator. We briefly discuss inference in Section \ref{section:inference}.

Appendix \ref{network_experiments} formulates the lower-level conditions for the network experiments presented in Section \ref{Section:NetworkExperiments}.

In Section \ref{section:design}, we propose using the largest singular value $\sigma_{\max}\left(\Ome\right)$of the matrix $\Ome$ as an input for experimental designs. The value can be interpreted as the worst-case variance of the IPW estimator when the outcomes are subject to a moment condition.

We remind readers that $a$ is an index for treatment arms, $i$ is an index for experiment units, $k$ is the number of treatment arms, and $n$ is the number of experiment units.

\subsection{Moment-type estimator}\label{Section:Consistency}
We first define moment-type estimators. We will give several examples of moment-type estimators after the definition.
\begin{definition}\label{def:MomentEstimators} A moment-type estimator $\widehat{\nu}_n\in\mathbb{R}^k$ has the form
\begin{equation}
\widehat{\nu}_n=F(\widehat{m}_n^1,\widehat{m}_n^2,...,\widehat{m}_n^{S_1},m_n^{S_1+1},...,m_n^{S_1+S_2}),
\end{equation}
where
\begin{enumerate}[label=(\roman*)]
    \item $F: \mathbb{R}^{S_1+S_2}\to \mathbb{R}^k$ is a known mapping,
    \item $\widehat{m}_n^s=\frac{1}{n}1_{\scriptscriptstyle kn}'\bpiInv\R \phi^s$, $s\in[S_1]$ are scalar estimates of finite population moments,
    \item $m_n^s=\frac{1}{n}1_{\scriptscriptstyle kn}' \phi^s$, $s\in[S_1+1,S_1+S_2]$ are known finite population moments,\footnote{We allow some moments to be nonrandom to accommodate generalized regression estimators. }
    \item $\{\phi^s\}_{s\in [S_1+S_2]}\subset\mathbb{R}^{kn}$ are vectors of nonrandom variable, which may depend on, but are not limited to, potential outcomes, covariates, or treatment assignment probabilities. 
\end{enumerate}
 \end{definition}
 We associate a probability target for each moment-type estimator. 
\begin{definition}
The probability target $\nu_n$ of a moment-type estimator $\widehat{\nu}_n\in \mathbb{R}^k$ is defined as 
\begin{equation}
    \nu_n = F(m_n^1,m_n^2,...,m_n^{S_1},...,m_n^{S_1+1},...,m_n^{S_1+S_2} )\in \mathbb{R}^k,\footnote{We implicitly assume that the probability target is well-defined.}
\end{equation}
where $m_n^s=\E[\widehat{m}_n^s]$ with $s\in[S_1]$.
\end{definition}
 
In short, the probability limit of a moment-type estimator is obtained by replacing estimated moments with their finite population counterparts. For notational simplicity, we hereafter write $\widehat{m}_n=\allowdisplaybreaks(\widehat{m}_n^1,\widehat{m}_n^2,...,m_n^{S_1+S_2})$, $m_n=\allowdisplaybreaks(m_n^1,m_n^2,...,m_n^{S_1+S_2})$. We shall write $\widetilde{m}_n=\allowdisplaybreaks (\widetilde{m}_n^1,\widetilde{m}_n^2,...,\widetilde{m}_n^{S_1},m_n^{S_1+1},...,m_n^{S_1+S_2})$ for some arbitrary $\widetilde{m}_n^1, ...,\widetilde{m}_n^{S_1}$. 

The moment-type estimators nest many commonly-used estimators as special cases.
\begin{example}[Inverse-probability weighted (IPW) estimator]
\begin{equation}
     \widehat{\nu}^{\tht}_n =\frac{1}{n} \onesmat' \bpiInv \R y \in \mathbb{R}^k, 
\end{equation}
and its probability target is $\frac{1}{n}\onesmat' y $. 
\end{example}
\begin{example}[Weighted least square  (WLS) estimators]

	\begin{align}
	 \widehat{\nu}^{\twls}_n  = \begin{bmatrix}
 \I_k \vert \0_{k\times p}
    \end{bmatrix}\widehat{b}_n^{\twls}=\begin{bmatrix}
 \I_k \vert \0_{k\times p}
    \end{bmatrix}\left(\X' \Om \R \X \right)^{+}\X' \Om \R y \in \mathbb{R}^k,
	\end{align}
where $\Om\in\mathbb{R}^{kn\times kn}$ is a diagonal matrix with strictly positive entries and $+$ indicates the Moore-Penrose inverse. Its probability target is $\begin{bmatrix}
 \I_k \vert \0_{k\times p}
    \end{bmatrix}b^{\twls}_n$, where $b^{\twls}_n=(\X'\Om\bpi \X)^{+} \X' \Om\bpi y$ and $b^{\twls}_n$ is a minimizer of the criterion $\left(y-\X\beta\right)'\Om\pi\left(y-\X\beta\right)$.
\end{example}

\begin{example}[Generalized Regression (GR) estimators]\label{GenReg}
\begin{equation}
\widehat{\nu}^{\tgr}_n  = \frac{1}{n}\onesmat'\X\widehat{b}^{\twls}_n + \frac{1}{n}  \onesmat' \bpiInv \R\left(y-X\widehat{b}^{\twls}_n \right) \in \mathbb{R}^k,
\end{equation}
and its probability target is $\frac{1}{n}\onesmat' y $.
\end{example}
All estimators listed above are moment-type estimators. For example, the IPW estimator can be written as:
\begin{equation}
      \widehat{\nu}^{\tht}_n = \left(\frac{1}{n}\sum_{i=1}^n \frac{D_{1i}y_i(1)}{\pi_{1i}}, \frac{1}{n}\sum_{i=1}^n \frac{D_{2i}y_i(2)}{\pi_{2i}},...,\frac{1}{n}\sum_{i=1}^n \frac{D_{ki}y_i(k)}{\pi_{ki}}\right)',
\end{equation}
which are a mapping of unknown moments of potential outcomes.\\
The WLS estimator can be represented as
\begin{equation}
\widehat{\nu}^{\twls}_n=\left(\widehat{\beta}_1,...,\widehat{\beta}_k\right),
\end{equation}
where $\{\widehat{\beta}_{a}\}_{a\in[k]}$ are the realized estimators of the arm-specific coefficients from the weighted regression:
\begin{equation}
    Y_i\sim \sum_{i=1}^n \beta_{a}\R_{ai}+ X_i'\beta, \text{ weights=}\sum_{a=1}^k \R_{ai}\Om_{ai}.
    \end{equation}
The WLS estimator is a mapping of terms such as $\frac{1}{n}\sum_{i=1}^n \R_{ai}x_{ki}y_{ai}\Om_{ai}$ or $\frac{1}{n}\sum_{i=1}^n \R_{ai}x_{ki}x_{si}\Om_{ai}$ for some $k,s\in[p]$.

The GR estimator takes the doubly-robust influence function form and is commonly considered in the literature. Besides the estimated moments, the GR estimator also involves known moments of the form $\frac{1}{n}\sum_{i=1}^n x_{ki}$, $k\in[p]$.

Besides the given examples, the moment-type estimator also includes Hajek, completely-imputed (CI), and missing-imputed (MI) estimators. These estimators are routinely considered in the survey sampling literature, and we include their definitions in Appendix \ref{other_estimators}.\footnote{Missing imputed estimators and completely-imputed estimators are considered in \cite{isaki1982survey}. CI, MI, and GR estimators have a long history in the survey sampling literature (see, for example, \cite{brewer1979class}, \cite{wright1983finite},  \cite{sarndal1984cosmetic}, and \cite{chaudhuri2005survey}). Missing imputed estimators have been recently studied by \cite{guo2021generalized} in nonlinear adjustment problems.  }

\indent We now show convergence of the estimator $\widehat{\nu}_n$ to its probability target $\nu_n$. We note that $\nu_n$ is not necessarily the average potential outcomes, and we will specify conditions under which the WLS is a consistent estimator for the average potential outcomes.\footnote{IPW, GR and Hajek estimators have average potential outcomes as their probability target and hence are consistent estimators. Conditions under which the CI and MI estimators are consistent estimators of the average potential outcomes are given in Section \ref{other_estimators}. } To characterize the rate of convergence, we introduce the first-order design matrix. This matrix is first introduced in \cite{middleton2021unifying} and it is an important conceptual object encoding information about the experimental design. 
\begin{definition}\label{FO}
The first-order design matrix is the variance-covariance matrix of inverse probability-weighted treatment assignments, written as 
\begin{align}\label{dmat}
	\dmat=&\V \left(\ones{kn}'\bpiInv \R \right)\in \mathbb{R}^{kn\times kn }.
\end{align}
\end{definition}

We give an example of the first-order design matrix with two treatment arms $\left(a=1,2\right)$ and two units $\left(i=1,2\right)$:
{\footnotesize
\begin{equation}
    \Ome^{\textrm{example}}=\begin{bmatrix}
        \V\left(\frac{\R_{11}}{\pi_{11}}\right) &\cV\left(\frac{\R_{11}}{\pi_{11}},\frac{\R_{12}}{\pi_{12}}\right)  &\cV\left(\frac{\R_{11}}{\pi_{11}},\frac{\R_{21}}{\pi_{21}}\right) &    \cV\left(\frac{\R_{11}}{\pi_{11}},\frac{\R_{22}}{\pi_{22}}\right) \\
     \cV\left(\frac{\R_{12}}{\pi_{12}},\frac{\R_{11}}{\pi_{11}}\right) &    \V\left(\frac{\R_{12}}{\pi_{12}}\right)  &  \cV\left(\frac{\R_{12}}{\pi_{12}},\frac{\R_{21}}{\pi_{21}}\right)& \cV\left(\frac{\R_{12}}{\pi_{12}},\frac{\R_{22}}{\pi_{22}}\right)  \\
     \cV\left(\frac{\R_{21}}{\pi_{21}},\frac{\R_{11}}{\pi_{11}}\right) & \cV\left(\frac{\R_{21}}{\pi_{21}},\frac{\R_{12}}{\pi_{12}}\right) &   \V\left(\frac{\R_{21}}{\pi_{21}}\right) & \cV\left(\frac{\R_{21}}{\pi_{21}},\frac{\R_{22}}{\pi_{22}}\right)   \\
    \cV\left(\frac{\R_{22}}{\pi_{22}},\frac{\R_{11}}{\pi_{11}}\right) & \cV\left(\frac{\R_{22}}{\pi_{22}},\frac{\R_{12}}{\pi_{12}}\right) &   \cV\left(\frac{\R_{22}}{\pi_{22}},\frac{\R_{21}}{\pi_{21}}\right) &   \V\left(\frac{\R_{22}}{\pi_{22}}\right)
    \end{bmatrix},
\end{equation}}
where the subscripts follow the convention $\R_{ai}$, where $a$ indexes treatment arms and $i$ indexes units, and $\pi_{ai} = \E[\R_{ai}]$. It should be clear that the first-order design matrix depends on both first-order and second-order assignment probabilities.

We use $\sigma_{\max}(A)$ to denote the largest singular value of a matrix $A$. We use $\|v\|_2$ to denote the Euclidean ($l_2$) norm when $v$ is a vector and the Frobenius norm when $v$ is a matrix. We show that $\sigma_{\max}(A)$ upper-bounds the statistical convergence rate for the moment-type estimators under weak regularity conditions.

\begin{assumption}\label{A:MomentEstimator2}
Let $\widehat{\nu}_n$ be a moment-type estimator and $F$ be its associated mapping. The following conditions hold:    
\begin{enumerate}[label=(\roman*)] 
    \item $F$ is uniformly locally Lipschitz: there exist positive scalars $n_0$, $C_{\ref{A:MomentEstimator2},1}$ and $\epsilon$ such that $$\|F(\widetilde{m}_n)-F(m_n)\|_2\leq C_{\ref{A:MomentEstimator2},1} \|\widetilde{m}_n-m_n\|_2$$ for all  $\widetilde{m}_n$ such that $\|\widetilde{m}_n-m_n\|_2<\epsilon$ and $n\geq n_0$.
    \item  There exists a positive constant $C_{\ref{A:MomentEstimator2},2}$ such that  $\frac{1}{n}\|\phi^s\|_2^2 \leq C_{\ref{A:MomentEstimator2},2}$, $s\in [S_1+S_2]$ uniformly for all $n$.
\end{enumerate} 
\end{assumption}
\begin{remark}
   Assumption \ref{A:MomentEstimator2}-(i) imposes a weak condition on the local continuity of the mapping $F$. It rules out the case where the mapping $F$ becomes more singular at $m_n$ as $n$ increases. This may happen, for example, if the design matrix for a WLS estimator $\frac{1}{n}\X'\Om\bpi\X$, has an eigenvalue that approaches 0 as $n$ increases. Assumption \ref{A:MomentEstimator2}-(ii) is a typical data-moment condition.
\end{remark}

\begin{theorem}\label{Thm:Consistency}
Let $\widehat{\nu}_n$ be an estimator that satisfies Assumption \ref{A:MomentEstimator2}. If $\sigma_{\max}\left(\dmat\right)/n=o(1)$, then,
\begin{equation}
    \widehat{\nu}_n-\nu_n = O_p\left(\sqrt{\frac{\sigma_{\max}\left(\dmat\right)}{n}}\right).
\end{equation}
\end{theorem}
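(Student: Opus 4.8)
The plan is to bound $\hat{\mu}_n - \mu_n$ by first obtaining a rate for the random moment coordinates $\hat m_n^1,\dots,\hat m_n^{l_1}$ and then pushing this rate through $F$ using its local Lipschitz property. Write $\mu_n = F(m_n)$ for the probability limit (the map $F$ evaluated at the population moments; this is what the paper's abuse of language denotes $\mu_n$). Because the last $l_2$ coordinates of $\hat m_n$ and $m_n$ coincide, being the nonrandom moments, the discrepancy $F(\hat m_n)-F(m_n)$ is driven entirely by the first $l_1$ coordinates.

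The heart of the argument is a mean--variance control of each random moment. Since $\E[\R]=\bpi$ and $\bpiInv\bpi = \I_{kn}$, for $s\le l_1$ we get unbiasedness, $\E[\hat m_n^s] = \frac1n 1_{\scriptscriptstyle kn}'\bpiInv\E[\R]\phi^s = \frac1n 1_{\scriptscriptstyle kn}'\phi^s = m_n^s$. For the variance, note that $\hat m_n^s = \frac1n w'\phi^s$ where $w$ is the column form of the random vector $1_{\scriptscriptstyle kn}'\bpiInv\R$, whose covariance matrix is exactly $\dmat$ by Definition \ref{FO}; hence $\V(\hat m_n^s) = \frac1{n^2}(\phi^s)'\dmat\,\phi^s$. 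As $\dmat$ is positive semidefinite, being a covariance matrix, the Rayleigh--Ritz bound gives $(\phi^s)'\dmat\,\phi^s \le |||\dmat|||_2\,\|\phi^s\|_2^2$, and combining with $\frac1n\|\phi^s\|_2^2\le C$ from Assumption \ref{A:MomentEstimators} yields $\V(\hat m_n^s)\le C\,|||\dmat|||_2/n$. Chebyshev's inequality then gives $\hat m_n^s - m_n^s = O_p\big(\sqrt{|||\dmat|||_2/n}\big)$ for each $s$; since $l_1$ is fixed, $\sqrt{\sum_{i=1}^{l_1}(\hat m_n^i-m_n^i)^2} = O_p\big(\sqrt{|||\dmat|||_2/n}\big)$ as well.

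Finally I would transfer the rate through $F$. The hypothesis $|||\dmat|||_2/n = o(1)$ makes $\sqrt{|||\dmat|||_2/n}\to 0$, so $\sum_{i=1}^{l_1}(\hat m_n^i-m_n^i)^2 = o_p(1)$ and the event $A_n = \{\sum_{i=1}^{l_1}(\hat m_n^i-m_n^i)^2 < \epsilon\}$, with $\epsilon,N$ from Assumption \ref{A:MomentEstimators}, satisfies $P(A_n)\to 1$. On $A_n$, for $n\ge N$, the local Lipschitz bound applies and gives $\|\hat\mu_n-\mu_n\|_2 \le C\sqrt{\sum_{i=1}^{l_1}(\hat m_n^i-m_n^i)^2}$.

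I expect the main obstacle to be exactly this locality: the Lipschitz inequality is only guaranteed inside the $\epsilon$-ball around $m_n$, so it cannot be applied unconditionally, and one must first certify that the moment estimates land in that ball with probability approaching one. I would close the gap with an event-splitting argument: for any $\delta>0$ choose $M$ so that Chebyshev gives $P(\sqrt{\sum(\hat m_n^i-m_n^i)^2} > M\sqrt{|||\dmat|||_2/n})<\delta$ for large $n$; on the intersection of that complement with $A_n$ one has $\|\hat\mu_n-\mu_n\|_2 \le CM\sqrt{|||\dmat|||_2/n}$, while $P(A_n^c)\to 0$ is absorbed into the exceptional probability, which yields the stated $O_p$ rate. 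A small bookkeeping point is verifying that the covariance of the row vector $1_{\scriptscriptstyle kn}'\bpiInv\R$ in Definition \ref{FO} is the same matrix $\dmat$ entering the Rayleigh step, which legitimizes the variance computation.
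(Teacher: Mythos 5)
Your proposal is correct and follows essentially the same route as the paper's proof: unbiasedness of each $\hat m_n^s$, the variance bound $\V(\hat m_n^s)=\frac{1}{n^2}(\phi^s)'\dmat\,\phi^s\le \frac{1}{n}\|\phi^s\|_2^2\cdot\frac{1}{n}|||\dmat|||_2$ via the Rayleigh quotient, Chebyshev to get the $O_p\bigl(\sqrt{|||\dmat|||_2/n}\bigr)$ rate for the moments, and then the uniform local Lipschitz property of $F$ to transfer the rate to $\hat\mu_n-\mu_n$. Your explicit event-splitting step to justify applying the Lipschitz bound only inside the $\epsilon$-ball is a careful spelling-out of a detail the paper's proof leaves implicit, but it is the same argument, not a different one.
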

Theorem \ref{Thm:Consistency} implies the following corollary for the IPW, WLS, and GR estimators.
\begin{corollary}\label{C:Consistency}
Under Assumptions \ref{A:BoundedFourthMoments} and \ref{A:Invertibility}, and if $\sigma_{\max}\left(\dmat\right)=O(1)$, the IPW estimator converges to its probability limits at a $\sqrt{n}$-rate: $\widehat{\nu}^{\tht}_n-\nu^{\tht}_n=O_p\left(n^{-\frac{1}{2}}\right)$.

If, in addition, there exist positive $c$ and $C$ such that $0<c<\lambda_{\min}(\Om\bpi)<\lambda_{\max}(\Om\bpi)<C$ for all $n$, then the WLS and GR estimators converge to their probability limits at a $\sqrt{n}$-rate: $\widehat{\nu}^{\twls}_n-\nu^{\twls}_n=O_p\left(n^{-\frac{1}{2}}\right)$ and $\widehat{\nu}^{\tgr}_n-\nu^{\tgr}_n=O_p\left(n^{-\frac{1}{2}}\right)$.
\end{corollary}
\begin{remark}
Corollary \ref{C:Consistency} requires the largest singular value of the matrix $\dmat$ to be uniformly bounded. This condition can be shown to be satisfied for complete randomizations with treatment probability strictly bounded between 0 and 1. We check this condition in Section \ref{CR}.
It is also satisfied, for example, for stratified randomization, and cluster randomizations with bounded cluster sizes.\footnote{To be precise, we refer to stratified randomization as the randomization scheme with a fixed number of strata, diverging numbers of units in each stratum and non-vanshing treatment probabilities. We refer to the cluster randomization as the randomization scheme with increasing numbers of clusters, bounded maximum numbers of units in the clusters, and nonvanshing treatment probabilities. } There are settings where $\sigma_{\max}\left(\dmat\right)$ increases with sample size $n$. Examples of such cases are 1) cluster randomizations with increasing cluster sizes; 2) network experiments when the maximum degree of the network grows with the sample size. In these cases, if $\sigma_{\max}\left(\dmat\right)/n=o(1)$, the estimators still converge to its probability limit, although at a slower rate.   Corollary \ref{C:Consistency} also requires that the diagonal entries of $\Om\bpi$ are uniformly bounded above and below in $n$ for the WLS and GR estimators. This is satisfied, for example, if $\Om=\bpiInv$. 
\end{remark}
\begin{remark}
The condition on $\sigma_{\max}\left(\dmat\right)$ may be difficult to verify directly. Given that $\dmat$ is symmetric, one can provide an upper bound for this quantity using other matrix norms, such as the matrix column ($l_1$-induced) norm, Theorem 5.6.9 in \cite{horn2012matrix}).\footnote{Because $\dmat$ is symmetric, the matrix row ($l_\infty$-induced) norm and the maximum matrix column norm agree.} We use this property to check the condition for a two-arm completely randomized experiment in Section \ref{CR}.    
\end{remark}
We conclude this section with a lemma establishing the consistency of the WLS estimators for the average potential outcomes. This result implies that the inverse-probability weighted WLS estimator with centered covariates is consistent for estimating the average potential outcomes in general. Results for CI and MI estimators are included in Section \ref{other_estimators}.

\begin{lemma}\label{lemma:WLS}
If $\frac{1}{n}\sum_{i=1}^n x_{si}=0$ for all $s\in[p]$ and columns of the matrix $\onesmat \in \mathbb{R}^{kn\times k}$ are in the column space of the matrix $\bpi\Om\X$, then $\nu_n^{\twls}=\frac{1}{n}\onesmat'y$.
In particular, $\nu_n^{\twls}=\frac{1}{n}\onesmat'y$ if $\frac{1}{n}\sum_{i=1}^n x_{si}=0$ for all $s\in[p]$ and $\Om=\bpiInv$.\footnote{The current result is a special case of results in an unpublished work \cite{middleton2021unifying3}.}
\end{lemma}

\subsection{Asymptotic Variances: Characterization, bounding and estimation}\label{Section:Variances}

To characterize the asymptotic variance, we strengthen Assumption \ref{A:MomentEstimator2} to use a linearization argument. In addition to using the notation $m_n$ and $\widehat{m}_n$ from Definition \ref{def:MomentEstimators}, we further define the column vectors $\widetilde{m}_{n,r}=(\widetilde{m}_n^1,...,\widetilde{m}_n^{l_1})'\in\mathbb{R}^{S_1}$ and $m_{n,r}=(m_n^1,...,m_n^{l_1})'\in\mathbb{R}^{S_1}$, as well as $\widetilde{m}_{n}=(\widetilde{m}_n^1,...,\widetilde{m}_n^{l_1},m_n^{S_1+1},...,m_n^{S_1+s_2})'\in\mathbb{R}^{S_1+S_2}$\footnote{The subscript r stands for "random moments".}
\begin{assumption}\label{A:MomentEstimators3} In addition to Assumption \ref{A:MomentEstimator2}, $F$ is uniformly locally linearly approximable: there exist positive scalars $n_0$, $C_{\ref{A:MomentEstimators3}}$, and $\epsilon$, and a linear map $dF_{m_n}\in\mathbb{R}^{k\times S_1}$ such that
\begin{equation}
 \|F(\widetilde{m}_n)-F(m_n)-dF_{m_n}\left(\widetilde{m}_{n,r}-m_{n,r}\right)\|_2\leq C_{\ref{A:MomentEstimators3}} \sum_{i=1}^{S_1}(\widetilde{m}_n^i-m_n^i)^2  
\end{equation}
for all $\widetilde{m}_n$ such that $\sum_{i=1}^{S_1}(\widetilde{m}_n^i-m_n^i)^2<\epsilon$ and for all $n\geq n_0$.
\end{assumption}
For a moment-type estimator $\widehat{\nu}_n=F(\widehat{m}_n)$, we denote the linearized version of $\widehat{\nu}_n$ as $\widehat{\nu}_n^\tL=F(m_n)+dF_{m_n}(\widehat{m}_{n,r}-m_{n,r})$, where $\widehat{m}_{n,r}=(\widehat{m}_n^1,...,\widehat{m}_n^{S_1})'\in\mathbb{R}^{S_1}$. Let $\diag()$ be the operator that maps a length-n vector to an n-by-n diagonal matrix.
\begin{theorem}\label{Thm:FirstOrderExpansion}
Define $\dmat$ as in (\ref{dmat}). Let $\widehat{\nu}_n$ be an estimator that satisfies Assumptions \ref{A:MomentEstimator2} and \ref{A:MomentEstimators3}, and $\sigma_{\max}\left(\dmat\right)/n=o(1)$. Then,
\begin{equation}\label{convergence_rate}
    \widehat{\nu}_n-\widehat{\nu}_n^\tL = o_p\left(\sqrt{\frac{\sigma_{\max}\left(\dmat\right)}{n}}\right)=o_p(1).
\end{equation}
The variance of $\widehat{\nu}_n^L$ can be written as
		\begin{equation}\label{var.general}
		\V(\widehat{\nu}_n^\tL)
		 = \frac{1}{n^2} z'\dmat z \in \mathbb{R}^{k\times k},
		\end{equation}
where,
\begin{equation}
z'=\sum_{s=1}^{S_1}dF^s_{m_n}1_{\scriptscriptstyle kn}'\diag(\phi^s)\in \mathbb{R}^{k\times kn} ,  
\end{equation}
and $dF^s_{m_n}$ is the $s$th column of the linear map $dF_{m_n}$ defined in Assumption \ref{A:MomentEstimators3} and $\phi^s$, $s\in[S_1]$, are the vectors of constants defined in Assumption \ref{def:MomentEstimators}.\footnote{Note that $dF_{m_n}^s\in\mathbb{R}^{k\times 1}$ and $1_{\scriptscriptstyle kn}'\diag(\phi^s)\in\mathbb{R}^{1\times kn}$.} Moreover, if $n^{-1}\|z\|^2_2=O(1)$ and $\sigma_{\max}\left(\dmat\right)=O(1)$, $\V(\widehat{\mu}_n^\tL)=O(\sigma_{\max}\left(\dmat\right)/n)=O(n^{-1})$.

\end{theorem}

This theorem is the key result of the section. It is shown that many linearized estimators have a variance that can be written as a "quadratic" form.\footnote{A typical definition of quadratic forms is the expression $v'\mathbf{A}v$, where $v$ is a column vector and $\mathbf{A}$ a square matrix. In the theorem below, our $v$ is a matrix instead of a column vector, so we are in a sense mis-using the name. If we are interested in a scalar parameter specified by a contrast vector $c$, then the variance $\V(\sum_{a=1}^k c_a\widehat{\mu}_a^{\tL})$ is a quadratic form in the standard sense. } Notice $\dmat$ depends only on the information of experimental designs, which is available to researchers. The matrix $z$ can depend on potential outcomes, the experimental design, and/or covariates, and it needs to be recovered from the data. 

\cite{middleton2021unifying} formally shows that the IPW estimator has an asymptotic variance that can be written as a bilinear form. Our results generalize \cite{middleton2021unifying}'s intuition regarding asymptotic variances with a rigorous proof. With an abuse of language, we refer to $n\V(\widehat{\nu}_n^{\tL})=\frac{1}{n}z'\dmat z$ as the \textit{asymptotic variance} of the estimator $\widehat{\nu}_n$. We specialize Theorem \ref{Thm:FirstOrderExpansion} for estimators introduced in Section \ref{Section:Estimation}.
\begin{corollary}\label{C:FirstOrderExpansion}
Under Assumptions \ref{A:BoundedFourthMoments} and \ref{A:Invertibility}, and if $\sigma_{\max}\left(\Omega\right)=O(1)$, the IPW estimator is $\sqrt{n}$-equivalent to its linearizations:
\begin{equation}
       \sqrt{n}(\widehat{\nu}^{\tht}_n-\widehat{\nu}_n^{\tht,\tL})=o_p(1), 
\end{equation}
with
\begin{align}
     & \widehat{\nu}_n^{\tht,\tL}= \frac{1}{n}\onesmat' y +  \frac{1}{n}\onesmat'\bpiInv(\R-\bpi) y, z^{\tht}=\diag(y)\onesmat.
\end{align}
If, in addition, there exist positive $c$ and $C$ such that $0<c<\lambda_{\min}(\Om\bpi)<\lambda_{\max}(\Om\bpi)<C$ for all $n$,  the WLS and GR estimators are $\sqrt{n}$-equivalent to their linearizations:
\begin{equation}
    \sqrt{n}(\widehat{\nu}^{\twls}_n-\widehat{\nu}_n^{\twls,\tL})=o_p(1),     \sqrt{n}(\widehat{\nu}^{\tgr}_n-\widehat{\nu}_n^{\tgr,\tL})=o_p(1),
\end{equation}
with
\begin{align}
        & \widehat{\nu}_n^{\twls,\tL}= \begin{bmatrix}
 \I_k \vert \0_{k\times p}
    \end{bmatrix}b^{\twls}_n +  \begin{bmatrix}
 \I_k \vert \0_{k\times p}
    \end{bmatrix}(\X'\Om\bpi \X)^{-1} \X'\Om (\R-\bpi) (y-\X b^{\twls}_n),\\
    & z^{\twls}=\diag(y-\X b^{\twls}_n)\bpi\Om\X(\frac{1}{n}\X'\Om\bpi \X)^{-1},\\
    & \widehat{\nu}_n^{\tgr,\tL}=\frac{1}{n}\onesmat' y + \frac{1}{n}\onesmat' \bpiInv(\R-\bpi)(y-\X' b^{\twls}_n ),\\
    & z^{\tgr}=\diag(y-Xb^{\twls}_n)\onesmat.
\end{align}
\end{corollary}
Now we turn to the subject of variance (bound) estimation. In the design-based framework, the true asymptotic variance is not identified nor consistently estimable. One can read off the lack-of-identification problem from entries in the first-order design matrix $\dmat$: some entries have the value $-1$, and this happens if $y_{ai}$ and $y_{bj}$ can never be simultaneously observed across all realized assignments and we have $\E[\left(\R_{ai}\R_{bj}\right)/\left(\pi_{ai}\pi_{bj}\right)]-\E[\R_{ai}/\pi_{ai}]\E[\R_{bj}\pi_{bj}]=0-1=-1$. To construct a variance bound, one needs to find a variance bound matrix $\widetilde{\dmat}$ that dominates $\dmat$ in the positive semidefinite sense. In order for $\widetilde{\dmat}$ to be identified, $\widetilde{\dmat}$ must take the value $0$ at entries that are $-1$ in $\dmat$. The concepts are formalized in \cite{mukerjee2018using} and \cite{middleton2021unifying}.

\begin{definition}[Identified variance bound matrix]\label{def.identified.bound}
	$\widetilde{\dmat}$ be is a identified variance bound matrix for $\dmat$, if 
	\begin{align}
	\bfI (\dmat=-1) \leq  \bfI(\widetilde{\dmat}= 0).
	\end{align}
	where $\leq$ denotes the pointwise inequality,
    and $\widetilde{ \dmat }-\dmat$ is positive semidefinite. $\bfI\left(\dmat=-1\right)$ is a $kn \times kn$ matrix of ones and zeros indicating the location of $-1$'s in $\dmat$, and $\bfI (\widetilde{ \dmat } =0 )$ is a $kn \times kn$ matrix of ones and zeros indicating the location of 0's in $\widetilde{ \dmat }$.
\end{definition}
Entries in $\bfI\left(\dmat=-1\right)$ are indications that the associated terms in the variance quadratics are impossible to observe. $\bfI (\dmat=-1) \leq  \bfI(\widetilde{\dmat}= 0)$ are indications that the associated terms are not used in the variance bound estimation and thus the variance bound (of an IPW estimator) is identified and can be estimated. We note that it is not always necessary to bound the entire matrix $\dmat$. Depending on the parameter of interest, one may only need to bound a principal submatrix of $\dmat$. This happens, for example, if the researcher is only interested in comparing two arms of a multi-arm experiment. From now on, for an estimator $\widehat{\nu}_n$ with an asymptotic variance $\frac{1}{n}z'\dmat z$, we refer to the estimator's asymptotic variance bound as $\frac{1}{n}z'\widetilde{\dmat} z$.\footnote{
To demonstrate how this definition maps to a typical case,  consider a two-arm completely randomized experiment with $n_t$ units in the treatment group and $n_c$ units in the control group, with $n=n_t+n_c$. Define the rescaled demeaning matrix $\A_n=\frac{n}{n-1}\left(\I_{n}-\frac{1}{n}\ones{n}\ones{n}'\right)\in\mathbb{R}^{n\times n}$. The first-order design matrix for the design is 
\begin{equation}
    \dmat=\begin{bmatrix}
    \frac{n_c}{n_t} \A_n & -\A_n\\
  -\A_n &  \frac{n_t}{n_c} \A_n'
\end{bmatrix}\in\mathbb{R}^{2n\times 2n}.
\end{equation}
The standard Neyman bound (e.g. see \cite{imbens2015causal}) matrix can be written as 
\begin{equation}\label{NeymanBound}
\widetilde{\dmat}^{\tgn}=\dmat+\begin{bmatrix}
\A_n&  \A_n\\
\A_n & \A_n
\end{bmatrix}=\begin{bmatrix}
  \frac{n}{n_t} \A_n&  0\\
0&   \frac{n}{n_c} \A_n
\end{bmatrix}\in\mathbb{R}^{2n\times 2n}.
\end{equation}
 Note that no entries in $\widetilde{\dmat}^{\tgn}$ take on the value $-1$ and the added matrix is a positive semidefinite matrix. Thus, the Neyman bound matrix is an identified variance-bound matrix by the definition.} 

For our purpose, we need a variance-bound matrix that can suit general experimental designs. The Aronow-Samii variance bound (\cite{aronow2017estimating}) is a general variance bound and applicable to arbitrary designs. We refer readers to \cite{harshaw2021optimized} for a general variance-bounding technique.
\begin{definition}[Aronow-Samii variance bound]
	The Aronow-Samii variance bound uses the variance bound matrix
	\begin{align*}
	\widetilde{\dmat}^\tas=\dmat +\bfI\left(\dmat=-1\right)+ \diag (
		\bfI\left(\dmat=-1\right) \ones{kn}).
	\end{align*}
\end{definition}
With an identified variance-bound matrix $\widetilde{\dmat}$ (not necessarily the AS bound), we turn to variance bound estimation. We provide regularity conditions for consistent plug-in variance bound estimation. We first define the second-order design tensor.\footnote{The tensor is first introduced in \cite{middleton2021unifying}.}
 The operator norm of this object, which we define below, determines the rate of convergence for the variance-bound estimator. We use $\otimes$ to denote the \textit{tensor product} of two matrices, which results in an fourth-order tensor. For any two matrices $A=[a_{ij}]\in\mathbb{R}^{n_1\times n_2}$ and $B=[b_{ij}]\in\mathbb{R}^{n_3\times n_4}$, we denote
\begin{equation}
    A\otimes B = [c_{ijkl}]=[a_{ij}b_{kl}]\in \mathbb{R}^{n_1\times n_2\times n_3\times n_4}.
\end{equation}
\begin{definition}\label{SO}
  The \textit{second-order design tensor} is a fourth-order tensor $\s\in\mathbb{R}^{kn \times kn \times kn \times kn}$ of variances and covariances of inverse probability weighted pairwise joint inclusion indicators, written as
  \begin{align}\label{Smat}
      \s =\Big(  \E\left[\left(\R 1_{\scriptscriptstyle kn}1_{\scriptscriptstyle kn}'\R\right) \otimes \left( \R 1_{\scriptscriptstyle kn}1_{\scriptscriptstyle kn}'\R\right) \right]  -\matp \otimes \matp \Big) / \left( \matp \otimes \matp \right),
  \end{align}
  where $\matp=\E\left[\R 1_{\scriptscriptstyle kn}1_{\scriptscriptstyle kn}'\R\right]$ is a matrix with inclusion probabilities on the diagonal and pairwise joint inclusion probabilities off the diagonal, $\otimes$ is the tensor product, and $/$ is elementwise division with division by zero defined to be zero.
\end{definition}
Next, define an inverse probability weighted version of the variance bound matrix, $\widetilde{ \dmat }$, as
\begin{align} 
\dtildep  = \widetilde{ \dmat } / \matp,
\end{align}
where $\matp$ is defined in Definition \ref{SO}  and $/$ denotes elementwise division with division by zero defined to be zero. With the matrix, an unbiased estimator of the variance bound $\widetilde{\V}\left(\widehat{\nu}_n^{\tL}\right) = \frac{1}{n^2}z'\widetilde{\dmat}z$ can be written as
\begin{align}
\widehat{\widetilde{\V}}\left(\widehat{\mu}^\tL\right) =  \frac{1}{n^2}{z}' \R \dtildep \R z,
\end{align}
had $z$ is known. Since $z$ generally involves unknown quantities, an appeal to the plug-in principle suggests the use of
\begin{align}\label{var_lin_est}
\widehat{\widetilde{\V}}\left(\widehat{\nu}_n^{\tL}\right) = \frac{1}{n^2}\widehat{z}'  \dtildep \widehat{z}
\end{align} 
where $\widehat{z}$ has the same form as $z$ but with unknown quantities replaced by their estimators.\footnote{ \cite{middleton2021unifying} suggests the use of plug-in estimators but does not offer a formal justification.}\\
\indent Specializing to the IPW, WLS and GR estimators, the variance bound estimators have the form:
\begin{align}
&\widehat{\widetilde{\V}}\left(\widehat{\nu}_n^{\tht,\tL}\right) =  \frac{1}{n^2}\widehat{z}^{\tht'}\widetilde{\dmat}\widehat{z}^{\tht}, \text{with  }\widehat{z}^{\tht}= \R\diag(y)\onesmat\\
&\widehat{\widetilde{\V}}\left(\widehat{\nu}_n^{\twls,\tL}\right) =  \frac{1}{n^2}\widehat{z}^{\twls'}\widetilde{\dmat}\widehat{z}^{\twls}, \text{with  }\widehat{z}^{\twls}=\R\diag\left(\widehat{\epsilon}\right)\bpi\Om\X(\frac{1}{n}\X'\Om\bpi \X)^{-1},\\
& \widehat{\widetilde{\V}}\left(\widehat{\nu}_n^{\tgr,\tL}\right) =  \frac{1}{n^2}\widehat{z}^{\tgr'}\widetilde{\dmat}\widehat{z}^{\tgr}, \text{with  }\widehat{z}^{\tgr}=\R\diag\left(\widehat{\epsilon}\right)\onesmat,
\end{align}
where $\R\widehat{\epsilon}=\R\left(y-X\widehat{b}^{\twls}\right)$ and we note that $\R\diag\left(\widehat{\epsilon}\right)$ and $\R\diag\left(y\right)$ do not involve unobserved quantities.

\indent The following theorem considers the problem of consistent plug-in variance bound estimation. We introduce a few tensor notations used below.\footnote{We only introduce the minimally necessary notation here. For complete notation, refer to Appendix \ref{Mathematical Objects, Operators and Quantites}.} A real fourth order tensor $\mathbf{A}= \left ({a_{ \scriptscriptstyle i_{\scriptscriptstyle 1}...i_{\scriptscriptstyle 4}}}\right)\in\mathbb{R}^{n_1\times ...\times n_4} $ is a multi-array of entries, where $i_j=1,...,n_j$ for $j=1,...,4$. When $n=n_1=...=n_4$, $\mathbf{A}$ is called a fourth-order $n$-dimensional tensor. For a fourth-order n-dimensional tensor $\mathbf{A}$, we use the symbol $\sigma_{\max}(\mathbf{A})$ to denote the optimal value of the following optimization problem:
\begin{align}
          &\max_{w,x,y,z\in\mathbb{R}^n}\mathbf{A}(w,x,y,z)=\sum_{i=1}^n \sum_{j=1}^n\sum_{k=1}^n\sum_{l=1}^na_{ijkl}w_{i}x_{j}y_{k}z_{l}\\ &\text{   subject to  }    \sum_{i=1}^n w_i^4= 1, \sum_{i=1}^n x_i^4= 1, \sum_{i=1}^n y_i^4= 1 \text{ and }\sum_{i=1}^n z_i^4= 1.
\end{align}

This quantity is defined in \cite{lim2005singular} as a generalization of matrix singular values to tensors. Note that the arguments are constrained to be in the $l_4$ ball instead of the $l_2$ ball in $\mathbb{R}^n$. 

Recall that $\s$ is defined in Definition \ref{SO}. Let $\|\cdot\|_2$ denote the Frobenius norm if applied to a matrix and the $l_2$ vector norm if applied to a vector. Define $\circ$ to be the entrywise multiplication of two tensors. The following theorem provides the convergence rate for the plug-in variance bound estimator.
\begin{theorem}\label{Thm:ConsistentVariance}
Consider the estimator $\widehat{\widetilde{\V}}\left(\widehat{\mu}_n^\tL\right) = n^{-2}\widehat{z}'  \dtildep  \widehat{z} $ for the quantity $\widetilde{\V}\left(\widehat{\mu}_n^\tL\right) = n^{-2} z'\widetilde{\dmat}z$. If $n^{-1}\|\widehat{z}-\R z\|^2_2=O_p\left(\delta_n^2\right)$ and $n^{-1}\|z\|^2_2=O\left(1\right)$, then

\begin{equation}
    \widehat{\widetilde{\V}}(\widehat{\mu}_n^\tL) - \widetilde{\V}(\widehat{\mu}_n^\tL)=O_p\left(\max\left\{\sqrt{\frac{1}{n^3}\sigma_{\max}\left(\left(\widetilde{\dmat}\otimes \widetilde{\dmat}\right)\circ \s\right)}, \frac{\delta_n}{n}\sigma_{\max}\left(\dtildep\right)\right\}\right).
\end{equation}

\end{theorem}
We specialize the theorem for the IPW, WLS and GR estimators. 
\begin{corollary}\label{C:ConsistentVariance}
Under Assumptions \ref{A:BoundedFourthMoments} and \ref{A:Invertibility}, if $\sigma_{\max}\left(\dmat\right)=O(1)$, $\sigma_{\max}((\widetilde{\dmat}\otimes \widetilde{\dmat})\circ \s)=o(n)$,
$\sigma_{\max}\left(\dtildep\right)=O(1)$, the plug-in variance bound estimator is consistent:
\begin{equation*}    n(\widehat{\widetilde{\V}}(\widehat{\mu}^\tL_n) - \widetilde{\V}(\widehat{\mu}^\tL_n))=o_p(1)
\end{equation*}
for IPW, WLS and GR estimators. If there exist a positive integer $n_0$ and a positive constant $c$ such that $n\V(\mu^{\tL}_n)\succeq c\I_k$ for all $n\geq n_0$, then for all $t\in\mathbb{R}^k\backslash\{0\}$, $\frac{t'\widehat{\widetilde{\V}}(\widehat{\mu^{\tL}_n})t}{t'\widetilde{\V}(\widehat{\mu^{\tL}_n})t}\overset{p}{\to} 1$.\footnote{\cite{wu2021randomization} and \cite{lei2021regression} establish consistent variance bound estimation in two-arm completely randomized experiments with a weaker moment condition. It is possible to adapt their proof strategies to our setting by additionally assuming that $\|(\widetilde{\dmat}\otimes \widetilde{\dmat})\circ \s \|_{\infty}$  (as defined in Lemma \ref{LemmA:VanillaRootn}) is bounded uniformly for large $n$. We omit the proof here for simplicity.} 
\end{corollary}

\begin{remark}
   The conditions $\sigma_{\max}((\widetilde{\dmat}\otimes \widetilde{\dmat})\circ \s)=o(n)$ and
$\sigma_{\max}\left(\dtildep\right)=O(1)$ may be difficult to verify directly. Because $\dtildep$ is a symmetric matrix, we can bound $\sigma_{\max}\left(\dtildep\right)$ using the maximum row norm. On the other hand, $(\widetilde{\dmat}\otimes \widetilde{\dmat})\circ \s$ is a fourth-order tensor for which inequalities involving the stated quantity $\sigma_{\max}((\widetilde{\dmat}\otimes \widetilde{\dmat})\circ \s)$  are, to our knowledge, less established. We provide such an inequality. This inequality is shown to be useful for checking the conditions for completely randomized designs in Section \ref{CR}.
\end{remark}
\begin{lemma}\footnote{This lemma is Lemma \ref{LemmaTensor} in the appendix.}\label{LemmA:VanillaRootn}
Consider a fourth-order n-dimensional tensor $\mathbf{A}=[a_{ijkl}]\in\mathbb{R}^{n\times n\times n \times n}$. Define the quantities (absolute slice sums)
\begin{equation}
\|\A\|_{-i}=\max_i\sum_{j=1}^n\sum_{k=1}^n\sum_{l=1}^n |a_{ijkl}|, \hspace{2mm} \|\A\|_{-j}=\max_j\sum_{i=1}^n\sum_{k=1}^n\sum_{l=1}^n |a_{ijkl}|, 
\end{equation}
and similarly for $\|\A\|_{-k}$ and $\|\A\|_{-l}$. Define
\begin{equation}
    \|\A\|_{\infty} =\max\{\|\A\|_{-i},\|\A\|_{-j},\|\A\|_{-k},\|\A\|_{-l}\}.
\end{equation}
We have $  \sigma_{\max}(\mathbf{A})\leq \|\A\|_{\infty}$.
\end{lemma}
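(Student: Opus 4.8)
The plan is to work directly from the variational definition of $\sigma_{\max}(\mathbf{A})$ and to bound the multilinear form pointwise over the feasible set. Write the four arguments as $v^{(1)},v^{(2)},v^{(3)},v^{(4)}\in\mathbf{R}^n$, each constrained to the $l_4$ sphere, i.e.\ $\sum_{i=1}^n (v^{(m)}_i)^4 = 1$ for $m=1,\dots,4$, and expand
\[
\mathbf{A}(v^{(1)},v^{(2)},v^{(3)},v^{(4)}) = \sum_{i,j,k,l} a_{ijkl}\, v^{(1)}_i\, v^{(2)}_j\, v^{(3)}_k\, v^{(4)}_l.
\]
Since only an upper bound on the attained maximum is needed, I would first pass to absolute values by the triangle inequality, reducing the task to bounding $\sum_{i,j,k,l} |a_{ijkl}|\,|v^{(1)}_i|\,|v^{(2)}_j|\,|v^{(3)}_k|\,|v^{(4)}_l|$ uniformly over all feasible $v^{(1)},\dots,v^{(4)}$.

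The key step is to decouple the four modes using the arithmetic--geometric mean inequality in the form matched to the $l_4$ constraints. Applying AM--GM to the four nonnegative numbers $(v^{(1)}_i)^4,(v^{(2)}_j)^4,(v^{(3)}_k)^4,(v^{(4)}_l)^4$ yields the four-factor Young inequality
\[
|v^{(1)}_i|\,|v^{(2)}_j|\,|v^{(3)}_k|\,|v^{(4)}_l| \leq \frac{1}{4}\Big((v^{(1)}_i)^4 + (v^{(2)}_j)^4 + (v^{(3)}_k)^4 + (v^{(4)}_l)^4\Big).
\]
Substituting this and splitting the resulting expression into four sums separates the dependence on the four indices: the piece carrying $(v^{(1)}_i)^4$ factors as $\tfrac14\sum_i (v^{(1)}_i)^4\big(\sum_{j,k,l}|a_{ijkl}|\big)$, and analogously for the three other modes.

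Then I would bound each of the four pieces. For the first, $\sum_{j,k,l}|a_{ijkl}| \leq \|\mathbf{A}\|_{-i}$ for every $i$ by definition of the slice sum, while $\sum_i (v^{(1)}_i)^4 = 1$ by feasibility, so this piece is at most $\tfrac14\|\mathbf{A}\|_{-i}$; the remaining three are bounded by $\tfrac14\|\mathbf{A}\|_{-j}$, $\tfrac14\|\mathbf{A}\|_{-k}$, and $\tfrac14\|\mathbf{A}\|_{-l}$ respectively. Summing the contributions gives $\tfrac14\big(\|\mathbf{A}\|_{-i}+\|\mathbf{A}\|_{-j}+\|\mathbf{A}\|_{-k}+\|\mathbf{A}\|_{-l}\big) \leq \max\{\|\mathbf{A}\|_{-i},\|\mathbf{A}\|_{-j},\|\mathbf{A}\|_{-k},\|\mathbf{A}\|_{-l}\} = \|\mathbf{A}\|_{\infty}$, a bound that holds uniformly over the feasible set and hence over $\sigma_{\max}(\mathbf{A})$. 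There is no real analytic obstacle here; the only point requiring care is matching the exponent in Young's inequality to the $l_4$ normalization. This matching is precisely why the definition of $\sigma_{\max}$ employs the $l_4$ ball rather than the $l_2$ ball: the four-factor AM--GM then splits the tensor norm cleanly into mode-wise slice sums with equal weights $\tfrac14$, avoiding the square roots that appear in the Hölder-type matrix analogue.
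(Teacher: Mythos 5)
Your proof is correct, but it takes a genuinely different route from the paper's. The paper argues variationally: after disposing of the zero tensor and noting existence of an optimizer plus a constraint qualification, it writes the KKT stationarity conditions at an optimal $(w^*,x^*,y^*,z^*)$, multiplies them by the optimal variables and sums to identify every Lagrange multiplier as $\lambda^*=\tfrac{1}{4}\mathbf{A}(w^*,x^*,y^*,z^*)$, then bounds each stationarity equation in the $\ell_\infty$ norm to get $4\lambda^*\|w^*\|_\infty^3\leq \|\mathbf{A}\|_\infty\|x^*\|_\infty\|y^*\|_\infty\|z^*\|_\infty$ and its three analogues; taking $v^*$ to be the largest of the four $\ell_\infty$ norms and dividing by $(v^*)^3\neq 0$ yields $\sigma_{\max}(\mathbf{A})=4\lambda^*\leq\|\mathbf{A}\|_\infty$. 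Your route --- triangle inequality, then the four-factor AM--GM $|v^{(1)}_i||v^{(2)}_j||v^{(3)}_k||v^{(4)}_l|\leq\tfrac{1}{4}\bigl((v^{(1)}_i)^4+(v^{(2)}_j)^4+(v^{(3)}_k)^4+(v^{(4)}_l)^4\bigr)$ matched to the $\ell_4$ constraints, then the slice-sum bounds --- is a pointwise bound over the whole feasible set, so it needs none of the optimization machinery (existence, LICQ, signs of multipliers, the nonzero-tensor case split) and is shorter and more elementary. It also delivers a slightly stronger conclusion than the lemma states: you bound $\sigma_{\max}(\mathbf{A})$ by the average $\tfrac{1}{4}\bigl(\|\mathbf{A}\|_{-i}+\|\mathbf{A}\|_{-j}+\|\mathbf{A}\|_{-k}+\|\mathbf{A}\|_{-l}\bigr)$, which is strictly sharper than the maximum whenever the four slice sums differ. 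What the paper's argument buys in exchange is mostly structural insight --- it mirrors the classical variational proofs of matrix-norm bounds and exposes the identity between the multiplier and the optimal value --- but for establishing this lemma your argument is the cleaner one.
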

\begin{remark}
Note that $\dtildep$ may not be a positive-semidefinite matrix. This is a cause for concern because in some cases the estimated variance bound can be negative. Such problems do not arise in our simulation, though they are possible when the true variance bound is close to zero. One method to solve this problem is to set the negative spectra of the symmetric matrix $\dtildep$ to zero, resulting in an upwardly biased variance bound estimator.
\end{remark}

\subsection{Inference}\label{section:inference}
For hypothesis testing and confidence interval construction, standard theoretical and empirical practice is to establish a central limit theorem for the linearized estimators. For example, let $c\in \mathbb{R}^k$ be a contrast vector,  one usually wants to prove $\left(c'\widehat{\nu}_n^{\tL}-c'\nu_n\right)/\sqrt{c'\V\left(\widehat{\nu}_n^{\tL}\right)c}\overset{d}{\to} N(0,1)$, where $N(0,1)$ is a standard random variable with mean $0$ and variance 1. This justifies the usual inferential procedure for using a normal quantile for testing and confidence interval construction. 

Unfortunately, we are not aware of a CLT general enough to be applied with the assumptions introduced so far. It is difficult to prove a CLT in our setting because the assignment random variables are allowed to be almost arbitrarily correlated. In fact, Proposition 11  in \cite{savje2021average} has shown that the Chevbyshev's inequality is asymptotically sharp when there is strong interference. 

For the network experiments discussed in Section \ref{Section:NetworkExperiments}, the CLT can be justified when i) there are many disjoint components (e.g. villages and classrooms) in a large network and assignments are independent across different components but can be arbitrarily correlated within; ii) the network can be fully connected but the network has a bounded degree and assignments are independent among units.  

When a CLT is not applicable or cannot be rigorously justified, one can still use other tail bounds, for example, the Chebyshev inequality, for inference, although this may entail a loss of efficiency when a CLT indeed approximately holds. 
The following theorem provides a justification for the plug-in inference. It says that if a critical value can be used for infeasible inference with the linearized estimator, the plug-in inference with the moment-type estimators and variance-bound estimator will remain asymptotically valid under regularity conditions.
\begin{theorem}\label{plug-in-inference}
Let $\widehat{\nu}_n$ be a moment-type estimator and $\widehat{\nu}_n^L$ be the linearized estimator and, $c\in\mathbb{R}^k\backslash\{0\}$ a column contrast vector.  Suppose  $\sigma_{\max}\left(\Ome\right)/n=o(1)$, and there exists a positive constant $c_{\ref{plug-in-inference}}$ such that  $\V\left(c'\widehat{\nu}_n^{\tL}\right)\geq c_{\ref{plug-in-inference}} n^{-1}\sigma_{\max}\left(\dmat\right)$ uniformly for all $n$, and $\widetilde{\V}\left(c'\widehat{\nu}_n^{\tL}\right)/\widehat{\widetilde{\V}}\left(c'\widehat{\nu}_n^{\tL}\right)\overset{p}{\to}1$.  If there exists a continuous function $q:(0,1)\to\mathbb{R}$ such that $\lim_{n\to\infty}\textrm{P}_n\left(\left|c'\widehat{\nu}_{n}^{\tL}-c'\nu_{n}\right|\geq q(\alpha)\sqrt{c'\V(\widehat{\nu}_{n}^{\tL})c}\right)\leq \alpha$ for all $\alpha\in (0,1)$, then, under Assumptions \ref{A:MomentEstimator2} and \ref{A:MomentEstimators3},
$\lim_{n\to\infty}\textrm{P}_n\left(\left|c'\widehat{\nu}_{n}-c'\nu_{n}\right|\geq q(\alpha)\sqrt{  c'\widehat{\widetilde{\V}}(\widehat{\nu}_{n}^{\tL})}c\right)\leq \alpha$.
\end{theorem}
In particular, the theorem holds for $\widehat{\nu}_n^{\tht}$, $\widehat{\nu}_n^{\twls}$ and $\widehat{\nu}_n^{\tgr}$ under the premises of Corollary \ref{C:ConsistentVariance} and if there exist a positive integer $n_0$ and a positive constant $\epsilon$ such that $n\V(\nu^{\tL}_n)\succeq \epsilon\I_k$ for all $n\geq n_0$.

\subsection{$\sigma_{\max}\left(\Ome\right)$ as an input for designing experiments}\label{section:design}
The scalar value $\sigma_{\max}\left(\Ome\right)$ is the largest singular value of the first-order design matrix $\Ome$ and it appears in the convergence rate calculation in (\ref{Thm:Consistency}). A smaller $\sigma_{\max}\left(\Ome\right)$ generically implies a better estimation quality for moment-type estimators. We now show that $\sigma_{\max}\left(\Ome\right)$ can be interpreted as the worst-case variance for the IPW estimator. Recall the variance representation of the IPW estimator:
\begin{equation}
\V\left(\widehat{\nu}_n^{\tht}\right) = \frac{1}{n}\onesmat'\diag(y)\Ome\diag(y)\onesmat\in\mathbb{R}^k.
\end{equation}
With a (column) contract vector $c\in\mathbb{R}^k$, the variance of the IPW estimator $c'\widehat{\mu}_n^{\tht}$ is:
\begin{equation}
\V\left(c'\widehat{\nu}_n^{\tht}\right) = \frac{1}{n}c'\onesmat'\diag(y)\Ome\diag(y)\onesmat c\in \mathbb{R}.
\end{equation}
Since $\Ome$ is a semidefinite variance-covariance matrix, the largest singular value is the largest eigenvalue. The variational definition of the eigenvalues implies that:
\begin{equation}\label{variatonal}
\sigma_{\max}\left(\Ome\right) =\lambda_{\max}\left(\Ome\right)= \max_{\frac{1}{n}\sum_{i=1}^n\sum_{a=1}^k\left(c_ay_{ai}\right)^2 \leq 1} \V\left(c'\widehat{\mu}_n^{\tht}\right).
\end{equation}
Hence $\sigma_{\max}\left(\Ome\right)$ can be interpreted as the worst-case variance of the IPW estimator when the outcome is restricted such that $\frac{1}{n}\sum_{i=1}^n\sum_{a=1}^k \left(c_{a}y_{ai}\right)^2\leq 1$. 

The quantity introduced above may be too pessimistic because it measures the worst-case scenario with respect to all possible varying contrast vectors $c$ and the potential outcome vector $y$. For a fixed contrast vector $c$, we can define a contrast-specific first-order design matrix:
\begin{align}
     & \Ome^c = \diag(c'\onesmat')\Ome\diag(\onesmat c)=[\Ome_{aa'ii'}c_ac_{a'}]\in \mathbb{R}^{kn\times kn}, 
\end{align}
where $\Ome_{aa'ii'}=\cV\left(\R_{ai}/\pi_{ai},\R_{a'i'}/\pi_{a'i'}\right)$ is the entry in $\Ome$ associated with unit $i$ and treatment arm $a$, and unit $i'$ and arm $a'$. It can be shown that:
\begin{equation}
    \sigma_{\max}\left( \Ome^c\right) = \max_{\frac{1}{n}\sum_{a,i}y_{ai}^2\leq 1} \V\left(c'\widehat{\nu}_n^{\tht}\right),
\end{equation}
as a result of the algebraic identity:
\begin{align}
     \frac{1}{n}c'\onesmat'\diag(y)\Ome\diag(y)\onesmat c  = \frac{1}{n} y'\diag(c'\onesmat')\Omega \diag(\onesmat c) y
      = \frac{1}{n} y'\Ome^c y,
\end{align}
and again apply the variational definition of eigenvalues as in (\ref{variatonal}).

Hence, when we make no additional assumptions other than a scale restriction on the potential outcome vector $y$, the worst-case variance of the IPW estimator with a specific contrast vector $c$ is proportional to $\sigma_{\max}\left(\Ome^c\right)$. In this sense, if researchers perceive themselves as lacking a reliable prior on the potential outcomes, an experimental design with a smaller $\sigma_{\max}\left(\Ome^c\right)$ is preferred to one with a larger $\sigma_{\max}\left(\Ome^c\right)$, when the goal is to measure the average effects defined by the the contrast vector $c$.\footnote{If the researchers have a prior on the potential outcomes, they can instead consider the \textit{anticipated asymptotic variance} (\cite{isaki1982survey}), that is, $\frac{1}{n}\mathcal{E}[ z'\dmat z]$, where $\mathcal{E}$ is the expectation over the randomness with respect to the prior.} 

The following numerical example from \cite{middleton2021unifying} is instructive to conceptualize the measure $\sigma_{\max}\left(\Ome^c\right)$.  We compare $\sigma_{\max}\left(\Ome^c\right)$ for two treatment-control experiment designs with four units. We compare a completely randomized design where two units are assigned to the treatment group and two to the control group, with a pairwise randomized design where the first two units form a pair and the last two units form a pair. The goal is to measure the average treatment effect defined by the contrast vector $c=(-1,1)$.

The $\Ome^c$'s for the two designs are:
{\footnotesize
\begin{equation}
    \Ome^{c,\textrm{cr}}=\begin{bmatrix}
        1  & -\frac{1}{3} & -\frac{1}{3} & -\frac{1}{3} & 1 &  -\frac{1}{3} & -\frac{1}{3} & -\frac{1}{3} \\
        -\frac{1}{3}  & 1 & -\frac{1}{3} & -\frac{1}{3} & -\frac{1}{3} &  1 & -\frac{1}{3} & -\frac{1}{3} \\    
        -\frac{1}{3}  & -\frac{1}{3} & 1 & -\frac{1}{3} & -\frac{1}{3} &   -\frac{1}{3} & 1 & -\frac{1}{3} \\   
        -\frac{1}{3}  & -\frac{1}{3} &  -\frac{1}{3}  & 1 & -\frac{1}{3} &   -\frac{1}{3} &  -\frac{1}{3} & -1 \\      
        1  & -\frac{1}{3} & -\frac{1}{3} & -\frac{1}{3} & 1 &  -\frac{1}{3} & -\frac{1}{3} & -\frac{1}{3} \\
        -\frac{1}{3}  & 1 & -\frac{1}{3} & -\frac{1}{3} & -\frac{1}{3} &  1 & -\frac{1}{3} & -\frac{1}{3} \\    
        -\frac{1}{3}  & -\frac{1}{3} & 1 & -\frac{1}{3} & -\frac{1}{3} &   -\frac{1}{3} & 1 & -\frac{1}{3} \\   
        -\frac{1}{3}  & -\frac{1}{3} &  -\frac{1}{3}  & 1 & -\frac{1}{3} &   -\frac{1}{3} &  -\frac{1}{3} & 1 \\      
    \end{bmatrix},    \Ome^{c,\textrm{pair}}=\begin{bmatrix}
        1  & -1 & 0 & 0 & 1 &  -1 & 0 & 0 \\
        -1  & 1 & 0 & 0 & -1 &  1 & 0 & 0 \\
                0  & 0 & 1 & -1 & 0 &  0 & 1 & -1 \\
        0  & 0 & -1 & 1 & 0 &  0 & -1 & 1 \\  
                1  & -1 & 0 & 0 & 1 &  -1 & 0 & 0 \\
        -1  & 1 & 0 & 0 & -1 &  1 & 0 & 0 \\
                0  & 0 & 1 & -1 & 0 &  0 & 1 & -1 \\
        0  & 0 & -1 & 1 & 0 &  0 & -1 & 1 \\ 
    \end{bmatrix}.
\end{equation}
}

We have the value 2.667 for $\sigma_{\max}\left(\Ome^{c,\textrm{cr}}\right)$ and 4 for $\sigma_{\max}\left(\Ome^{c,\textrm{pair}}\right)$. Hence in terms of robustness (worst-case variance), the completely randomized design is preferable to the pairwise randomized design. This comparison reflects the well-known fact that pairwise randomization is less efficient when potential outcomes are negatively correlated within pairs, as it depends on the researcher's prior belief that paired units have similar outcomes.

We note that this comparison should not be interpreted as a negative result for the matched-pair design. Instead, our primary goal is to underscore the trade-off between robustness and efficiency in experimental designs. In complex experimental settings where prior information on potential outcomes is limited or unavailable, we hope that the proposed metrics $\sigma_{\max}\left(\Ome\right)$ and $\sigma_{\max}\left(\Ome^c\right)$ are informative for guiding design choices. We demonstrate the use of these measures in our simulation section in Section \ref{Section:Simulation}.\footnote{There is a long literature on the minimax design of experiments \citep{wu1981robustness,li1983minimaxity,kallus2018optimal, harshaw2019balancing, bai2023randomize,basse2023minimax,ni2023design}. Our main contribution is to introduce a worst-case variance measure applicable to general experimental designs. Developing a general minimax experimental design is beyond the scope of the present paper.}

\section{Model-Assisted Estimators and Optimality}\label{Section:Nonlinear}
It can be shown that many commonly used estimators, including properly specified WLS estimators, are GR estimators.\footnote{We include a proof in Appendix \ref{other_estimators}.} GR estimators have the doubly-robust form
\begin{equation}
  \underbrace{\onesmat'  f(X,\widehat{\theta}_n)}_{\text{imputation}} +  \underbrace{\frac{1}{n} \onesmat' \bpiInv\R (y-  f(X,\widehat{\theta}_n)}_{\text{correction}}\in\mathbb{R}^k,
\end{equation}
where $f(X,\widehat{\theta}_n)=X'\widehat{b}^{\twls}_n$. 
One intuition reflected in this estimator is that if $X'\widehat{b}^{\twls}_n$ "predicts" the potential outcomes well, GR estimators may be expected to be more precise relative to the baseline IPW estimator in terms of (asymptotic) variances. This viewpoint motivates the model-assisted estimation strategy in the survey sampling literature (\cite{sarndal2003model}). However, if a model-assisted estimator is not constructed carefully, it may reduce asymptotic precision when compared with the IPW estimator \citep{freedman2008regression,lin2013agnostic}. 

In this section, we study the problem of model-assisted estimation strategies in general experimental designs using GR estimators. We focus on parametric models in this paper and leave the study of nonparametric or high-dimensional models for future work. We examine three classes of estimators and an additional hybrid class of estimators.\footnote{Each class has some precedence in the literature for some particular experimental designs, and we extend them to general experimental designs and consider a larger class of models.}

The first class we consider is the standard \textit{Quasi-Maximum Likelihood GR estimators} (QMLE-GR). This class includes the most commonly used models in the literature, such as linear, probit, and logit models. However, in terms of asymptotic variances, this estimation strategy is not guaranteed to be superior to a vanilla IPW estimator \citep{freedman2008b,lin2013agnostic}.  This problem motivates the second class of estimators, the \textit{no-harm GR estimators} (No-harm-GR). This class is based on the QMLE estimates but estimates a multiplicative constant in addition. This class of estimators provably yields an asymptotic variance no worse than that of the baseline IPW estimator. The final class is the \textit{optimal GR estimators} (Opt-GR). This class of estimators is optimal in that it achieves the greatest reduction in asymptotic variance among all GR estimators that use the same class of imputation functions (but differ in their parameter values).

Although Opt-GR estimators have strong asymptotic theoretical guarantees, we identify two limitations. One is that they tend to be unstable when $\sigma_{\max}\left(\dmat\right)$ is large, resulting in poor finite-sample performances. Another drawback is that they require solving nonconvex optimization problems, which may limit their compatibility with, for example, modern machine learning toolboxes. As a result, we propose a hybrid class of estimators called \textit{optimal-imputed} GR estimators (Opt-I GR). This estimator combines the insights from the No-harm GR and Opt-GR estimators: they are Opt-GR estimators with linear imputation functoins, but instead of using the full set of covariates, the linear models use a single covariate that is imputed by a QMLE model. We find that this class of estimators has good finite-sample performance in our simulations.

We remind readers that $k$ denotes the number of treatment arms and $n$ denotes the number of experiment units. Treatment arms are generically indexed by $a\in[k]$, and units are generically indexed by $i\in[n]$. In many cases, we write $\sum_{a=1}^k\sum_{i=1}^n$ as $\sum_{a,i}$ for simplicity. 

Our GR estimators are constructed using the imputation functions $f^a(\cdot,\theta): \mathcal{X}\to \mathbb{R},a\in[k]$ indexed by a finite-dimensional coefficient $\theta$, and $\mathcal{X}$ is the domain of the covariates. Different treatment arms can have their own imputation functions. All our GR estimators are constructed as follows:
\begin{enumerate}
    \item Given a parameter space $\Theta$, we define a population criterion $\mathcal{L}_n :\Theta \to \mathbb{R}$ and a sample criterion $\widehat{\mathcal{L}}_n: \Theta \to \mathbb{R}$.  The target coefficient $\theta_n$ and its estimator $\widehat{\theta}_n$ are defined as:\footnote{We will assume unique identification.}  
    \begin{equation}
        {\theta}_{n} =\arg\min_{\theta\in\Theta} \mathcal{L}_n(\theta),\text{ } \widehat{\theta}_{n} =\arg\min_{\theta\in\Theta} \widehat{\mathcal{L}}_n(\theta).
    \end{equation}
    \item For unit $i$ in the $a$th arm, we shall use the imputation functions to impute  $\widehat{y}_{ai}=f^a(x_i,\widehat{\theta}_{n})$. The estimator for the $a$th arm's average effect is constructed as:
    \begin{equation}\label{DR}
        \widehat{\mu}_{n,a} = \frac{1}{n}\sum_{i=1}^n f^a(x_i,\widehat{\theta}_{n}) + \frac{1}{n}\sum_{i=1}^n\frac{\R_{ai}}{\bpi_{ai}}  (y_{ai}-f^a(x_i,\widehat{\theta}_{n})).
    \end{equation}
    \item With a contrast vector $c\in\mathbb{R}^k$ defining the parameter of interest ($\mu_{n,c}=\frac{1}{n}c'\onesmat'y$), a GR estimator is constructed as:
    \begin{equation}
        \widehat{\mu}_{n,c}=\sum_{a=1}^k c_a \widehat{\mu}_{n,a}
    \end{equation}
\end{enumerate}
Different choices of the criterion $\mathcal{L}(\theta)$ give rise to different estimators. This section examines these choices and their implications.

We define additional notation for later use. Define the column vector $f^a(\theta)=\left(f^a(x_1,\theta),...,f^a(x_n,\theta)\right)\allowbreak\in\mathbb{R}^n$ and column vector $f(\theta)=\left(f^1(\theta)',f^2(\theta)',...,f^k(\theta)'\right)'\in\mathbb{R}^{kn}$.  In other words, $f^a(\theta)$ includes the imputed potential outcomes for all units in arm $a$ with parameter $\theta$, and $f(\theta)$ include the imputed outcomes of all arms. We first examine the QMLE-GR estimators and then use the results to motivate the no-harm GR and optimal-coefficient GR estimators.
\subsection{QMLE-GR estimators}\label{subsection: vanilla}
For the QMLE-GR estimators, the population and sample criteria are defined as follows:
\begin{equation}\label{M-pop}
     \mathcal{L}_n(\theta)=\frac{1}{n}\sum_{a=1}^k\sum_{i=1}^n \omega_{ai} g^a(y_{ai},x_i,\theta), \text{ }\widehat{\mathcal{L}}_n(\theta)=\frac{1}{n}\sum_{a=1}^k\sum_{i=1}^n\frac{\R_{ai}}{\bpi_{ai}} \omega_{ai}g^a(y_{ai},x_i,\theta). 
\end{equation}
The functions $g^a(\cdot,\theta)$ are measures of loss (e.g., squared loss or log likelihood), $\omega_{ai}'s$ are nonnegative weights.
For a parameter of interest $\mu_{n,c}=c'\frac{1}{n}\onesmat'y$, we construct the QMLE-GR estimator $\widehat{\mu}_{n,c}^{\tqmle}$ as in Algorithm \ref{QMLE-GR}. 
\begin{algorithm}\label{QMLE-GR}
\caption{QMLE-GR estimator for $\mu_{n,c}=c'\frac{1}{n}\onesmat'y$ }
\begin{enumerate}
\item Define $\widehat{\mathcal{L}}_n(\theta)$ as in (\ref{M-pop}) and obtain $\widehat{\theta}_n\equiv\arg\min_{\theta\in\Theta}\widehat{\mathcal{L}}_n(\theta)$.
    \item  Compute $\widehat{\mu}_{n,a}^{\tqmle}$, for $a\in [k]$, where
    \begin{equation}\label{DR}
        \widehat{\mu}_{n,a}^{\tqmle} = \frac{1}{n}\sum_{i=1}^n f^a(x_i,\widehat{\theta}_{n}) + \frac{1}{n}\sum_{i=1}^n\frac{\R_{ai}}{\bpi_{ai}}  (y_{ai}-f^a(x_i,\widehat{\theta}_{n})).
    \end{equation}
    \item Output $\widehat{\mu}_{n,c}^{\tqmle}=\sum_{a=1}^k c_a\widehat{\mu}_{n,a}^{\tqmle}$. 
\end{enumerate}
\end{algorithm}
\newline 
The following theorem collects the standard asymptotic  results for the QMLE-GR estimators. It shows that, under regularity assumptions, i) the estimator $\widehat{\mu}_{n,c}^{\tqmle}$ is equivalent to its linearized counterpart, ii) the plug-in asymptotic variance bound estimator is consistent, iii) the estimator and variance bound estimator can be used for inference. The regularity assumptions (Assumptions \ref{A:VanillaConsistency} and \ref{A:Imputation}) are standard \citep{newey1994large}, and we include them in Appendix \ref{regularity_assumptions}. For simplicity of notations, we shall focus on the $\sqrt{n}$-case for all our estimators below. 

Define $\theta_n\equiv\arg\min_{\theta\in\Theta} \mathcal{L}_n(\theta)$, where $\mathcal{L}_n(\theta)$ is defined in (\ref{M-pop}). Further, define the infeasible estimator:
\begin{equation}\label{qmle_linearized}
    \widehat{\mu}_{n,a}^{\tqmle,\tL}=\frac{1}{n}\sum_{i=1}^n f^a(x_i,\theta_n) + \frac{1}{n}\sum_{i=1}^n\frac{\R_{ai}}{\bpi_{ai}}  (y_{ai}-f^a(x_i,\theta_n)),
\end{equation} 
for $a\in[ k]$ and $\widehat{\mu}_{n,c}^{\tqmle,\tL}=\sum_{a=1}^k c_a \widehat{\mu}_{n,a}^{\tqmle,\tL}$. The variance of $\widehat{\mu}_{n,c}^{\tqmle,\tL}$ (asymptotic variance of $\widehat{\mu}^{\tqmle}_{n,c}$) can be expressed as
\begin{equation}\label{qmle_variance}
    \V(\widehat{\mu}_{n,c}^{\tqmle,\tL}) =\frac{1}{n^2} c'\onesmat'\diag(y-f(\theta_n))\hspace{1pt}\dmat\hspace{1pt} \diag(y-f(\theta_n))\onesmat c.
\end{equation}
\begin{theorem}\label{Thm:QMLE}
Define $\widehat{\theta}_n$,  $\widehat{\mu}_{n,a}^{\tqmle}$, and $\widehat{\mu}_{n,c}^{\tqmle}$ as in Algorithm \ref{QMLE-GR}. Under Assumptions \ref{A:BoundedFourthMoments},  \ref{A:VanillaConsistency}, and \ref{A:Imputation}, and if $\sigma_{\max}\left(\dmat\right)=O(1)$, $\sigma_{\max}((\widetilde{\dmat}\otimes \widetilde{\dmat})\circ \s)=o(n)$,
$\sigma_{\max}\left(\dtildep\right)=O(1)$, and there exists a positive constant $c_{\ref{Thm:QMLE}}$ such that $n\V(\widehat{\mu}_{n,c}^{\tqmle,\tL})\geq c_{\ref{Thm:QMLE}}$ uniformly for all large $n$, then following results hold:
\begin{enumerate}[label=(\roman*)]
   \item We have $\left(\widehat{\mu}^{\tqmle}_{n,c}-\widehat{\mu}_{n,c}^{\tqmle,\tL}\right)/\sqrt{\V(\widehat{\mu}_{n,c}^{\tqmle,\tL})}=o_p\left(1\right)$. 
\item Define the variance bound:
\begin{equation}\label{qmle_vbound}
    \widetilde{\V}(\widehat{\mu}_{n,c}^{\tqmle,\tL})= \frac{1}{n^2}c'\onesmat'\diag(y-f(\theta_n))\widetilde{\dmat}\diag(y-f(\theta_n))\onesmat c\in \mathbb{R},
\end{equation}
with an identified variance bound matrix $\widetilde{\dmat}$. The plug-in variance-bound estimator
\begin{equation}\label{qmle_vbound_est}
     \widehat{\widetilde{\V}}(\widehat{\mu}_{n,c}^{\tqmle,\tL})= \frac{1}{n^2}\onesmat'\diag(y-f(\widehat{\theta}_n))\R\dtildep\R\diag(y-f(\widehat{\theta}_n))\onesmat\in \mathbb{R} 
\end{equation}
is consistent: $\widehat{\widetilde{\V}}(\widehat{\mu}_{n,c}^{\tqmle,\tL})/\widetilde{\V}(\widehat{\mu}_{n,c}^{\tqmle,\tL})\overset{p}{\to} 1$.
\item If there exists a continuous function $q:(0,1)\to\mathbb{R}$ such that $\lim\sup_{n\to\infty}\textrm{P}_n\left(\left|\widehat{\mu}_{n,c}^{\tqmle}-\mu_{n,c}\right|\geq q(\alpha)\sqrt{\V(\widehat{\mu}_{n,c}^{\tqmle,\tL})}\right)\leq \alpha$ for all $\alpha\in (0,1)$, then, $\lim\sup_{n\to\infty}\textrm{P}_n\left(\left|\widehat{\mu}_{n,c}^{\tqmle}-\mu_{n,c}\right|\geq q\left(\alpha\right)\sqrt{  \widehat{\widetilde{\V}}(\hat{\mu}_{n,c}^{\tqmle,\tL})}\right)\leq \alpha $. 
\end{enumerate}
\end{theorem}
\subsection{ No-harm-GR estimators}\label{GMM}
As discussed in \cite{freedman2008regression} and \cite{cohen2020no}, QMLE-GR estimators may perform worse than the baseline IPW estimator in terms of asymptotic variances under "misspecification" of the adjusting model. In this section, we consider GR estimators that do not increase asymptotic variances regardless of the configurations of potential outcomes and covariates. Recall that for a parameter of interest $\mu_{n,c}=c'\frac{1}{n}\onesmat'y$, the asymptotic variance of a QMLE-GR estimator is 
\begin{equation}
     \frac{1}{n}c' \onesmat'\diag(y-f(\theta_n))\hspace{1pt}\dmat\hspace{1pt} \diag(y-f(\theta_n))\onesmat c.    
\end{equation}

There is no guarantee that this asymptotic variance is smaller than the variance of the IPW estimator $ \frac{1}{n}c' \onesmat'\diag(y)\hspace{1pt}\dmat\hspace{1pt} \diag(y)\onesmat c$, which is the asymptotic variance of the IPW estimator. To guarantee that our estimation strategy does not result in any harm, we can further define a multiplicative constant $\alpha_n$ that solves the problem
\begin{equation}\label{alphanc} \alpha_n^c\equiv\arg\min_{\alpha\in\mathbb{R}} \frac{1}{n}c' \onesmat'\diag(y-\alpha f(\theta_n))\hspace{1pt}\dmat\hspace{1pt} \diag(y-\alpha  f(\theta_n))\onesmat c.
\end{equation}
Note that the minimum is guaranteed to be no larger than $\frac{1}{n}c' \onesmat'\diag(y)\hspace{1pt}\dmat\hspace{1pt} \diag(y)\onesmat c$ since $\alpha=0$ is in the choice set.
The analytical expression for $\alpha_n^c$ is 
\begin{equation}\label{alphanc}
    \alpha_n^c= \frac{c'\onesmat'\diag\left(y\right)\dmat\diag\left(f\left({\theta}_n\right)\right)\onesmat c}{c'\onesmat'\diag\left(f\left(\theta_n\right)\right)\dmat\diag\left(f\left(\theta_n\right)\right)\onesmat c},
\end{equation}
for which we can construct a feasible consistent estimator:
\begin{equation}\label{alphanchat}    \widehat{\alpha}_n^c=\frac{c'\onesmat'\diag(y)\R\bpiInv\dmat\diag(f(\widehat{\theta}_n))\onesmat c}{c'\onesmat'\diag(f(\widehat{\theta}_n))\dmat\diag(f(\widehat{\theta}_n))\onesmat c}.
\end{equation}
Algorithm \ref{NOHARM-GR} gives a formal template for constructing the No-harm GR estimators.\footnote{    This class of estimators is motivated by \cite{cohen2020no}'s no-harm estimator for a two-arm completely randomized design. Their estimator is different from ours because they use the outputs of QMLE imputations (i.e., $f^a(x_i,\widehat{\theta}_n)$) as regressors for an interacted linear regression model and apply \cite{lin2013agnostic}'s results. In general, one can design no-harm estimators that are more flexible than the No-harm-GR estimators considered here. For example, one can define the multiplicative constants separately for each arm or use the QMLE imputations of arms as regressors, as in \cite{cohen2020no}. We will discuss one class of such estimators in the section below.} 
\begin{algorithm}[h]\label{NOHARM-GR}
\caption{No-harm-GR estimator for $\mu_{n,c}=c'\frac{1}{n}\onesmat'y$ }
\begin{algorithmic}
\Require QMLE coefficient estimates $\widehat{\theta}_n$ from Algorithm \ref{QMLE-GR}
\begin{enumerate}
    \item Compute $\widehat{\alpha}^c_{n}$ as in (\ref{alphanchat}).
    \item  Compute $\widehat{\mu}_{n,a}^{\tNO}$, for $a\in [k]$, where
    \begin{equation}\label{DR}
        \widehat{\mu}_{n,a}^{\tNO} = \frac{1}{n}\sum_{i=1}^n \widehat{\alpha}_n^cf^a(x_i,\widehat{\theta}_{n}) + \frac{1}{n}\sum_{i=1}^n\frac{\R_{ai}}{\bpi_{ai}}  \left(y_{ai}-\widehat{\alpha}_n^c f^a(x_i,\widehat{\theta}_{n})\right).
    \end{equation}
    \item Output $\widehat{\mu}_{n,c}^{\tNO}=\sum_{a=1}^k c_a\widehat{\mu}_{n,a}^{\tNO}$. 
\end{enumerate}
\end{algorithmic}
\end{algorithm}
Statistical guarantees are derived under the following additional assumptions.  
\begin{assumption}\label{A:NOHARM}
 There exists a positive integer $n_0$ and a positive constant $c_{\ref{A:NOHARM}}$ such that
    \begin{equation}
        \frac{1}{n}c'\onesmat'\diag\left(f(\theta_n)\right)\dmat\diag\left(f(\theta_n)\right)c\geq c_{\ref{A:NOHARM},1},
    \end{equation}
    uniformly for all $n\geq n_0$.
\end{assumption}
\begin{remark}
Note that Assumption \ref{A:NOHARM} may be violated in some cases. With linear models and a two-arm completely randomized design, the condition may be violated if all the QMLE coefficients of the covariates are zeros, reducing the imputation functions to arm-specific intercepts. Because the first-order design matrix $\dmat$ for the complete randomization nullifies the intercept matrix (i.e., $\dmat\onesmat=\0_{kn\times k}$), the quantity $\alpha_n^c$ in (\ref{alphanc}) is either weakly identified (when the coefficients are near-zero) or not identified. We shall discuss recommended practices later, after introducing Optimal-GR estimators.
\end{remark}
Define $\theta_n$ as the target QMLE coefficient as before and $\alpha_n^c$ as in (\ref{alphanc}). Define the infeasible estimator:
\begin{equation}
\widehat{\mu}_{n,a}^{\tNO,\tL}=\frac{1}{n}\sum_{i=1}^n \alpha_n^cf^a(x_i,\theta_n) + \frac{1}{n}\sum_{i=1}^n\frac{\R_{ai}}{\bpi_{ai}}  \left(y_{ai}-\alpha_n^cf^a(x_i,\theta_n)\right)
\end{equation} 
for $a\in[ k]$ and $\widehat{\mu}_{n,c}^{\tNO,\tL}=\sum_{a=1}^k c_a \widehat{\mu}_{n,a}^{\tNO,\tL}$. The variance of $\widehat{\mu}_{n,c}^{\tNO,\tL}$ (asymptotic variance of $\widehat{\mu}^{\tNO}_{n,c}$) can be expressed as
\begin{equation}
    \V(\widehat{\mu}_{n,c}^{\tNO,\tL}) =\frac{1}{n^2} c'\onesmat'\diag\left(y-\alpha_{n}^cf\left(\theta_n\right)\right)\hspace{1pt}\dmat\hspace{1pt} \diag\left(y-\alpha_{n}^c f\left(\theta_n\right)\right)\onesmat c.
\end{equation}
Denote the  $l_1$-induced matrix norm by $\viiii{A}_1=\max_{j\in [n]}\{\sum_{i=1}^n |a_{ij}|\}$.
\begin{theorem}\label{Thm:NOHARM}
Define $\widehat{\mu}_{n,c}^{\tNO}$ as in Algorithm \ref{NOHARM-GR}. Under Assumptions \ref{A:BoundedFourthMoments}, \ref{A:NOHARM}, \ref{A:VanillaConsistency} and \ref{A:Imputation}, and if $\viiii{\dmat}_1=O(1)$, $\sigma_{\max}((\widetilde{\dmat}\otimes \widetilde{\dmat})\circ \s)=o(n)$, $\sigma_{\max}\left(\dtildep\right)=O(1)$ and there exists a positive constant $c_{\ref{Thm:NOHARM},2}$ such that $n\V(\widehat{\mu}_{n,c}^{\tNO,\tL})\geq c_{\ref{Thm:NOHARM},2}$ uniformly for all large $n$, then following results hold:
\begin{enumerate}[label=(\roman*)]
   \item We have $\left(\widehat{\mu}^{\tqmle}_{n,c}-\widehat{\mu}_{n,c}^{\tqmle,\tL}\right)/\sqrt{\V(\widehat{\mu}_{n,c}^{\tqmle,\tL})}=o_p\left(1\right)$. 
   \item (No Harm) Define $\widehat{\mu}_{n,c}^{\tht}=c'\frac{1}{n}\bpi^{-1}\R y$. We have:
   \begin{equation}
    \V(\widehat{\mu}_{n,c}^{\tNO,\tL}) \leq \V(\widehat{\mu}_{n,c}^{\tht}) =\frac{1}{n^2} c'\onesmat'\diag\left(y\right)\hspace{1pt}\dmat\hspace{1pt} \diag\left(y\right)\onesmat c.   
   \end{equation}
\item Define the variance bound:
\begin{equation}
    \widetilde{\V}(\widehat{\mu}_{n,c}^{\tNO,\tL})= \frac{1}{n^2}c'\onesmat'\diag\left(y-\alpha_n^cf(\theta_n)\right)\widetilde{\dmat}\diag\left(y-\alpha_n^cf(\theta_n)\right)\onesmat c\in \mathbb{R},
\end{equation}
with an identified variance bound matrix $\widetilde{\dmat}$. The plug-in variance-bound estimator
\begin{equation}
     \widehat{\widetilde{\V}}(\widehat{\mu}_{n,c}^{\tNO,\tL})= \frac{1}{n^2}\onesmat'\diag\left(y-f(\widehat{\theta}_n)\right)\R\dtildep\R\diag\left(y-f(\widehat{\theta}_n)\right)\onesmat\in \mathbb{R} 
\end{equation}
is consistent: $\widehat{\widetilde{\V}}(\widehat{\mu}_{n,c}^{\tNO,\tL})/\widetilde{\V}(\widehat{\mu}_{n,c}^{\tNO,\tL})\overset{p}{\to} 1$.
\item If there exists a continuous function $q:(0,1)\to\mathbb{R}$ such that $\lim\sup_{n\to\infty}\textrm{P}_n\left(\left|\widehat{\mu}_{n,c}^{\tNO}-\mu_{n,c}\right|\geq q(\alpha)\sqrt{\V(\widehat{\mu}_{n,c}^{\tNO,\tL})}\right)\leq \alpha$ for all $\alpha\in (0,1)$, then, $\lim\sup_{n\to\infty}\textrm{P}_n\left(\left|\widehat{\mu}_{n,c}^{\tNO}-\mu_{n,c}\right|\geq q\left(\alpha\right)\sqrt{  \widehat{\widetilde{\V}}(\hat{\mu}_{n,c}^{\tNO,\tL})}\right)\leq \alpha $. 
\end{enumerate}
\end{theorem} 
\begin{remark}
         This assumption $\viiii{\dmat}_1=O(1)$ is stronger than the condition $\sigma_{\max}\left(\dmat\right)=O(1)$, as $\sigma_{\max}\left(\dmat\right)\leq \viiii{\dmat}_1$, but it is satisfied by many standard designs (e.g., completely randomized designs, pairwise randomized designs, Bernoulli designs on a network with bounded degrees). 
\end{remark}

\subsection{ Optimal GR estimators (Opt-GR)}\label{Opt-GMM}
QMLE-GR estimators and No-harm-GR estimators both elicit $\theta_n$ from other criteria. Instead of requiring $\theta_n$ to be the minimizer of a pseudo-criterion, we can choose $\theta_n$ to directly minimize the implied asymptotic variances. That is, we define 
\begin{equation}\label{optimal-gr}
    \theta_n\equiv\arg\min_{\theta\in \Theta}  \frac{1}{n^2}c' \onesmat'\diag(y-f(\theta))\hspace{1pt}\dmat\hspace{1pt} \diag(y-f(\theta))\onesmat c.\footnote{In theory, there is a possibility that even the minimized asymptotic variance may be larger than that of the IPW estimator $\frac{1}{n^2} c' \onesmat'\diag(y)\hspace{1pt}\dmat\hspace{1pt} \diag(y)\onesmat c$. This will not happen if the imputation functions can impute the $\0_{kn}$ vector (i.e., $\0_{kn}\in \{f(\theta)\}_{\theta\in\Theta}$) or if the $\dmat$ matrix nullifies the intercept matrix $\dmat\onesmat=\0_{kn\times k}$ and, for each arm, the imputation functions can impute the same potential outcomes for each unit. If neither condition is satisfied and 0 is in the parameter set $\Theta$, one can guarantee no harm by shifting the parameter set by $f(0)$ and define $\theta_n$ such that $\theta_n=\arg\min_{\theta\in\Theta}\frac{1}{n}c' \onesmat'\diag(y+f(0)-f(\theta_n))\hspace{1pt}\dmat\hspace{1pt} \diag(y+f(0)-f(\theta_n))\onesmat c$. }
\end{equation}
The coefficient $\theta_n$ is optimal in the sense that the implied asymptotic variance $\frac{1}{n}c' \onesmat'\diag\left(y-f(\theta)\right)\hspace{1pt}\dmat\hspace{1pt} \diag\left(y-f(\theta)\right)\onesmat c$ is the smallest when compared with estimators using the same class of parametric models for adjustments.\footnote{If we replace the first-order design matrix $\dmat$ with an identified variance bound $\widetilde{\dmat}$, the implied $\theta_n$ will instead minimize the asymptotic variance bound.} However, it is impossible to construct a sample analog to the criterion above because some entries in $y-f(\theta_n)$ are never observed simultaneously. This is the same problem we encountered in variance estimation, where some pairs of potential outcomes can never be observed simultaneously. 

To identify and estimate $\theta_n$, we modify the criterion to allow for a feasible sample analog. Specifically, we define:
\begin{equation}\label{criterion_81}
    \mathcal{L}_n^{\toc}\left(\theta\right) =  \frac{1}{n^2}c' \onesmat'\diag(f(\theta))\hspace{1pt}\dmat\hspace{1pt} \diag(f(\theta))\onesmat c -  \frac{2}{n^2}c' \onesmat'\diag(y)\hspace{1pt}\dmat\hspace{1pt} \diag(f(\theta))\onesmat c,\footnote{An alternative approach based on the first order conditions can be found in an earlier version of the paper \cite{chang2023design}. We chose the current formulation because it is numerically more stable in our simulation.}
\end{equation}
and its sample analog:
\begin{equation}\label{criterion_82}
    \widehat{\mathcal{L}}^{\toc}_n\left(\theta\right) =  \frac{1}{n^2}c' \onesmat'\diag(f(\theta))\hspace{1pt}\dmat\hspace{1pt} \diag(f(\theta))\onesmat c -  \frac{2}{n^2}c' \onesmat'\diag(y)\hspace{1pt}\R\bpiInv\dmat\hspace{1pt} \diag(f(\theta))\onesmat c.
\end{equation}
The criterion $ \mathcal{L}_n^{\toc}(\theta)$ differs from the original criterion (\ref{optimal-gr}) by the term $\frac{1}{n^2}c' \onesmat'\diag(y)\hspace{1pt}\dmat\hspace{1pt} \diag(y)\onesmat c$. This term does not depend on the parameter value $\theta$ and hence minimizers for (\ref{optimal-gr}) and  (\ref{criterion_81}) are identical. Furthermore, unlike (\ref{optimal-gr}), $\mathcal{L}_n(\theta)$ avoids the joint non-observability issue caused by the term $\frac{1}{n^2}c' \onesmat'\diag(y)\hspace{1pt}\dmat\hspace{1pt} \diag(y)\onesmat c$ and hence admits a valid sample analog $\widehat{\mathcal{L}}^{\toc}(\theta)$.

The Opt-GR estimators are constructed according to Algorithm \ref{Opt-GR}.
\begin{algorithm}[h]\label{Opt-GR}
\caption{Opt-GR estimator for $\mu_{n,c}=c'\frac{1}{n}\onesmat'y$ }
\begin{enumerate}
\item Define
$\widehat{\mathcal{L}}^{\toc}_n(\theta)$ as in (\ref{criterion_82}) and obtain $\widehat{\theta}_n\equiv\arg\min_{\theta\in\Theta}\widehat{\mathcal{L}}^{\toc}_n(\theta)$.
    \item  Compute $\widehat{\mu}^{\toc}_{n,a}$, for $a\in [k]$, where
    \begin{equation*}\
        \widehat{\mu}_{n,a}^{\toc} = \frac{1}{n}\sum_{i=1}^n f^a(x_i,\widehat{\theta}_{n}) + \frac{1}{n}\sum_{i=1}^n\frac{\R_{ai}}{\bpi_{ai}}  (y_{ai}-f^a(x_i,\widehat{\theta}_{n})).
    \end{equation*}
    \item Output $\widehat{\mu}_{n,c}^{\toc}=\sum_{a=1}^k c_a\widehat{\mu}^{\toc}_{n,a}$. 
\end{enumerate}
\end{algorithm}
Theoretical guarantees can be derived under the standard identification assumptions, with additional standard regularity assumptions (Assumption \ref{A:GMM2new}) which we include in Appendix \ref{regularity_assumptions}. Though the identification assumption is standard,  it requires careful consideration in our setting, so we highlight it here. For a set $C \subset \mathbb{R}^s$, we denote its boundary under the standard Euclidean topology by $\textrm{Bd}\left(C\right)$. 
\begin{assumption}\label{A:GMM1} There exists a positive integer $n_0$ such that for all $n\geq n_0$ the following conditions hold:
\begin{enumerate}[label=(\roman*)] 
    \item The parameter space $\Theta$ is a compact set in $\mathbb{R}^{s}$ with an nonempty interior. 
    \item There exists a $\theta_n$ and a positive $c_{\ref{A:GMM1},1}$ such that, for any $\epsilon>0$, $\inf_{\theta\in \Theta\backslash B(\theta_n,\epsilon)}\mathcal{L}_n(\theta)-\mathcal{L}_n(\theta_n)>c_{\ref{A:GMM1},1}\epsilon^2$. Moreoever, there exists a $\delta$ such that $\text{dist}(\theta_n,\mathrm{Bd}(\Theta))>\delta$.
\end{enumerate}
\end{assumption}

Define $\widehat{\mu}_{n,a}^{\toc,\tL}$, $\widehat{\mu}_{n,c}^{\toc,\tL}=\sum_{a=1}^k c_a\widehat{\mu}_{n,a}^{\toc,\tL}$, and $ \V(\widehat{\mu}_{n,c}^{\toc,\tL})$ as in (\ref{qmle_linearized}) and (\ref{qmle_variance}), but with $\theta_n$ replaced by $\theta_n\equiv\arg\min_{\theta\in\Theta} \mathcal{L}_n(\theta)$ where $\mathcal{L}_n(\theta)$ is defined in (\ref{criterion_81}).
\begin{theorem}\label{Thm:OC}
Define $\widehat{\theta}_n$,  $\widehat{\mu}_{n,a}^{\toc}$, and $\widehat{\mu}_{n,c}^{\toc}$ as in Algorithm \ref{Opt-GR}. Under Assumptions \ref{A:BoundedFourthMoments},\ref{A:GMM1}, and \ref{A:GMM2new} and if $\viiii{\dmat}_1=O(1)$, $\sigma_{\max}((\widetilde{\dmat}\otimes \widetilde{\dmat})\circ \s)=o(n)$, and
$\sigma_{\max}\left(\dtildep\right)=O(1)$, and if there exists a positive constant $c_{\ref{Thm:OC},2}$ such that $n\V(\widehat{\mu}_{n,c}^{\toc,\tL})\geq c_{\ref{Thm:OC},2}$ uniformly for all large $n$, then following results hold: \\
\begin{enumerate}[label=(\roman*)]
   \item We have $\left(\widehat{\mu}^{\toc}_{n,c}-\widehat{\mu}_{n,c}^{\toc,\tL}\right)/\sqrt{\V(\widehat{\mu}_{n,c}^{\toc,\tL})}=o_p\left(1\right)$. 
   \item (Optimality) $ \V(\widehat{\mu}_{n,c}^{\toc,\tL})= \min_{\theta\in\Theta}\frac{1}{n}c' \onesmat'\diag\left(y-f(\theta)\right)\hspace{1pt}\dmat\hspace{1pt} \diag\left(y-f(\theta)\right)\onesmat c$.
\item Define the variance bound $\widetilde{\V}(\widehat{\mu}_{n,c}^{\toc,\tL})$  as in (\ref{qmle_vbound}), with an identified variance bound matrix $\widetilde{\dmat}$,  and the the plug-in variance-bound estimator $\widehat{\widetilde{\V}}(\widehat{\mu}_{n,c}^{\toc,\tL})$ as in (\ref{qmle_vbound_est}). The variance bound estimator is consistent
: $\widehat{\widetilde{\V}}(\widehat{\mu}_{n,c}^{\toc,\tL})/\widetilde{\V}(\widehat{\mu}_{n,c}^{\toc,\tL})\overset{p}{\to} 1$. 
\item If there exists a continuous function $q:(0,1)\to\mathbb{R}$ such that $\lim\sup_{n\to\infty}\textrm{P}_n\left(\left|\widehat{\mu}_{n,c}^{\toc}-\mu_{n,c}\right|\geq q(\alpha)\sqrt{\V(\widehat{\mu}_{n,c}^{\toc,\tL})}\right)\leq \alpha$ for all $\alpha\in (0,1)$, then, $\lim\sup_{n\to\infty}\textrm{P}_n\left(\left|\widehat{\mu}_{n,c}^{\toc}-\mu_{n,c}\right|\geq q\left(\alpha\right)\sqrt{  \widehat{\widetilde{\V}}(\hat{\mu}_{n,c}^{\toc,\tL})}\right)\leq \alpha $. 
\end{enumerate}
\end{theorem} 
\begin{remark}
Our notion of optimality in Theorem \ref{Thm:OC}-(ii) and (\ref{optimal-gr}) is weaker than the semiparametric efficiency bound commonly used in the i.i.d. sampling literature \citep{bickel1993efficient}. To our knowledge, this notion of semiparametric efficiency has not been developed in the design-based statistical literature. Instead, our definition of estimator optimality aligns with that of \cite{lin2013agnostic} and \cite{middleton2018unified}: we consider an estimator optimal if it achieves the smallest asymptotic variance, within a given class of adjustments.
\end{remark}
\begin{remark}
We now comment on Assumption \ref{A:GMM1}-(ii) extensively. Identifiability is a very important assumption here. In general, there can be multiple optimal solutions depending on the adjustment strategy (\cite{lin2013agnostic,middleton2018unified}). This is because some experimental designs will make the GR estimator invariant to certain parameter choices. For example, consider the setting of a two-arm  completely randomized design with one pretreatment covariate. Let 1 denote the treatment arm and 0 denote the control arm, and let $n_1$ and $n_0$ denote the number of treated and control units, respectively. Define $\Bar{x}=\frac{1}{n}\sum_{i}x_i=0$ (centered), $\widehat{\Bar{x}}_1 =\frac{1}{n_1}\sum_{i}\R_{1i}x_i$, and $\widehat{\Bar{x}}_0 =\frac{1}{n_0}\sum_{i}\R_{0i}x_i$, and note $\widehat{\Bar{x}}_0=-\frac{n_1}{n_0}\widehat{\Bar{x}}_1$. Suppose we use a separate-slope linear model to adjust $f^0(x_i,\beta)=\beta_0+\beta_1 x_i$ and $f^1(x_i,\beta)=\beta_2+\beta_3 x_i$ and $\frac{1}{n}\sum_{i=1}^n x_i=0$. The GR estimator for the ATE is
\begin{align}
   & (\beta_2-\beta_0)+\frac{1}{n}\sum_{i}x_i(\beta_3-\beta_1)-\left(\frac{1}{n_1}\sum_{i}\R_{1i}(\beta_2+x_i\beta_3)-\frac{1}{n_0}\sum_{i}\R_{0i}(\beta_0+x_i\beta_1)\right)\\
    & =0 - \widehat{\Bar{x}}_1\beta_3 +  \widehat{\Bar{x}}_0\beta_1 = -\widehat{\Bar{x}}_1\left(\beta_3 + \frac{n_1}{n_0}\beta_1\right).        
\end{align}
We observe that the intercepts $\beta_0$ and $\beta_2$ are canceled and that there are multiple pairs of $(\widetilde{\beta}_1,\widetilde{\beta}_3)$ that are equivalent to $(\beta_1,\beta_3)$, such as $\widetilde{\beta}_1=0$ and $\widetilde{\beta}_3=\beta_3 + \frac{n_1}{n_0}\beta_1$.\footnote{We note that the identification problem highlighted here is design specific. For example, this problem will also arise for pairwise randomized designs but not for Bernoulli designs.} 

In general, there are two sources of the weak/non-identification problem. The first is weak/non identification of arm-specific intercepts. This typically happens when the number of units in an arm is fixed in all possible random treatment allocations. Another source of weak/non-identification is that the design induces co-linearity or cancellation of covariates coefficients, for example, in pairwise randomized designs.

For linear models, both problems can be detected by inspecting the eigenvalues of $\frac{1}{n}\X\dmat^c\X$. Small eigenvalues correspond to the possible cancellation of covariate coefficients or arm-specific intercepts. To use the Opt-GR estimators, we recommend that researchers inspect these eigenvalues of before applying the estimators. If one of the eigenvalues is small, the researcher may want, for example, to avoid specifying an intercept for the treatment arms or use a model with same-slope adjustments instead of a model with separate-slope adjustments. 
    
\end{remark}

We end this section by noting that one can combine the insights of No-harm GR and Opt-GR estimators to form a class of Optimal-Imputed (Opt-I) GR estimators. This class of estimators is Opt-GR estimators with a linear imputation function. Instead of using the full set of covariates, the linear imputation function uses a single covariate imputed by a QMLE model (for example, a flexible ML model).\footnote{We note that the coefficient estimator for the optimal-GR estimator with linear imputation functions admits a closed-form solution.} We find that this class of estimators has good finite-sample performance in our simulations. We detail the constructions below. Recall the definition of the intercept matrix $\onesmat$ in equation (\ref{intercept}) and the stack of imputations $f(\theta)$ defined before Section \ref{Section:Nonlinear}. The assumptions, asymptotic theory, and variance-bound estimation for the Opt-I estimators are left to Appendix \ref{more_results_S5} due to space constraints.
\begin{algorithm}[h]\label{OCI-GR}
\caption{Opt-I estimator for $\mu_{n,c}=c'\frac{1}{n}\onesmat'y$ }
\begin{algorithmic}
\Require QMLE coefficient estimates $\widehat{\theta}_n$ from Algorithm \ref{QMLE-GR}
\begin{enumerate}
    \item Define $\widehat{X}=\begin{bmatrix}
        \onesmat &\vert f(\widehat{\theta}_n)
    \end{bmatrix}\in \mathbb{R}^{kn\times(k+1)}$.
    \item Define $\widehat{\beta}=[\widehat{\beta}^1,...,\widehat{\beta}^{k+1}]=\arg\min_{\beta\in \mathbb{R}^{k+1}}  \widehat{\mathcal{L}}^{\toc}_n(\beta)$, where
    \begin{align}
        \widehat{\mathcal{L}}^{\toc}_n(\beta)= & \allowdisplaybreaks \diag(\widehat{X}\beta)\dmat^c\R\bpiInv\diag(\widehat{X}\beta) - 2\diag(y)\bpiInv\R\dmat^c\diag(\widehat{X}\beta)
    \end{align}
    \item  Compute $\widehat{\mu}_{n,a}^{\toci}$, for $a\in [k]$, where
    {\small
    \begin{equation}\label{DR}
        \widehat{\mu}_{n,a}^{\toci} = \frac{1}{n}\sum_{i=1}^n \left(\widehat{\beta}^a + \widehat{\beta}^{k+1}f^a(x_i,\widehat{\theta}_n)\right) + \frac{1}{n}\sum_{i=1}^n\frac{\R_{ai}}{\bpi_{ai}}  \left(y_{ai}-\widehat{\beta}^a - \widehat{\beta}^{k+1}f^a(x_i,\widehat{\theta}_n)\right).
    \end{equation}}
    \item Output $\widehat{\mu}_{n,c}^{\toci}=\sum_{a=1}^k c_a\widehat{\mu}_{n,a}^{\toci}$. 
\end{enumerate}
\end{algorithmic}
\end{algorithm}

\bibliographystyle{ecta-fullname} 
\bibliography{ref}  

\newpage 
\begin{appendix}
\section{Data Application}\label{Section:Simulation}
We now demonstrate how the results from earlier sections can be applied in practice. We study a network experiment based on the data in \cite{cai2015social}. Section \ref{BackgroundandDataset} describes the background and the dataset. Section \ref{complexity} uses the comlexity metric $\sigma_{\max}\left(\Ome^c\right)$ to understand the strengths and weaknesses of different designs in this setting. Section \ref{subsecion_sim} introduces simulation designs and discusses simulation results.
\subsection{Background and Dataset}\label{BackgroundandDataset}
\indent \cite{cai2015social} examines how social networks influence weather insurance adoption in rural China. The unit of the experiment is a household and the primary outcome of interest is weather insurance adoption, a binary variable indicating whether a household purchases insurance after an information session. Some household characteristics are observed, including demographics, rice production, income, and past experiences with natural disasters. The social network information is collected through a friend-nomination survey.\footnote{For more detailed information, see Section II-B of \cite{cai2015social}.}

Households were randomly assigned to two rounds of information sessions. In each round, a simple information session and an intensive information session were held simultaneously. The two rounds of information sessions were three days apart.
Households were randomized into four treatment arms: first-round simple, first-round intensive, second-round simple, and second-round intensive. The network effects on the insurance take-ups are measured by the average differences in insurance purchase decisions among second-round participants with different numbers of friends who were invited to the first round. 


The experimental design is a village-level stratified randomization. Households were stratified according to household sizes and areas of rice production per capita. The computer code used for randomization in \cite{cai2015social} is not immediately available.\footnote{A brief description of the randomization procedure is given in footnote 8 of \cite{cai2015social}.} As a result, we decide to construct the randomization procedures based on the description of the paper and the details can be found in Appendix \ref{A:treatment_assignments}. Any implications we draw below should not be related to the original paper.

We briefly describe the social networks used in the data. The friend nomination graph is a directed graph with edges pointing from the nominators to the nominees. The network has 4806 households and 41 disjoint weak components.The average outdegree is 3.5, the maximum indegree is 18, and the average within-village path length is 2.7. We plot the second largest component of the network in Figure \ref{fig:network}. The network has clear community structures, as most friendship ties form within natural villages. 
{\small 
\begin{figure}
    \caption{The Second Largest Component of the Network in the Simulation}
    \centering
    \includegraphics[scale=0.5]{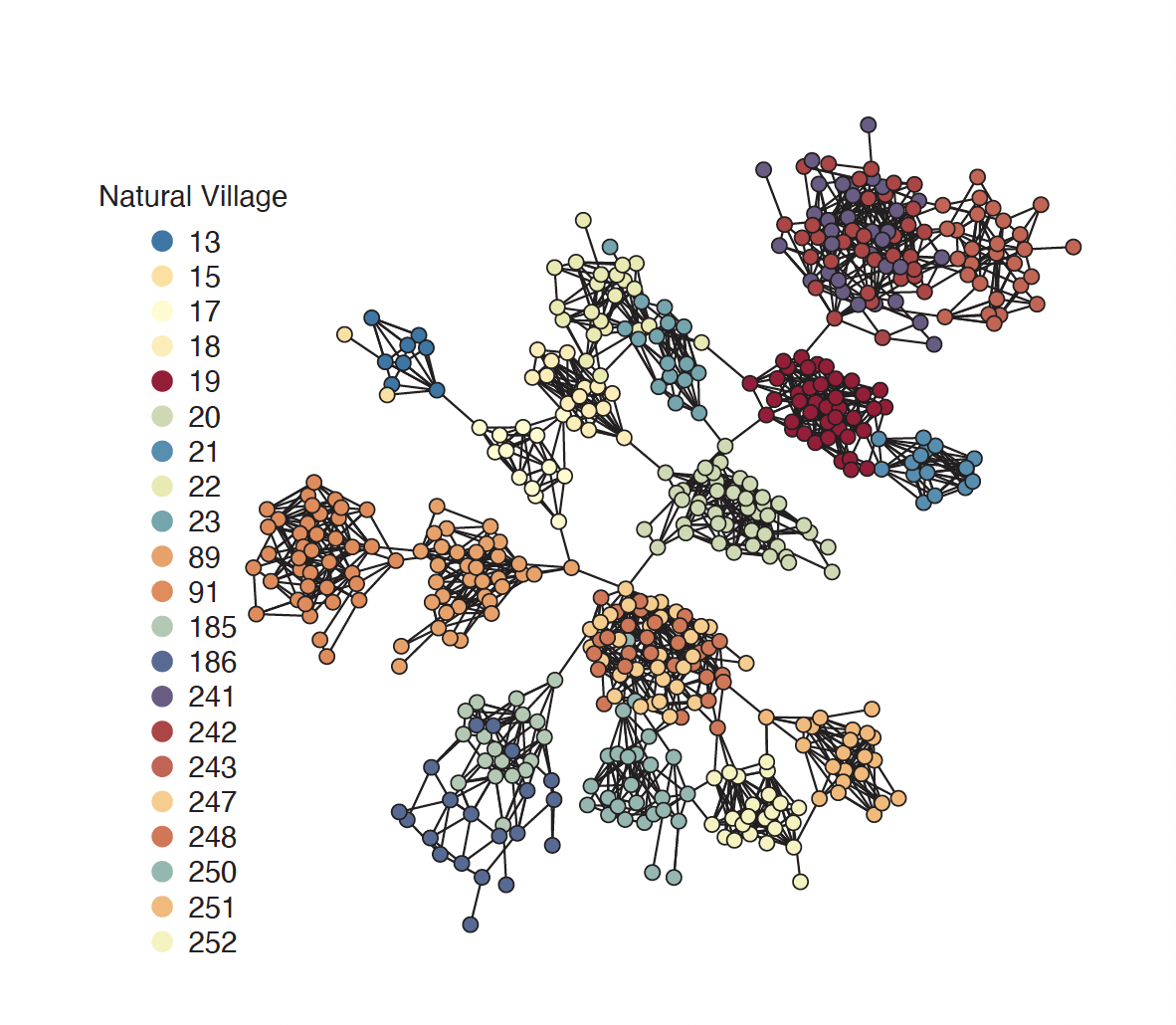}
    \label{fig:network}\\
    \legend{Figure \ref{fig:network} plots the second largest component used in the simulation. Each node represents a household. Two nodes are connected if one of the pair nominates another as a friend. Natural villages are groups of households and are the basis of randomization.}
\end{figure}
}
There are 12 exposure mappings considered in the paper. We include a subset of them in Table \ref{exposure_mapping}. Exposures 8–12 are defined analogously to exposures 3–7, replacing the SRS with the SRI. Motivated by Column (5) in Table 2 of \cite{cai2015social}, we focus on the comparisons between exposures 3v4, 3v5, 3v6, 4v5 and 4v6, which we treat as the key parameters of interest to measure the effects of social networks. 
\begin{table}
\label{exposure_mapping}
    \caption{A List of Some Exposure Mappings Considered in \cite{cai2015social}}
    \begin{tabular}{c c c c  } \hline
    Index & Description & Index & Description\\
    \hline
    1   &  In FRS   &  5 & In SRS, with  friends in FRI\\
    2   & In FRI   & 6 & In SRS, with two friends in FRI   \\
    3  & In SRS, with no friends FRS or FRI  &   7 & In SRS, with more than two friends in FRI\\
 4  & In SRS, with friends only in FRS\\
    \hline
    \end{tabular}
\legend{ The abbreviations stand for the treatment arms. FRS: First Round Simple, FRI: First Round Intensive, SRS: Second Round Simple, and SRI: Second Round Intensive.}
\end{table}
\subsection{Use $\sigma_{\max}\left(\Omega^c\right)$ to understand different designs}\label{complexity}
In this section, we compare three designs using our proposed measure $\sigma_{\max}\left(\Omega^c\right)$ from Section \ref{Section:Estimation}. The purpose of this demonstration is to show how to use the  measure to understand the relative strengths and weaknesses of various designs. 
We consider multiple experimental designs: A) a finely-stratified village-level randomization, B) a village-level randomization, and C) Bernoulli designs with varying assignment probabilities.

The finely-stratified village-level randomization (Design A) is inspired by the experimental design in \cite{cai2015social}. We partition households in each natural village into four groups based on their household sizes and rice production areas. Unis within each stratum are completely randomized into four treatment arms. For details, see Appendix \ref{A:treatment_assignments}.
For the village-level randomization (Design B), we randomly assign households within each village to four arms with proportions (0.1,0.1,0.4,0.4). For Bernoulli designs, we consider different assignment probabilities to four treatment arms: (1/4,1/4,1/4,1/4), (1/5,1/5,3/10,3/10), (1/6,1/6,4/12,4/12) and (1/9,2/9,4/12,4/12). We label them Designs C.1, C.2, C.3 and C.4, respectively.

Table \ref{TableD} below documents $\sigma_{\max}\left(\Omega^c\right)$ values for different designs. We make two comments about the tables:
\begin{enumerate}
    \item We first observe that many $\infty$ symbols appear in the table. These arise in cases where some units have zero assignment probability to certain exposures. For instance, under the finely-stratified village-level randomization (Design A), 34 units have zero probability of being assigned to exposure 3, and 1,067 units have probabilities less than 0.01. These zeros and near-zeros are driven by small strata: if a unit and all its friends belong to a small stratum, it may never encounter a situation where none of its friends are assigned to the first round.\footnote{For example, if a unit and its five friends form a single stratum and at least one of them is always assigned to the first round, then the unit can never be in a state where none of its friends are treated in the first round.} A similar pattern occurs for exposure 6 under the same design, as well as under the village-level randomization (Design B). OLS estimators under these designs may fail to capture meaningful causal effects, and inverse-probability weighted estimators, even excluding units with zero assignment probabilities, can suffer from high variance.\footnote{For example, for Design A, if we exclude units with zero assignment probabilities, $\sigma_{\max}\left(\Ome^c\right)=924.96$ when comparing exposures 3 vs.4, $\sigma_{\max}\left(\Ome^c\right)=113.25$ when comparing exposures 4 vs.5, and $\sigma_{\max}\left(\Ome^c\right)=224.35$ when comparing exposures 4 vs.6. }
    \item Secondly, examining the columns for the Bernoulli designs (Designs C.1–C.3), we observe an inherent tradeoff: assigning units to the first round with high probabilities results in few units under exposure 3, which may compromise the statistical power for comparisons such as 3 vs. 4, 3 vs. 5, and 3 vs. 6; conversely, using low assignment probabilities leads to fewer units in exposure 6, potentially limiting our ability to detect nonlinear network effects. Based on these observations, we may fine-tune the first-round assignment probabilities to ensure that only a small share of units are assigned to the first round overall, with relatively more units assigned to the first-round intensive treatment than to the first-round simple treatment, arriving at Design C.4. If all exposure comparisons in Table \ref{TableD} are of primary interest and the ATEs are expected to be similar in magnitude, the proposed measure $\sigma_{\max}\left(\Ome^c\right)$ would recommend selecting Design C.4, among the six designs in the table.
\end{enumerate}
\begin{center}
\begin{table}
    \caption{Comparing $\sigma_{\max}\left(\Ome^c\right)$ for different designs}\label{TableD}
    \begin{tabular}{c |c c c c c c}
    \hline 
    \diagbox{Exposure Comparisons}{Designs}& A& B &C.1 &C.2 &C.3 &C.4 \\
     \hline
    Exposure 3 vs. Exposure 4     & $\infty$ & 43.16 &  314.84 & 110.32 & 59.69 & 59.80 \\
    Exposure 3 vs. Exposure 5  & $\infty$ & 42.43 & 314.94&  110.43 & 59.82 & 59.69 \\
     Exposure 3 vs. Exposure 6  & $\infty$ & $\infty$ &  315.57 & 110.91 & 146.16 & 83.07 \\
    Exposure 4 vs. Exposure 5 & $\infty$ & 46.02 &  38.74 & 31.14 & 32.10 & 48.51 \\
   Exposure 4 vs. Exposure 6 & $\infty$ & $\infty$ &  81.69 &  110.29 & 146.29 & 83.17\\
    \hline
    \end{tabular}
    \legend{This table reports $\sigma_{\max}\left(\Ome^c\right)$ for comparing various designs. Design A is a finely-stratified village-level randomization, Design B is a village-level randomization, and Design C.1-Design C.4 are Bernoulli designs with probabilities (1/4,1/4,1/4,1/4), (1/5,1/5,3/10,3/10), (1/6,1/6,1/3,1/3), and (1/9,2/9,1/3,1/3), respectively.  }
\end{table}
\end{center}
\subsection{Simulation Design}\label{subsecion_sim}
We report simulation results for comparing exposures 3 vs.4 and comparing exposures 3 vs.6 using Design C.2 and Design C.4. The goal of this simulation is to illustrate implications of the complexity metric $\sigma_{\max}\left(\Ome^c\right)$ as well as the behavior of the various estimators. 

We compare the following 7 estimators: 1) IPW estimator, 2) inverse-probability weighted WLS estimator (WLS), 3) QMLE-GR estimator with a logit model (Logit),\footnote{We set $\omega_{ai}=\pi_{ai}$ where $\omega_{ai}$ is defined in equation (\ref{M-pop}). This is to mimic the exercise where researchers estimate a logit model without any weighting.}, 4) Opt-GR estimator with a linear model (Opt Linear), 5) Opt-GR estimator with a logit model (Opt Logit), 6) Opt-I GR estimator with imputations using the WLS model (Opt-I WLS), and 7) Opt-I GR estimator with imputations using the logit model (Opt-I Logit). All adjustment models have a separate intercept for each arm and the same coefficients on the covariates. 

For each comparison of two exposures, we impute the potential outcomes in two ways. In the first simulation scenario, we impute the potential outcomes using a logistic model with coefficients estimated from the data. In this scenario, barring finite-sample issues, QMLE-GR, Opt-GR, and Opt-I GR estimators are expected to work similarly well and show improvement over the baseline IPW estimator. In the second simulation scenario, we impute the potential outcomes such that the Opt-GR estimators will have efficiency gains over the QMLE-GR estimators. This scenario is used to demonstrate the theoretical guarantee for the Opt-GR and Opt-I GR estimators.  We refer to the first simulation scenario as \textit{Sim-Impute} and the second simulation scenario as \textit{Sim-Optimal}. More details for data constructions, imputations, and implementations can be found in Appendix \ref{AppendixSimulationDetails}.
\subsection{Simulation Results}
Formal and complete simulation results are included in Appendix \ref{AppendixSimulationResults}. We high a subset of the results in Table \ref{WLS_MSE} and Table \ref{OPT_MSE} for discussion:
\begin{enumerate}
    \item When one compares across all tables, the variances of estimators generally increase as $\sigma_{\max}\left(\Ome^c\right)$ increases. For example, comparing the sample-size normalized MSE of WLS estimators in Table \ref{WLS_MSE} in the Sim-Impute Scenario, one finds a larger $\sigma_{\max}\left(\Ome^c\right)$ generically translates to a larger MSE for the WLS estimators.   
\begin{table}
    \caption{Normalized Mean Squared Errors of WLS Estimators in the Sim-Impute Scenario }
    \centering
    \begin{tabular}{cc cc c}
    Exposure Comparisons & \multicolumn{2}{c}{3 vs.4} & \multicolumn{2}{c}{3 vs.6}\\
    \hline
    Design& C.2 & C.4 & C.2 & C.4 \\
    $\sigma_{\max}\left(\Ome^c\right)$ & 110.32& 59.80 &110.91  &48.51 \\
    \hline 
    MSE (WLS)       & 8.95  &7.81 & 13.88 & 9.13 \\
    \end{tabular}
    \label{WLS_MSE}
    \legend{The numbers are selected from Table \ref{Table:34DesignC2Impute}, \ref{Table:36DesignC2Impute}, \ref{Table:34DesignC4Impute}, and \ref{Table:36DesignC4Impute}. All MSEs are multiplied by the sample sizes.}
\end{table}
\item  The Opt-GR estimators (Opt Linear and Opt Logit) bring variance reductions but also face bias-variance trade-offs in the finite sample. For example, in Table \ref{OPT_MSE}, the Opt-GR linear estimator has a variance 10$\sim$15\% lower than that of the WLS estimator. However, the Opt-GR estimators can also incur a finite sample bias. The practical performance of Opt-GR estimators tends to become worse as $\sigma_{\max}\left(\Ome^c\right)$ gets larger.

\item In our simulations, the Opt-I WLS and Opt-I Logit estimators perform reasonably well in all cases. The Opt-I WLS and Opt-I Logit estimators are less efficient compared with the Opt Linear and Opt Logit estimators in terms of theoretical asymptotic variance, but the loss of efficiency appears to be small and the Opt-I OLS and Opt-I Logit estimators have better finite sample performance. Taken together and based on our simulation results, we consider the Opt-I OLS and Opt-I Logit as viable alternatives for Opt Linear and Opt Logit in many practical settings.

\end{enumerate} 
\begin{table}
    \caption{Normalized Biases and Variances of WLS, OPT-GR (Linear) and OPT-I GR (Linear) Estimators in the Sim-Optimal Scenario }
    \centering
    \begin{tabular}{cc cc c| c cc cc }
    Comparisons & \multicolumn{2}{c}{3 vs.4} & \multicolumn{2}{c|}{3 vs.6}  & \multicolumn{2}{c}{3 vs.4} & \multicolumn{2}{c}{3 vs.6}\\
    \hline
    Design& C.2 & C.4 & C.2 & C.4 &    Design& C.2 & C.4 & C.2 & C.4 \\
    $\sigma_{\max}\left(\Ome^c\right)$ & 110.32& 59.80 &110.91  &48.51   
    &$\sigma_{\max}\left(\Ome^c\right)$ & 110.32& 59.80 &110.91  &48.51   \\
    \hline
   $\textrm{Bias}^2$ (WLS) & 0.06 &0.01  & 0.02  &0.03  &Var.(WLS)  & 13.27 & 17.66 & 10.87 & 12.52  \\
    $\textrm{Bias}^2$ (Opt-GR) & 1.45 &0.01  &0.00  &0.48  &Var.(Opt-GR)  & 11.38 & 15.65 & 9.85 & 11.32  \\
 $\textrm{Bias}^2$ (Opt-I GR) & 0.20 &0.01  &0.00  &0.00  &Var.(Opt-I GR)  & 11.19 & 15.47 & 10.25 & 11.29  \\
    \end{tabular}
        \label{OPT_MSE}
    \legend{The numbers are selected from Table \ref{Table:34DesignC2Optimal}, \ref{Table:36DesignC2Optimal}, \ref{Table:34DesignC4Optimal}, and \ref{Table:36DesignC4Optimal}. Squared biases and variances are multiplied by the sample sizes.}
\end{table}
\section{Additional Results in Section \ref{Section:Estimation} }\label{other_estimators}
We define a few more estimators.
\begin{example}[Hajek (HA) estimator]
\begin{equation}
    \hat{\mu}^{\thj}_n = \left(\onesmat'\bpiInv \R \onesmat\right)^{-1} \onesmat' \bpiInv \R y. 
\end{equation}
Its probability target is $\frac{1}{n}\onesmat' y$.
\end{example}
\begin{example}[Completely Imputed (CI) estimators]
\begin{equation}
    \hat{\mu}^{\tci}_n=\frac{1}{n} \onesmat'\X \left(\X' \Om \R \X \right)^{+} \X'\Om \R y  .
\end{equation}
Its probability target is $\frac{1}{n}\onesmat' \X b^{\twls}_n$.
\end{example}
\begin{example}[Missing Imputed (MI) estimators]
\begin{equation}
     \mu^{\tmi}_n=\frac{1}{n} \onesmat'\left(\I_{kn} - \left(\R-\I_{kn} \right) \X \left(\X' \Om \R \X \right)^{+} \X'\Om  \right) \R y.
\end{equation}
Its probability target is $\frac{1}{n}\onesmat' \bpi y +\frac{1}{n}\onesmat'(\I_{kn}-\bpi)\X b^{\twls}_n$. 
\end{example}

The following lemma gives conditions under which the CI and MI estimators are consistent estimators of the average potential outcomes. These results are stated in an unpublished work \cite{middleton2021unifying3}. See also \cite{brewer1979class} and \cite{wright1983finite}. 
\begin{lemma}\label{lemma:CIMI}
\begin{enumerate}[label=(\roman*)]
    \item If the columns of the matrix $\onesmat \in \mathbb{R}^{kn\times k}$ are in the column space of the matrix $\pi\Om\X$, $\nu_n^{\tci}=\frac{1}{n}\onesmat'y$.
    \item If the columns of the matrix $\onesmat \in \mathbb{R}^{kn\times k}$ are in the column space of the matrix $\left(\bpi^{-1}-I_{kn}\right)^{-1}\Om\X$, $\nu_n^{\tmi}=\frac{1}{n}\onesmat'y$.
\end{enumerate}
\end{lemma}
\begin{proof}
   We have $\nu^{\tci}_n=\frac{1}{n}\onesmat \X b^{\twls}_n$. The difference from average potential outcomes is:
    \begin{equation}\label{CIConditionA}
      \frac{1}{n}\onesmat' y- \frac{1}{n}\onesmat' \X b^{\twls}_n.
    \end{equation}
   Note,
    $$\onesmat' (y-\X b^{\twls}_n)=\onesmat'\bpiInv\Om^{-1}\Om\bpi(y-\X b^{\twls}_n)$$
    By the definition of WLS,    
    \begin{equation*}
        \X'\Om\bpi(y-\X b^{\twls}_n)=0.
    \end{equation*}
    Thus equation (\ref{CIConditionA}) is 0 if each column of $\Om^{-1}\bpiInv\onesmat$ is in the column space of $\X$, which is the stated condition after rearrangement.

        The MI estimators converge to $\frac{1}{n}\onesmat' \bpi y +\frac{1}{n}\onesmat'(\I_{kn}-\bpi)\X b^{\twls}_n$. The difference from average potential outcomes is:
    \begin{align}\label{MICondition1}
    \begin{split}
       \frac{1}{n}\onesmat' y- \frac{1}{n}\onesmat' \pi y -\frac{1}{n}\onesmat'(\I_{kn}-\bpi)\X b^{\twls}_n   =  \frac{1}{n}\onesmat'(\I_{kn}-\bpi)(y-\X b^{\twls}_n ). 
    \end{split}
    \end{align}
   Note,
   \begin{align}
    & \onesmat'(\I_{kn}-\bpi) (y-\X b^{\twls}_n)=\onesmat'(\I_{kn}-\bpi)\bpiInv\Om^{-1}\Om\bpi(y-\X b^{\twls}_n) \\
    =&  \onesmat'(\bpiInv-\I_{kn})\Om^{-1}\Om\bpi(y-\X b^{\twls}_n).
   \end{align}

    By the definition of WLS, the quantity  (\ref{MICondition1}) is 0 if each column of $\Om^{-1}(\bpiInv-\I_{kn})\onesmat$ is in the column space of $\X$, which is the stated condition after rearrangement.
\end{proof}

The following lemma gives conditions under which a WLS estimator is algebraically equivalent to a GR estimator:

\begin{lemma}\label{lemma:WLS_GR}
If $\frac{1}{n}\sum_{i=1}^n x_{si}=0$ for all $s\in[p]$ and columns of the matrix $\R\bpiInv\onesmat \in \mathbb{R}^{kn\times k}$ are in the column space of the matrix $\R\Om\X$, $\widehat{\nu}_n^{\twls}=\widehat{\nu}_n^{\tgr}$.
In particular, $\widehat{\nu}_n^{\twls}=\widehat{\nu}_n^{\tgr}$ if $\frac{1}{n}\sum_{i=1}^n x_{si}=0$ for all $s\in[p]$ and $\Om=\bpiInv$.
\end{lemma}
\begin{proof}
Notice that if $\frac{1}{n}\sum_{i=1}^n x_{si}=0$ for all $s\in [p]$, we have $ \frac{1}{n}\onesmat'X =  \begin{bmatrix}
 \I_k \vert \0_{k\times p}
    \end{bmatrix}$. Hence $ \begin{bmatrix}
\I_k \vert \0_{k\times p}
\end{bmatrix}\widehat{b}^{\twls}=\onesmat'X \widehat{b}^{\twls}$. We then have:
\begin{equation}
\widehat{\nu}^{\twls}_n -\widehat{\nu}^{\tgr}_n = \frac{1}{n}\onesmat'\bpiInv\R\left(y-X \widehat{b}^{\twls}\right). 
\end{equation}
By the definition of WLS,
\begin{equation}
    X'\Om \R \left(y-X\widehat{b}^{\twls}\right)=0
\end{equation}
Hence $\widehat{\nu}^{\twls}_n -\widehat{\nu}^{\tgr}_n=0$ if $\R\bpiInv\onesmat$ is in the column space of $\R\Om X$. 
\end{proof}

\section{Additional Results in Section \ref{Section:Nonlinear}}
This section contains additional results for the Opt-I estimators. Define $X=\begin{bmatrix}
     \onesmat & \vert f(\theta_n)
\end{bmatrix}\in\mathbb{R}^{kn\times(k+1)}$ and $\dmat^c= \diag(c'\onesmat)\dmat \diag(c'\onesmat)\in \mathbb{R}^{kn\times kn}$.
\begin{assumption}\label{A:OCI} There exists a positive integer $N$ and a positive constant $c_{\ref{A:OCI},1}$ such that
    \begin{equation}
        \frac{1}{n}X'\dmat^c X \succeq c_{\ref{A:OCI},1}\ones{k+1},
    \end{equation}
    uniformly for all $n\geq N$.
\end{assumption}
We define:
\begin{equation}
 \beta^{\toci}_c =\begin{bmatrix}
    \beta^1,..., \beta^{k+1}
\end{bmatrix}=  \left(X'\dmat^c X\right)^{-1}X'\dmat^c y,   
\end{equation}
\begin{equation}
    \hat{\mu}_{n,a}^{\toci,\tL}=\frac{1}{n}\sum_{i=1}^n (\beta^a+\beta^{k+1}f^a(x_i,\theta_n)) + \frac{1}{n}\sum_{i=1}^n\frac{\R_{ai}}{\bpi_{ai}}  (y_{ai}-\beta^a-\beta^{k+1}f^a(x_i,\theta_n))
\end{equation}
 for $a\in[k]$ and $\theta_n\equiv\arg\min_{\theta\in\Theta} \mathcal{L}_n(\theta)$ defined in (\ref{M-pop}). Define $\hat{\mu}_{n,c}^{\toci,\tL}=\sum_{a=1}^k c_a\hat{\mu}_{n,a}^{\toci,\tL}$. The variance of $\hat{\mu}_{n,c}^{\toci,\tL}$ can be expressed as
\begin{equation*}
    \V(\hat{\mu}_{n,c}^{\toci,\tL}) =\frac{1}{n^2} c'\onesmat'\diag(y-X^{\toci}\beta)\hspace{1pt}\dmat\hspace{1pt} \diag(y-X^{\toci}\beta)\onesmat c.
\end{equation*}
\begin{theorem}\label{Thm:OCI}
Define $\hat{\mu}_{n,a}^{\toci}$, $\hat{\mu}_{n,c}^{\toci}$ and $\widehat{\beta}^{\toci}_c$ as in Algorithm \ref{OCI-GR}.  Under Assumptions \ref{A:BoundedFourthMoments}, \ref{A:OCI}, \ref{A:VanillaConsistency},  and \ref{A:Imputation}, and if $\viiii{\dmat}_1=O(1)$, $\sigma_{\max}((\widetilde{\dmat}\otimes \widetilde{\dmat})\circ \s)=o(n)$, $\sigma_{\max}\left(\dtildep\right)=O(1)$ and there exists a positive constant $c_{\ref{Thm:OCI},2}$ such that $n\V(\widehat{\mu}_{n,c}^{\toci,\tL})\geq c_{\ref{Thm:OCI},2}$ uniformly for all large $n$, then following results hold:
\begin{enumerate}[label=(\roman*)]
   \item We have $\left(\widehat{\mu}^{\toci}_{n,c}-\hat{\mu}_{n,c}^{\toci,\tL}\right)/\sqrt{\V(\widehat{\mu}_{n,c}^{\toci,\tL})}=o_p\left(1\right)$. 
   \item We have:
   \begin{equation}
    \V(\hat{\mu}_{n,c}^{\toci,\tL}) \leq \V(\hat{\mu}_{n,c}^{\tqmle}) =\frac{1}{n^2} c'\onesmat'\diag\left(y-f(\theta_n)\right)\hspace{1pt}\dmat\hspace{1pt} \diag\left(y-f(\theta_n)\right)\onesmat c.   
   \end{equation}
\item Define the variance bound 
\begin{equation}
  \tilde{\V}(\hat{\mu}_{n,c}^{\toci,\tL})= \frac{1}{n^2}c'\onesmat'\diag\left(y-X^{\toci}\beta^{\toci}_c\right)\tilde{\dmat}\diag\left(y-X^{\toci}\beta^{\toci}_c\right)\onesmat c\in \mathbb{R} 
\end{equation}
with an identified variance bound matrix $\tilde{\dmat}$. The plug-in variance-bound estimator
\begin{equation}
  \tilde{\V}(\hat{\mu}_{n,c}^{\toci,\tL})= \frac{1}{n^2}c'\onesmat'\diag\left(y-X^{\toci}\widehat{\beta}^{\toci}_c\right)\R\dtildep\R\diag\left(y-X^{\toci}\widehat{\beta}^{\toci}_c\right)\onesmat c\in \mathbb{R} 
\end{equation}
is consistent: $\widehat{\widetilde{\V}}(\hat{\mu}_{n,c}^{\toci,\tL})/\widetilde{\V}(\hat{\mu}_{n,c}^{\toci,\tL})\overset{p}{\to} 1$.
\item If there exists a continuous function $q:(0,1)\to\mathbb{R}$ such that $\lim\sup_{n\to\infty}\textrm{P}_n\left(\left|\widehat{\mu}_{n,c}^{\toci}-\mu_{n,c}\right|\geq q(\alpha)\sqrt{\V(\widehat{\mu}_{n,c}^{\toci,\tL})}\right)\leq \alpha$ for all $\alpha\in (0,1)$, then, $\lim\sup_{n\to\infty}\textrm{P}_n\left(\left|\widehat{\mu}_{n,c}^{\toci}-\mu_{n,c}\right|\geq q\left(\alpha\right)\sqrt{  \widehat{\widetilde{\V}}(\hat{\mu}_{n,c}^{\toci,\tL})}\right)\leq \alpha $. 
\end{enumerate}
\end{theorem} 

\section{Additional Results for the Network Experiment in Section \ref{Section:NetworkExperiments} }\label{network_experiments}
We provide lower-level conditions for the network experiments presented in Section \ref{Section:NetworkExperiments}. These conditions lead to the properties $\sigma_{\max}\left(\Ome\right)=O(1)$, $\sigma_{\max}((\widetilde{\dmat}\otimes \widetilde{\dmat})\circ \s)=o(n)$ and  $\sigma_{\max}\left(\dtildep\right)=O(1)$, which are required by Corollary \ref{C:Consistency} for $\sqrt{n}$-estimation and Corollary \ref{C:ConsistentVariance} for consistent variance bound estimation. 

\begin{assumption}\label{A:networkprobability} For all $a,b\in[k]$ and for all large $n$, there exists a positive $c_{\ref{A:networkprobability},1}\in(0,1)$, $\pi_i(a)>c_{\ref{A:networkprobability},1}>0$ for all $i\in [n]$. For a positive $c_{\ref{A:networkprobability},2}\in(0,1)$, $\pi_{ij}(a,b)>c_{\ref{A:networkprobability},2}>0$ for all $i\in[n]$ whenever $\pi_{ij}(k,l)\not =0$.
\end{assumption}
\begin{assumption}\label{A:dependencegraph}
The dependency graph of the random assignment vector $\R\ones{kn}$ has bounded degrees uniformly in $n$.
\end{assumption}
One can verify that the Assumptions \ref{A:networkprobability} and \ref{A:dependencegraph} are satisfied for the experimental design considered in Example \ref{Example:Network} when the network has a bounded degree uniformly in $n$.\footnote{ In some social network settings, there may exist nodes of very large degrees. This may cause two problems: 1) the assignment probability becomes too small, and/or 2) the dependence between exposure maps becomes too strong. In these cases, one may wish to restrict the parameters of interest to subgroups (e.g., people with low degrees) and consider specific exposure mappings for precise estimates (e.g., the sample average treatment effects condition on the event that all the high-degree nodes are assigned to treatment). }

These two assumptions are standard in the literature. Under the two assumptions, both the first-order design matrix $\dmat$ and the AS bound matrix $\tilde{\dmat}^{\tas}$ have bounded $l_1$-induced norms, because each column contains only finitely many nonzero entries with bounded magnitudes uniformly in $n$. The fact that $\sigma_{\max}((\widetilde{\dmat}\otimes \widetilde{\dmat})\circ \s)=o(n)$ and  $\sigma_{\max}\left(\dtildep\right)=O(1)$  are also satisfied by Proposition 6.2 in \cite{aronow2017estimating} and by Lemma \ref{LemmA:VanillaRootn}.

Recall that $p$ is the dimension of the pretreatment covariates. We consider the case of a same-slope adjustment and consider two types of imputation functions: linear models $f^a(x_i,\theta)=\gamma^a+x_i'\beta$, $a\in[k]$ and logistic models $f^a(x_i,\theta)=\frac{\exp(\gamma^a+x_i'\beta)}{1+\exp(\gamma^a+x_i'\beta)}$, $a\in [k]$.  For the QMLE-GR estimator with the linear model, the finite population criterion and the sample equivalent are 
\begin{equation*}
   \mathcal{L}_n^{\textnormal{ln}}(\theta)= \frac{1}{n}\sum_{a=1}^k\sum_{i=1}^n (y_{ai}-\gamma^a-x_i'\beta)^2, \hspace{4mm}   \widehat{\mathcal{L}}_n^{\textnormal{ln}}(\theta)= -\frac{1}{n}\sum_{a=1}^k\sum_{i=1}^n \frac{\R_{ai}}{\pi_{ai}} (y_{ai}-\gamma^a-x_i'\beta)^2,
\end{equation*}
where $\theta=(\{\gamma^a\}_{a=1}^k,\beta)\in\mathbb{R}^{k+p}$.
For the QMLE-GR estimator with the logistic regression model, they are
\begin{equation*}
 \mathcal{L}_n^{\textnormal{lg}}(\theta)=-\frac{1}{n}\sum_{a=1}^k\sum_{i=1}^n \left[y_{ai}(\gamma^a+x_i'\beta) - \log(1+\exp{(\gamma^a+x_i'\beta)})\right]
\end{equation*}
and
\begin{equation*}
\widehat{\mathcal{L}}_n^{\textnormal{lg}}(\theta)=-\frac{1}{n}\sum_{a=1}^k\sum_{i=1}^n \frac{\R_{ai}}{\pi_{ai}}\left[y_{ai}(\gamma^a+x_i'\beta) - \log(1+\exp{(\gamma^a+x_i'\beta)})\right],
\end{equation*}
where $\theta=(\{\gamma^a\}_{a=1}^k,\beta)\in\mathbb{R}^{k+p}$. Note we can also define 
\begin{equation*}
 \mathcal{L}_n^{\textnormal{lg},2}(\theta)=-\frac{1}{n}\sum_{a=1}^k\sum_{i=1}^n \pi_{ai} \left[y_{ai}(\gamma^a+x_i'\beta) - \log(1+\exp{(\gamma^a+x_i'\beta)})\right]
\end{equation*}
and
\begin{equation*}
\widehat{\mathcal{L}}_n^{\textnormal{lg},2}(\theta)=-\frac{1}{n}\sum_{a=1}^k\sum_{i=1}^n \R_{ai}\left[y_{ai}(\gamma^a+x_i'\beta) - \log(1+\exp{(\gamma^a+x_i'\beta)})\right].
\end{equation*}
For the Opt-GR estimators, we form the criterion using the setup in Section \ref{Opt-GMM} with the first-order design matrix $\dmat$. We choose the AS bound for a bounding matrix. The plug-in variance bound estimators are constructed using the formulae in Theorems \ref{Thm:QMLE}, \ref{Thm:NOHARM}, and \ref{Thm:OC}. 
For the Opt-GR estimator with the logistic model, we need more moment assumptions in order to satisfy Assumption \ref{A:GMM2new}-(i)-(b).
\begin{assumption}[Bounded 8th moments]\label{A:Bounded8thmoment}
For all $n$ and $z_i\in\{x_{1i},...,x_{pi}\}$, $\frac{1}{n}\sum_{i=1}^n z_i^8<C_{\ref{A:Bounded8thmoment}}<\infty$, 
where $C_{\ref{A:Bounded8thmoment}}$ is a finite constant.
\end{assumption}
The following theorems specialize Theorems \ref{Thm:QMLE}, \ref{Thm:NOHARM}, and \ref{Thm:OC} in the network experiment setting. 
\begin{theorem}\label{Theorem:NetworkLinearAdj}
For linear models: 
\begin{enumerate}
    \item Under Assumptions \ref{A:BoundedFourthMoments}, \ref{A:Invertibility},  \ref{A:networkprobability}, and \ref{A:dependencegraph}, Theorem \ref{Thm:QMLE} holds for the QMLE-GR estimator. In addition, under Assumption \ref{A:NOHARM}, Theorem \ref{Thm:NOHARM} holds for the No-harm-GR estimator, and, under Assumption \ref{A:OCI}, Theorem \ref{Thm:OCI} holds for the Opt-I GR estimator.
    \item Under Assumptions \ref{A:BoundedFourthMoments}, \ref{A:GMM1}-(ii), \ref{A:networkprobability}, and \ref{A:dependencegraph}, Theorem \ref{Thm:OC} holds for the Opt-GR estimator.
\end{enumerate}
For logistic models: 
\begin{enumerate}
    \item Under Assumptions \ref{A:BoundedFourthMoments}, \ref{A:networkprobability}, \ref{A:dependencegraph}, and \ref{A:Bounded8thmoment} and \ref{A:VanillaConsistency}-(i), (ii) and (ix), Theorem \ref{Thm:QMLE} holds for the QMLE-GR estimator. In addition, under Assumption \ref{A:NOHARM}, Theorem \ref{Thm:NOHARM} holds for the No-harm-GR estimator, and, under Assumption \ref{A:OCI}, Theorem \ref{Thm:OCI} holds for the Opt-I GR estimator.
    \item Under Assumptions \ref{A:BoundedFourthMoments}, \ref{A:GMM1}, \ref{A:networkprobability}, \ref{A:dependencegraph}, \ref{A:Bounded8thmoment} and \ref{A:GMM2new}-(iv), Theorem \ref{Thm:OC} holds for the Opt-GR estimator.
\end{enumerate}
\end{theorem}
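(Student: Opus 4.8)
The statement is a \emph{verification} result: each conclusion asserts that one of the abstract theorems of Sections \ref{Section:Variances}--\ref{Section:Nonlinear} applies to the network design of \cite{aronow2017estimating} once the imputation model is fixed to be linear or logistic. The plan is therefore to show that the primitive conditions listed in each part---the network assumptions \ref{A:networkprobability} and \ref{A:dependencegraph}, the moment assumptions \ref{A:BoundedFourthMoments} (and \ref{A:Bounded8thmoment} in the logit case), and the few identification conditions carried over---imply the high-level hypotheses of Theorems \ref{Thm:QMLE}, \ref{Thm:NOHARM}, \ref{Thm:OC} and \ref{Thm:OCI}, after which those theorems apply verbatim. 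The argument splits into a design-side part (verifying the spectral conditions \ref{A:FODesign} and \ref{A:SODesign}) and a model-side part (verifying the criterion and imputation conditions \ref{A:VanillaConsistency}, \ref{A:Imputation}, \ref{A:GMM1}, \ref{A:GMM2}).

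For the design side, I would first argue that \ref{A:networkprobability} and \ref{A:dependencegraph} force $|||\dmat|||_2=O(1)$. Bounded degree of the dependency graph makes all but a uniformly bounded number of entries in each row of $\dmat$ vanish, while the uniform lower bound on (joint) inclusion probabilities keeps the inverse-probability weights defining the nonzero entries bounded; hence the maximum absolute row sum of the symmetric matrix $\dmat$ is $O(1)$, and $|||\dmat|||_2\le |||\dmat|||_1$ gives \ref{A:FODesign}. The same reasoning applied to $\dtildep$ built from the Aronow--Samii matrix $\tilde{\dmat}^{\tas}$ yields $|||\dtildep|||_2=O(1)$, and the tensor bound $\sigma_{\max}((\tilde{\dmat}\otimes\tilde{\dmat})\circ\s)=o(n)$ follows from Proposition 6.2 of \cite{aronow2017estimating} together with the slice-sum estimate of Lemma \ref{LemmA:VanillaRootn}; symmetry of $\dmat$ additionally yields the column-norm condition \ref{A:NOHARM}-(i) needed for the No-harm and Opt-I parts. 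This step is shared by the linear and logistic cases.

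For the linear model the model-side checks are routine. The imputation $f^a(x_i,\theta)=\gamma^a+x_i'\beta$ has derivative equal to a fixed design row and vanishing second derivative, so every moment bound in \ref{A:Imputation} reduces to a moment of $y$ or $x$ controlled by \ref{A:BoundedFourthMoments}; the squared-loss criterion is globally quadratic, so $\frac{1}{n}\X'\X\to\mathbf{M}\succ 0$ (Assumption \ref{A:Invertibility}) delivers both the unique-minimum condition \ref{A:VanillaConsistency}-(ii) and the nonsingular-curvature condition \ref{A:VanillaConsistency}-(vii), while boundedness of the population coefficient (again by \ref{A:Invertibility} and \ref{A:BoundedFourthMoments}) lets one take a fixed compact $\Theta$. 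The analogous conditions \ref{A:GMM1}--\ref{A:GMM2} for the Opt-GR criterion hold for the same reasons. Hence Theorems \ref{Thm:QMLE}, \ref{Thm:NOHARM}, \ref{Thm:OC} and \ref{Thm:OCI} all apply, with the No-harm, Opt-GR and Opt-I variants additionally carrying over the imposed nondegeneracy conditions \ref{A:NOHARM}-(ii), \ref{A:GMM1}-(iv) and \ref{A:OCI}-(ii) respectively.

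The logistic case is where the real work lies, and it is the main obstacle. Unlike the quadratic criterion, the logistic log-likelihood and imputation are neither globally identified nor coercive, so compactness, unique identification and the curvature condition \ref{A:VanillaConsistency}-(i),(ii),(vii) (and their Opt-GR analogues \ref{A:GMM1}-(i),(ii),(iv)) cannot be extracted from moment primitives and are instead imposed directly in the statement. What \emph{can} be verified is every remaining moment bound, and here one must track how differentiation interacts with the covariates: each $\beta$-derivative of $\frac{\exp(\gamma^a+x_i'\beta)}{1+\exp(\gamma^a+x_i'\beta)}$ brings down a factor of $x_i$ (times a factor bounded uniformly on the compact $\Theta$), so first derivatives carry one power of $x_i$ and second derivatives two. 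The fourth-power derivative-moment requirement \ref{A:GMM2}-(i)-(b) then involves sums of $x_i^{8}$-type terms, which is precisely why \ref{A:Bounded8thmoment} is invoked; the same eighth-moment bound comfortably dominates the lower-order derivative-moment conditions of \ref{A:VanillaConsistency}-(iv)--(vi),(viii) and \ref{A:Imputation} entering the QMLE, No-harm and Opt-I variants. The delicate part of the proof is thus the bookkeeping that matches the order of each derivative to the covariate power it produces and confirms the stated moment assumption is sufficient across all four estimator variants; once this is done the high-level hypotheses hold and the general theorems again apply directly.
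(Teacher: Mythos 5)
Your proposal is correct and follows essentially the same route as the paper's proof: the design-side conditions (Assumptions \ref{A:FODesign}, \ref{A:SODesign}, and \ref{A:NOHARM}-(i)) are deduced from Assumptions \ref{A:networkprobability} and \ref{A:dependencegraph} via Proposition 6.2 of \cite{aronow2017estimating}, and the model-side conditions are checked by the same derivative-and-moment bookkeeping, with Assumption \ref{A:Bounded8thmoment} entering exactly where the fourth powers of the second derivatives of the logit imputation produce eighth-order covariate terms. The only cosmetic difference is in the linear case, where the paper shortcuts by identifying the QMLE-GR estimator with the GR estimator of Section \ref{Section:Estimation} (so Corollaries \ref{C:Consistency}, \ref{C:FirstOrderExpansion}, and \ref{C:ConsistentVariance} apply directly) and uses the closed-form solution of the linear Opt-GR problem to dispense with the compactness condition \ref{A:GMM1}-(i), whereas you re-verify the abstract assumptions directly; both are valid.
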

Proofs for this theorem can be found in Appendix \ref{proof_network_experiment}.
\begin{remark}
    With a CLT under local dependence \cite{chen2004normal}, one can construct a confidence interval using a normal approximation if the asymptotic variance $ \frac{1}{n} c'\onesmat'\diag(y-f(\theta_n))\hspace{1pt}\dmat\hspace{1pt} \diag(y-f(\theta_n))\onesmat c$ is uniformly bounded above and below by positive constants for large $n$. That is, there exist positive constants $c$ and $C$ such that the asymptotic variance is bounded between $c$ and $C$ uniformly for large $n$. The upper bound is satisfied under our assumptions. The lower bound is a typical assumption. If the assumption on the lower bound is violated, the estimator will converge to the true parameter with a faster-than-$\sqrt{n}$ rate, and the confidence interval using a normal approximation may not have the asymptotically correct coverage. 
\end{remark}
\section{Auxiliary Lemmas}
This section proves several auxiliary lemmas instrumental in the proofs below. Let $\|\cdot\|_2$ denote the Frobenius norm if applied to a matrix and the $l_2$ vector norm if applied to a vector.
\begin{lemma}{(Well-behaved WLS Design Matrix)}\label{LemmaDesign}
Under Assumption \ref{A:Invertibility}, and if there exists a positive $c$ such that $\lambda_{\min}(\Om\bpi)>c>0$ for all $n$, there exists an $n_0$ such that for all $n\geq n_0$, $\frac{1}{n}\X'\Om\pi\X\in\mathbb{R}^{(k+p)\times (k+p)}$ is invertible. Moreover, $\lambda_{\min}(\frac{1}{n}\X'\Om\bpi\X)$ is bounded away from 0 uniformly for $n\geq n_0$.
\end{lemma}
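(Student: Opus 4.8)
The plan is to establish the uniform lower bound on $\lambda_{\min}\!\left(\frac{1}{n}\X'\Om\bpi\X\right)$ directly by a Rayleigh-quotient comparison, since invertibility for large $n$ then follows immediately: a symmetric matrix whose smallest eigenvalue is bounded away from zero is nonsingular. The first observation is that $\Om$ and $\bpi$ are both diagonal, so $\Om\bpi$ is diagonal and symmetric positive definite, and consequently $\frac{1}{n}\X'\Om\bpi\X$ is symmetric. Its smallest eigenvalue therefore admits the variational characterization $\lambda_{\min}\!\left(\frac{1}{n}\X'\Om\bpi\X\right)=\min_{\|v\|_2=1} v'\frac{1}{n}\X'\Om\bpi\X v$, which is the object I would control.

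The key step is to sandwich the quadratic form between multiples of $v'\frac{1}{n}\X'\X v$. For any $v\in\mathbf{R}^{k+p}$, set $w=\X v$; since $\Om\bpi$ is symmetric with smallest eigenvalue $\lambda_{\min}(\Om\bpi)$, we have $w'\Om\bpi w\geq \lambda_{\min}(\Om\bpi)\,w'w\geq c_{\ref{A:Weights}}\,w'w$ by Assumption \ref{A:Weights}, whence $v'\frac{1}{n}\X'\Om\bpi\X v\geq c_{\ref{A:Weights}}\,v'\frac{1}{n}\X'\X v$. Next I would invoke Assumption \ref{A:Invertibility}: because $\frac{1}{n}\X'\X\to\mathbf{M}$ with $\mathbf{M}$ positive definite, the smallest eigenvalue converges, $\lambda_{\min}\!\left(\frac{1}{n}\X'\X\right)\to\lambda_{\min}(\mathbf{M})>0$, so there is an $N$ with $\lambda_{\min}\!\left(\frac{1}{n}\X'\X\right)\geq\frac{1}{2}\lambda_{\min}(\mathbf{M})$ for all $n\geq N$. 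Combining the two bounds, for every unit vector $v$ and every $n\geq N$, $v'\frac{1}{n}\X'\Om\bpi\X v\geq c_{\ref{A:Weights}}\lambda_{\min}\!\left(\frac{1}{n}\X'\X\right)\geq\frac{1}{2}c_{\ref{A:Weights}}\lambda_{\min}(\mathbf{M})$, and minimizing over unit vectors gives $\lambda_{\min}\!\left(\frac{1}{n}\X'\Om\bpi\X\right)\geq\frac{1}{2}c_{\ref{A:Weights}}\lambda_{\min}(\mathbf{M})>0$ uniformly in $n\geq N$. Since this lower bound is a positive constant independent of $n$, the matrix is invertible for all such $n$ and its smallest eigenvalue is bounded away from $0$ uniformly, which is exactly the claim.

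There is essentially no hard part here; the argument is a single Rayleigh-quotient comparison backed by the convergence of $\frac{1}{n}\X'\X$. The only point requiring a little care is the passage from matrix convergence to an eventual lower bound on the smallest eigenvalue, which I would justify by Weyl's inequality: once $\frac{1}{n}\X'\X$ is within $\epsilon$ of $\mathbf{M}$ in operator norm, $\lambda_{\min}\!\left(\frac{1}{n}\X'\X\right)\geq\lambda_{\min}(\mathbf{M})-\epsilon$, so taking $\epsilon=\tfrac12\lambda_{\min}(\mathbf{M})$ produces the required $N$.
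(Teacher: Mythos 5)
Your proposal is correct and follows essentially the same route as the paper's own proof: a Rayleigh-quotient comparison $v'\tfrac{1}{n}\X'\Om\bpi\X v\geq c_{\ref{A:Weights}}\,v'\tfrac{1}{n}\X'\X v$ via Assumption \ref{A:Weights}, combined with Weyl's eigenvalue perturbation inequality (the paper's citation of Theorem 6.3.8 in \cite{horn2012matrix}) to bound $\lambda_{\min}\left(\tfrac{1}{n}\X'\X\right)$ below by $\tfrac{1}{2}\lambda_{\min}(\mathbf{M})$ for large $n$. The only difference is the order of the two steps, which is immaterial.
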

\begin{proof}
By Assumption \ref{A:Invertibility}, there exists an $n_0$ such that for all $n\geq n_0$, $\lambda_{\min}(\frac{1}{n}\X'\X)\geq c_{\ref{A:Invertibility}}$ for all $n\geq n_0$. Since $\Om$ and $\pi$ are positive diagonal matrices, $\frac{1}{n}\X'\Om\pi\X\in\mathbb{R}^{(k+p)\times (k+p)}$ is a positive semidefinte matrix. We only need to check its smallest eigenvalue. For any $t\in\mathbb{R}^{k+p}$,
\begin{align*}
    t'\frac{1}{n}\X'\Om\bpi\X t & \geq c \frac{1}{n}\|\X t\|_2^2 =c\times \left(t'\frac{1}{n}\X'\X t\right) \geq cc_{\ref{A:Invertibility}}\|t\|_2^2 >0,
\end{align*}
where the first inequality is justified by   $\lambda_{\min}(\Om\bpi)\geq c$. 
This inequality holds for all $n\geq n_0$, proving the statement. 
\end{proof}
\begin{lemma}{(Bounded WLS Coefficients)}\label{LemmaBoundedCoefs}
Let $b^{\twls}_n=(\X\Om\bpi \X)^{+}(\X\Om\bpi y)$. Under Assumptions \ref{A:BoundedFourthMoments} and \ref{A:Invertibility} and if  there exists a positive $C$ such that $\lambda_{\max}(\Om\bpi)<C$ for all $n$,
$\|b^{\twls}_n\|_2=O(1)$
\end{lemma}
\begin{proof}
We showed in Lemma \ref{LemmaDesign} that  $\X'\Om\bpi \X$ is invertible for large $n$, so we shall assume $b^{\twls}_n=(\X\Om\bpi \X)^{-1}(\X\Om\bpi y)$. For all large enough $n$,
\begin{align}\label{xybound}
   & \|\frac{1}{n}\X\Om\bpi y\|_2^2  =  \frac{1}{n^2}y'\Om\bpi\X'\X\Om\bpi y \leq \lambda_{\max}(\frac{1}{n}\X'\X) \times \frac{1}{n}\|\Om\pi y\|_2^2\\
    & \leq \lambda_{\max}(\frac{1}{n}\X'\X) \times \frac{C^2}{n}\|y\|_2^2\leq \|\frac{1}{n}\X'\X\|_2 \times\frac{C^2}{n}\|y\|_2^2 <\infty.    
\end{align}
by Assumptions \ref{A:BoundedFourthMoments} and \ref{A:Invertibility} and the fact that $\lambda_{\max}(\Om\bpi)<C$.
Then,
\begin{align}
    & b^{\twls'}_nb^{\twls}_n
     = (\frac{1}{n}\X'\Om\bpi y)'(\frac{1}{n}\X'\Om\bpi \X)^{-1}(\frac{1}{n}\X'\Om\bpi \X)^{-1}(\frac{1}{n}\X'\Om\bpi y)\\
    & \leq \left(\lambda_{\min}(\frac{1}{n}\X'\Om\pi\X)\right)^{-2}\times   \|\frac{1}{n}\X'\Om\bpi y\|_2^2 =O(1)
\end{align}
by Lemma \ref{LemmaDesign}.
\end{proof}
Let $z\in\mathbb{R}^{kn}$ be an arbitrary vector. We define the IPW estimator:
\begin{equation*}
    \widehat{\delta}^{\tht} = \frac{1}{n}\onesmat'\bpiInv \R z \in \mathbb{R}^k
\end{equation*}
\begin{lemma}[Rate of the HT Estimators]\label{HTcp1}
If $\frac{1}{n}\|z\|_2^2=O(1)$, $n\V\left(\widehat{\delta}^{\tht}\right)=O\left(\sigma_{\max}\left(\dmat\right)\right)$.
\end{lemma}
\begin{proof}
Notice we can write:
\begin{align*}
\begin{split}
    \widehat{\delta}^{\tht}  = \frac{1}{n}\onesmat'\bpiInv \R \diag(z)\ones{kn} = \frac{1}{n}\onesmat'\diag(z)\bpiInv \R \ones{kn}
\end{split}
\end{align*}
We have for any $t\in\mathbb{R}^k$
\begin{align}
\begin{split}
      &  n\V(t'\widehat{\delta}^{\tht}) =  n \left( t'\frac{1}{n}\onesmat'\diag(z)\V(\bpiInv \R \ones{kn}) \diag(z)\onesmat t\frac{1}{n}\right)\\
     & = \frac{1}{n}(t'\onesmat'\diag(z))\dmat(\diag(z)\onesmat t)\leq \sigma_{\max}\left(\dmat\right) \frac{1}{n}\|t'\onesmat'\diag(z)\|_2^2\\
     & = \sigma_{\max}\left(\dmat\right) \times \frac{1}{n}t'\onesmat'\diag(z)\diag(z)\onesmat t\\
     & \leq \sigma_{\max}\left(\dmat\right) \times \|t\|_2^2 \times \frac{1}{n} \lambda_{\max}(\onesmat'\diag(z)\diag(z)\onesmat )\\
     & \leq \sigma_{\max}\left(\dmat\right) \times \|t\|_2^2 \times \frac{1}{n} \Tr(\onesmat'\diag(z)\diag(z)\onesmat)\\
     & = \sigma_{\max}\left(\dmat\right) \times \|t\|_2^2 \times  \frac{1}{n} \|z\|_2^2  =  O\left( \|t\|_2^2 \sigma_{\max}\left(\dmat\right) \right),
\end{split}
\end{align}
where $\Tr$ is the trace operator of a matrix and the last equality holds by our assumption on $\frac{1}{n}\|z\|_2^2$. We conclude that the largest eigenvalue of the positive semidefinite matrix $\V(\widehat{\delta}^{HT})$ is of the order $O( \sigma_{\max}\left(\dmat\right)/n)$. $\|n\V(\widehat{\delta}^{\tht})\|_2\leq \sqrt{k}\lambda_{\max}\left(n\V(\widehat{\delta}^{\tht})\right) =O(\sigma_{\max}\left(\dmat\right))$, proving the statement.
\end{proof}

\begin{lemma}\label{LemmaWLS}
Under Assumptions \ref{A:BoundedFourthMoments} and \ref{A:Invertibility}, and if there exist positive $c$ and $C$ such that $0<c<\lambda_{\min}(\Om\bpi)<\lambda_{\max}(\Om\bpi)<C$ for all $n$, and if $\sigma_{\max}\left(\dmat\right)/n=o(1)$, $\widehat{b}^{\twls}_n-b^{\twls}_n=O_p(\sqrt{\sigma_{\max}\left(\dmat\right)}n^{-\frac{1}{2}}).$
\end{lemma}
\begin{proof}
    We first show the "numerator" vector $\frac{1}{n}\X'\Om\R y$ is consistent for $\frac{1}{n}\X'\Om\bpi y$. Notice $\frac{1}{n}\X'\Om\R y$ is an unbiased estimator for $\frac{1}{n}\X'\Om\bpi y$. We only need to show that the variance is of the order $O\left(\sigma_{\max}\left(\dmat\right)/n\right)$.
       Let $\X_i$ be the column vector created from the $i$th column of $\X$. Then the $i$th element of $\frac{1}{n} \X' \Om \R y $ can be written,    
    \begin{align*}
        & \{ \frac{1}{n} \X' \Om \R y \}_{i } = \frac{1}{n} \X_i' \Om \R y = \frac{1}{n} \ones{kn}' \diag(\X_i) \Om \R y =  \frac{1}{n} \ones{kn}' \R \Om \diag(\X_i)  y\\
        = & \ones{k}' \frac{1}{n}\onesmat' \bpiInv \R  \bpi \Om \diag(\X_i) y=  \ones{k}' \frac{1}{n}\onesmat' \bpiInv\R  \bpi \Om (\X_i\circ y).
    \end{align*}
  Under Assumption \ref{A:BoundedFourthMoments}, we have 
  \begin{align*}
      \frac{1}{n}\|\bpi\Om (\X_i\circ y)\|_2^2 & \leq C^2\times \frac{1}{n} \|(\X_i\circ y)\|_2^2\leq C^2 \sqrt{\frac{1}{n} \|\X_i\|_4^4} \times \sqrt{\frac{1}{n} \|y\|_4^4} <\infty,
  \end{align*}
   by the fact that $\lambda_{\max}\left(\bpi\Om\right)<C$, the Cauchy-Schwartz inequality and Assumption \ref{A:BoundedFourthMoments}. Thus by Lemma \ref{HTcp1}, we have 
   \begin{equation}
\sqrt{\frac{\sigma_{\max}\left(\dmat\right)}{n}}\| \frac{1}{n} \X' \Om \R y - \frac{1}{n}\X\Om\bpi y\|_2 = O_p\left(1\right).
   \end{equation}
  \indent Similarly, the $(i,j)$ element of the WLS "denominator" matrix, can be written as:
    \begin{align*}
        \{ \frac{1}{n} \X' \Om \R \X \}_{ij} = & \ones{k}' \frac{1}{n}\onesmat' \bpiInv\R  \bpi \Om (\X_i\circ \X_j).
    \end{align*}
    Following the same argument as above, we can show that
   \begin{equation}
      \sqrt{\frac{\sigma_{\max}\left(\dmat\right)}{n}}\| \frac{1}{n} \X' \Om \R 
      \X- \frac{1}{n}\X\Om\bpi \X\|_2 = O_p\left(1\right).
   \end{equation}
Note by Weyl's inequality, the smallest eigenvalue of $\frac{1}{n} \X' \Om \R \X $ converges to the smallest eigenvalue of $\frac{1}{n} \X' \Om \bpi \X $ in probability. Thus for a positive and sufficiently small $\epsilon$ and by Lemma \ref{LemmaDesign}, $\mathbf{P}(\lambda_{\min}(\frac{1}{n} \X' \Om\R \X)>\epsilon)\to 1$. Thus we have 
\begin{equation}
    \sqrt{\frac{\sigma_{\max}\left(\dmat\right)}{n}}\| (\frac{1}{n} \X' \Om \R \X)^{+} - (\frac{1}{n}\X\Om\bpi \X)^{-1}\|_2 = O_p(1).
\end{equation}
Finally note the algebraic decomposition that for $\widehat{A},A\in\mathbb{R}^{k_1\times k_2}$ and $\widehat{B},B\in\mathbb{R}^{k_2\times k_3}$
    \begin{equation*}
        \widehat{A}\widehat{B}-AB = (\widehat{A}-A)(\widehat{B}-B) + (\widehat{A}-A)B + A(\widehat{B}-B),
    \end{equation*}
    Let $\widehat{A}=(\frac{1}{n} \X' \Om \R \X)^{+}$, $\widehat{B}=\frac{1}{n} \X' \Om \R y$, $A=(\frac{1}{n}\X\Om\bpi \X)^{-1}$ and $B=\frac{1}{n}\X\Om\bpi y$. We have:
    \begin{equation*}
\sqrt{\frac{\sigma_{\max}\left(\dmat\right)}{n}}\|\widehat{b}^{\twls}_n-b^{\twls}_n\|_2= O_p(1).
    \end{equation*}
\end{proof}
The following lemma shows that, in order to establish convergence for a symmetric matrix estimator (of a fixed dimension), it suffices to consider its bilinear forms.
\begin{lemma}\label{quad}
Consider a sequence of symmetric matrices $\widehat{A}_n\in\mathbb{R}^{k\times k},n=1,2..,$ and a symmetric matrix $A_n\in \mathbb{R}^{k\times k}$. If for every $t\in\mathbb{R}^k$, $t'\widehat{A}_nt - t'A_nt\overset{p}{\to}0$, then $\widehat{A}_n-A_n\overset{p}{\to}\0_{k\times k}$.
\begin{proof}
 Using the standard basis vectors in $\mathbb{R}^k$, one can show that the difference in diagonal entries converges in probability to 0. Then, looking at all the two-by-two principal submatrices, one can show the off-diagonal entries converge in probability to 0 as well.
\end{proof}
\end{lemma}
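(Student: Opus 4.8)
The plan is to reduce convergence of the matrix difference $D_n := \widehat{A}_n - A_n$ to the convergence of each of its finitely many entries, exploiting that the dimension $k$ is fixed and that both matrices are symmetric. First I would observe that for any fixed $t \in \mathbf{R}^k$ the hypothesis gives $t'D_n t = t'\widehat{A}_n t - t'A_n t = o_p(1)$. Applying this with the standard basis vectors $t = e_i$ immediately yields convergence of the diagonal entries: $(D_n)_{ii} = e_i' D_n e_i = o_p(1)$ for each $i = 1,\dots,k$.

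For the off-diagonal entries I would use a polarization identity. Fixing indices $i \neq j$ and taking $t = e_i + e_j$,
\begin{equation*}
(e_i+e_j)'D_n(e_i+e_j) = (D_n)_{ii} + (D_n)_{jj} + (D_n)_{ij} + (D_n)_{ji}.
\end{equation*}
Symmetry of $\widehat{A}_n$ and $A_n$ gives $(D_n)_{ij} = (D_n)_{ji}$, so the right-hand side equals $(D_n)_{ii} + (D_n)_{jj} + 2(D_n)_{ij}$. The left-hand side is $o_p(1)$ by hypothesis (with the fixed vector $t = e_i+e_j$), and the two diagonal terms are $o_p(1)$ by the previous step; rearranging therefore gives $2(D_n)_{ij} = o_p(1)$, hence $(D_n)_{ij} = o_p(1)$.

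Finally I would combine the entrywise statements. Since $k$ is fixed, $D_n$ has a fixed finite number $k^2$ of entries, each shown to be $o_p(1)$; a finite collection of terms that are individually $o_p(1)$ sums to $o_p(1)$, so $||D_n||_F^2 = \sum_{i,j}(D_n)_{ij}^2 = o_p(1)$, and hence $\widehat{A}_n - A_n \overset{p}{\to} \mathbf{0}_{k\times k}$ in any matrix norm (all norms on a fixed finite-dimensional space being equivalent).

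There is no serious analytic obstacle here; the only points that require care — and the places where the argument would genuinely break — are that the dimension $k$ must not grow with $n$, since otherwise one could not freely combine a growing number of entrywise limits into a single norm bound, and that symmetry is indispensable, because quadratic forms alone identify only the symmetrized quantity $(D_n)_{ij} + (D_n)_{ji}$ and cannot separate $(D_n)_{ij}$ from $(D_n)_{ji}$ without it.
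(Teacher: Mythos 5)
Your proof is correct and follows essentially the same route as the paper's: standard basis vectors for the diagonal entries, then quadratic forms supported on pairs of coordinates (your polarization with $t=e_i+e_j$ is exactly the paper's ``two-by-two principal minors'' step, made explicit via symmetry). Your additional remarks on why fixed dimension and symmetry are indispensable match the paper's own remark following the lemma, which gives an asymmetric counterexample.
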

\begin{remark}
Lemma \ref{quad} is not true for asymmetric matrices. For example, if 
\begin{equation}
    \widehat{A}_n=\begin{bmatrix} 0 & 1\\ -1 & 0  
\end{bmatrix}, \text{and, } A_n=\begin{bmatrix} 0 &\hspace{2mm} 0\\ 0 &\hspace{2mm} 0  
\end{bmatrix},
\end{equation}
then $t'\widehat{A}_nt=t'A_nt=0$ for all $t\in\mathbb{R}^{2}$ but $\widehat{A}_n-A_n$ does not converge to 0.
\end{remark}

We now state a tensor inequality. Consider a fourth-order $n$-dimensional tensor. We denote it as $\mathbf{A}=[a_{ijkl}]\in\mathbb{R}^{n\times n\times n \times n}$.  We shall understand it as a multi-linear function:
\begin{equation*}
 \mathbf{A}:\mathbb{R}^n \times \mathbb{R}^n \times \mathbb{R}^n \times \mathbb{R}^n \to \mathbb{R}, 
\end{equation*}
where for $x,y,z,a\in\mathbb{R}^n$,
\begin{equation}\label{optim}
 \mathbf{A}(w,x,y,z)= \sum_{i=1}^{n}\sum_{j=1}^{n}\sum_{k=1}^{n}\sum_{l=1}^{n}a_{ijkl}w_ix_jy_kz_l
\end{equation}
Consider the following maximization problem:
\begin{align*}
    & \max_{w,x,y,z}\mathbf{A}(w,x,y,z)\\
    &\text{  subject to }     \sum_{i=1}^n w_i^4=\sum_{i=1}^n x_i^4=\sum_{i=1}^n y_i^4=\sum_{i=1}^n z_i^4=1.
\end{align*}
We denote its optimal value as $\sigma_{\max}(\mathbf{A})$. This optimal value exists because we are optimizing a continuous function over a compact set. Note further the problem 
\begin{align*}
    & \max_{w,x,y,z}|\mathbf{A}(w,x,y,z)|\\
    &\text{  subject to }     \sum_{i=1}^n w_i^4=\sum_{i=1}^n x_i^4=\sum_{i=1}^n y_i^4=\sum_{i=1}^n z_i^4=1
\end{align*}
has the same solution as the problem above because we can always take the negative of one of the vectors. 

Note for any vectors $w,x,y,z\in\mathbb{R}^n$, we have:
\begin{align}
        &\left|\mathbf{A}\left(w,x,y,z\right)\right|= \|w\|_4\|x\|_4\|y\|_4\|z\|_4 \times \left|\mathbf{A}\left(\frac{w}{\|w\|_4},\frac{x}{\|x\|_4},\frac{y}{\|y\|_4},\frac{z}{\|z\|_4}\right)\right|\\
        & 
        \leq  \sigma_{\max}\left(\mathbf{A}\right) \|w\|_4\|x\|_4\|y\|_4\|z\|_4
\end{align}
The following lemma bounds $\sigma_{\max}(\mathbf{A})$. We define quantities:
\begin{equation*}
\| \mathbf{A}\|_{-i}=\max_i\sum_{j=1}^n\sum_{k=1}^n\sum_{l=1}^n |a_{ijkl}|
\end{equation*}
and analogously,
\begin{equation*}
\| \mathbf{A}\|_{-j}=\max_j\sum_{i=1}^n\sum_{k=1}^n\sum_{l=1}^n |a_{ijkl}|
\end{equation*}
and similarly for $\|A\|_{-k}$ and $\|A\|_{-l}$. We define:
\begin{equation*}
    \| \mathbf{A}\|_{\infty} =\max\{\| \mathbf{A}\|_{-i},\| \mathbf{A}\|_{-j},\| \mathbf{A}\|_{-k},\| \mathbf{A}\|_{-l}\}.
\end{equation*}
\begin{lemma}\label{LemmaTensor}
Consider a fourth-order $n$-dimensional tensor $\mathbf{A}=[a_{ijkl}]\in\mathbb{R}^{n\times n\times n \times n}$,  
\begin{equation*}
    \sigma_{\max}(\mathbf{A})\leq \| \mathbf{A}\|_{\infty}.
\end{equation*}
\end{lemma}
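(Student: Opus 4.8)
The plan is to bound the multilinear form $\mathbf{A}(w,x,y,z)$ directly over the $l_4$-constraint set, using only the triangle inequality and a fourth-power AM--GM estimate; no spectral or variational machinery is needed, so the argument is elementary and self-contained. First I would note that it suffices to bound $|\mathbf{A}(w,x,y,z)|$ for every admissible quadruple, i.e.\ every $w,x,y,z\in\mathbf{R}^n$ with $\|w\|_4=\|x\|_4=\|y\|_4=\|z\|_4=1$, and then take the supremum to recover $\sigma_{\max}(\mathbf{A})$. Applying the triangle inequality to the expansion (\ref{optim}) gives
\[
|\mathbf{A}(w,x,y,z)|\le \sum_{i,j,k,l}|a_{ijkl}|\,|w_i|\,|x_j|\,|y_k|\,|z_l|.
\]

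The key step is to control the product of the four coordinate factors by the arithmetic--geometric mean inequality in the form
\[
|w_i|\,|x_j|\,|y_k|\,|z_l|\le \tfrac14\big(|w_i|^4+|x_j|^4+|y_k|^4+|z_l|^4\big),
\]
which is precisely matched to the $l_4$ normalization. Substituting this bound and splitting the quadruple sum into its four natural pieces, the first piece collapses as
\[
\sum_{i,j,k,l}|a_{ijkl}|\,|w_i|^4=\sum_i |w_i|^4\Big(\sum_{j,k,l}|a_{ijkl}|\Big)\le \|\mathbf{A}\|_{-i}\sum_i |w_i|^4=\|\mathbf{A}\|_{-i},
\]
where the inequality uses the definition of the slice sum $\|\mathbf{A}\|_{-i}$ and the final equality uses the constraint $\sum_i|w_i|^4=1$. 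The remaining three pieces are handled identically and are bounded by $\|\mathbf{A}\|_{-j}$, $\|\mathbf{A}\|_{-k}$, and $\|\mathbf{A}\|_{-l}$ respectively.

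Combining the four estimates yields
\[
|\mathbf{A}(w,x,y,z)|\le \tfrac14\big(\|\mathbf{A}\|_{-i}+\|\mathbf{A}\|_{-j}+\|\mathbf{A}\|_{-k}+\|\mathbf{A}\|_{-l}\big)\le \|\mathbf{A}\|_\infty,
\]
and taking the supremum over all admissible $w,x,y,z$ gives $\sigma_{\max}(\mathbf{A})\le\|\mathbf{A}\|_\infty$, as claimed. There is no substantive obstacle in this proof; the only point requiring care is to align the AM--GM exponent (four factors bounded by an average of fourth powers) with the $l_4$ normalization, so that each of the four slice sums collapses cleanly against a factor of unit $l_4$-norm. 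This is exactly the tensor analogue of the standard fact that the operator norm of a matrix is bounded by its maximum absolute row/column sum, and the proof mirrors that logic with $l_4$ replacing $l_2$.
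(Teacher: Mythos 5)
Your proof is correct, and it takes a genuinely different route from the paper's. The paper argues variationally: it first disposes of the zero tensor, invokes compactness to obtain an optimizer $(w^*,x^*,y^*,z^*)$, checks a constraint qualification, writes down the KKT conditions, identifies the common Lagrange multiplier as $\lambda^*=\frac14\mathbf{A}(w^*,x^*,y^*,z^*)$, and then plays the four stationarity equations against each other through $v^*=\max\{\|w^*\|_\infty,\|x^*\|_\infty,\|y^*\|_\infty,\|z^*\|_\infty\}$ to conclude $4\lambda^*\le\|\mathbf{A}\|_\infty$. You replace all of that with the pointwise estimate $|w_i|\,|x_j|\,|y_k|\,|z_l|\le\frac14\big(|w_i|^4+|x_j|^4+|y_k|^4+|z_l|^4\big)$ (AM--GM matched exactly to the $l_4$ normalization), after which each of the four resulting sums collapses against a unit-norm constraint to a single slice sum. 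What your approach buys: it is elementary and holds uniformly over the whole feasible set, so it needs no existence of optimizers, no constraint qualification, no separate treatment of the zero tensor or of the case $v^*\ne 0$, and it visibly generalizes to order-$m$ tensors under $l_m$ normalization. What the paper's approach buys: it stays inside the variational framework of \cite{lim2005singular} that defines $\sigma_{\max}$, so the multiplier identity $\lambda^*=\frac14\sigma_{\max}(\mathbf{A})$ is obtained along the way; but as a proof of the stated inequality, yours is shorter and cleaner. Note also that your argument actually yields the slightly stronger bound $\sigma_{\max}(\mathbf{A})\le\frac14\big(\|\mathbf{A}\|_{-i}+\|\mathbf{A}\|_{-j}+\|\mathbf{A}\|_{-k}+\|\mathbf{A}\|_{-l}\big)$, i.e., the average of the four slice sums rather than their maximum.
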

\begin{proof}
First note if $a_{ijkl}=0$ for all $ijkl$ indices, this inequality is trivially satisfied. Thus we assume there exists at least one $a_{ijkl}\not=0$ for an $ijkl$ index. \\
We observe that the objective function (\ref{optim}) is continuous and the feasible set is compact, so at least one optimal solution exists. Moreover, 0 is not in the feasible set. As a result, we conclude that a solution for the optimization problem exists and the local independence constraint qualification is satisfied at each solution (\cite{nocedal2006numerical}, P320). Let $\{w^*,x^*,y^*,z^*\}\subset\mathbb{R}^n$ denotes one of the optimal solutions. For the vector $w^*$, the KKT condition states that there exists a $\lambda_w^*$ such that
\begin{equation*}
    \frac{\partial}{\partial w_i}\mathbf{A}(w^*,x^*,y^*,z^*) - \lambda_w^* 4w_i^{*3}=0, \text{ for all } i\in[n].
\end{equation*}
We have
\begin{equation*}
      \frac{\partial}{\partial w_i}\mathbf{A}(w^*,x^*,y^*,z^*)= \sum_{j=1}^{n}\sum_{k=1}^{n}\sum_{l=1}^{n}a_{ijkl}x^*_jy^*_kz^*_l
\end{equation*}
We further have
\begin{align}
   & 4\sum_{i=1}^nw_i^*( \lambda_w^*w_i^{*3}) =\sum_{i=1}^nw_i^*\frac{\partial}{\partial w_i}\mathbf{\A}(w^*,x^*,y^*,z^*)=\sum_{i=1}^n\sum_{j=1}^{n}\sum_{k=1}^{n}\sum_{l=1}^{n}a_{ijkl}w^*_ix^*_jy^*_kz^*_l\\
   & = \mathbf{\A}(w^*,x^*,y^*,z^*).
\end{align}
\indent Given the constraint $\sum_{i=1}^n (w_i^*)^4=1$, we have the following equality:
\begin{equation*}
    \lambda_w^* =\frac{1}{4}\A(w^*,x^*,y^*,z^*)
\end{equation*}
Now notice:
\begin{align*}
& 4\lambda^*_w \|w^*\|_\infty^3  = 4\lambda^*_w \max_i |w_i^*|^3 = \max_i |\sum_{j=1}^{n}\sum_{k=1}^{n}\sum_{l=1}^{n}a_{ijkl}x^*_jy^*_kz^*_l| \\
& \leq \max_i \{\sum_{j=1}^{n}\sum_{k=1}^{n}\sum_{l=1}^{n}|a_{ijkl}|\} \times \|x^*\|_{\infty}\times \|y^*\|_{\infty}\times \|z^*\|_{\infty}\\
        & \leq \|\A\|_{\infty}\times \|x^*\|_{\infty}\times \|y^*\|_{\infty}\times \|z^*\|_{\infty},
\end{align*}
Notice a similar argument also works for $x^*$ and $y^*$ and $z^*$, and $\lambda^*_w=\lambda^*_x=\lambda^*_y=\lambda^*_z=\frac{1}{4}\A(w^*,x^*,y^*,z^*)$. Define $\lambda^*=\frac{1}{4}\A(w^*,x^*,y^*,z^*)$. We have the following four equations:
\begin{align*}
    4\lambda^* \|w^*\|^3_{\infty}&\leq \|\A\|_{\infty}\times \|x^*\|_{\infty}\times \|y^*\|_{\infty}\times \|z^*\|_{\infty}\\
    4\lambda^* \|x^*\|^3_{\infty}&\leq \|\A\|_{\infty}\times \|w^*\|_{\infty}\times \|y^*\|_{\infty}\times \|z^*\|_{\infty} \\  
    4\lambda^* \|y^*\|^3_{\infty}&\leq \|\A\|_{\infty}\times \|w^*\|_{\infty}\times \|x^*\|_{\infty}\times \|z^*\|_{\infty} \\  
    4\lambda^* \|z^*\|^3_{\infty}&\leq \|\A\|_{\infty}\times \|w^*\|_{\infty}\times \|x^*\|_{\infty}\times \|y^*\|_{\infty}  
\end{align*}
Define $v^*=\max\{\|w^*\|_{\infty},\|x^*\|_{\infty},\|y^*\|_{\infty},\|z^*\|_{\infty}\}$, we have the inequality:
\begin{equation*}
    4\lambda^*\times (v^*)^3\leq \|\A\|_{\infty}\times (v^*)^3
\end{equation*}
Since $v^*\not=0$, We then have the inequality $ \A(w^*,x^*,y^*,z^*)=4\lambda^* \leq \|\A\|_{\infty}$.
\end{proof}

Denote $\|z\|_4^4=\sum_{i,j}z^4_{ij}$, $\|z\|_1=\sum_{i,j}|z_{ij}|$ and $\|z\|_{\infty}=\max_{i,j}|z_{ij}|$ for a matrix $z$.
\begin{lemma}{(Convergence of the infeasible variance estimator)}\label{Infeasible} Let $\widetilde{\dmat}$ be a valid variance bound matrix, and $\dtildep$ be the inverse probability weighted version of the bounding matrix $\widetilde{\dmat}$. Let $\mathbf{Q}$ denote the fourth-order tensor $(\widetilde{\dmat}\bigotimes\widetilde{\dmat})\circ \mathbf{S} $, as defined in (\ref{Smat}). Let $z\in\mathbb{R}^{kn\times k}$. Consider the estimator:
\begin{equation*}
   \widehat{\widetilde{\V}}=\frac{1}{n^2}z'\R\dtildep\R z\in\mathbb{R}^{k\times k}
\end{equation*}
for the quantity 
\begin{equation*}
    \widetilde{\V}=\frac{1}{n^2}z'\widetilde{\dmat} z\in\mathbb{R}^{k\times k}.
\end{equation*}
Then, 
\begin{equation*}
   \widehat{\widetilde{\V}}-\widetilde{\V}=O_p\left( \sqrt{  \frac{1}{n^3}\sigma_{\max}(\mathbf{Q})}\times \sqrt{\frac{1}{n}\|z\|_4^4 }\right)
\end{equation*}
\end{lemma}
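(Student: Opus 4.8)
The plan is to exploit that randomness enters $\widehat{\tilde{\V}}$ only through the diagonal assignment matrix $\R$, so that $\widehat{\tilde{\V}}$ is an \emph{exactly unbiased} estimator of $\tilde{\V}$ whose sampling error collapses to a single fourth-order contraction of $\mathbf{Q}=(\tilde{\dmat}\otimes\tilde{\dmat})\circ\s$. I work with the quadratic-form content as in (\ref{var_lin_est})--(\ref{var_lin}), reducing the $k\times k$ matrix statement to a scalar one: fix an arbitrary $c\in\mathbf{R}^k$ and set $w=zc\in\mathbf{R}^{kn}$, so that $c'\widehat{\tilde{\V}}c=\frac{1}{n^2}w'\R\dtildep\R w$ and $c'\tilde{\V}c=\frac{1}{n^2}w'\tilde{\dmat}w$. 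Since $k$ is fixed, controlling $c'(\cdot)c$ over the finitely many directions $e_a$, $e_a+e_b$ recovers every matrix entry by polarization, and the displayed ``$=$'' in the statement is read as an $O_p$ bound that transfers up to $k$-dependent constants.

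First I would establish unbiasedness. Because $\R$ is diagonal, $(\R\dtildep\R)_{ai,bj}=\R_{ai}\R_{bj}\,\dtildep_{ai,bj}$, and taking expectations gives $\E[\R_{ai}\R_{bj}]\,\dtildep_{ai,bj}=\matp_{ai,bj}\,\dtildep_{ai,bj}=\tilde{\dmat}_{ai,bj}$ wherever $\matp_{ai,bj}\neq0$. At the remaining entries the identified-bound property $\bfI(\dmat=-1)\leq\bfI(\tilde{\dmat}=0)$ of Definition \ref{def.identified.bound} forces $\tilde{\dmat}_{ai,bj}=0$, which is exactly the division-by-zero convention used to define $\dtildep$. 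Hence $\E[\R\dtildep\R]=\tilde{\dmat}$ and $\E[\widehat{\tilde{\V}}]=\tilde{\V}$, so the entire error is the centered fluctuation $\widehat{\tilde{\V}}-\E[\widehat{\tilde{\V}}]$ and it suffices to bound its variance.

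The core step is to show that $\V(c'\widehat{\tilde{\V}}c)$ is precisely a fourth-order contraction of $\mathbf{Q}$. Expanding the variance of the sum, $\V(c'\widehat{\tilde{\V}}c)=\frac{1}{n^4}\sum w_{ai}w_{bj}w_{ck}w_{dl}\,\dtildep_{ai,bj}\,\dtildep_{ck,dl}\,\mathbf{Cov}(\R_{ai}\R_{bj},\R_{ck}\R_{dl})$. Using $\dtildep=\tilde{\dmat}/\matp$ together with the definition of $\s$ as the normalized covariance of weighted joint-inclusion indicators, the weighted covariance factor satisfies $\dtildep_{ai,bj}\,\dtildep_{ck,dl}\,\mathbf{Cov}(\R_{ai}\R_{bj},\R_{ck}\R_{dl})=\tilde{\dmat}_{ai,bj}\,\tilde{\dmat}_{ck,dl}\,\s_{ai,bj,ck,dl}=\mathbf{Q}_{ai,bj,ck,dl}$, so $\V(c'\widehat{\tilde{\V}}c)=\frac{1}{n^4}\mathbf{Q}(w,w,w,w)$; the index quadruples killed by division-by-zero never contribute, since their covariances multiply zero weights. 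I would then bound the contraction two ways: Lemma \ref{LemmaTensor} gives $|\mathbf{Q}(w,w,w,w)|\leq\sigma_{\max}(\mathbf{Q})\,\|w\|_4^4$, hence $\V(c'\widehat{\tilde{\V}}c)\leq\frac{1}{n^3}\sigma_{\max}(\mathbf{Q})\cdot\frac{1}{n}\|w\|_4^4$; and the crude triangle bound $|\mathbf{Q}(w,w,w,w)|\leq\|w\|_\infty^4\,\|\mathbf{Q}\|_1$ gives $\V(c'\widehat{\tilde{\V}}c)\leq\frac{1}{n^4}\|\mathbf{Q}\|_1\,\|w\|_\infty^4$. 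A final application of Chebyshev's inequality converts each variance bound into one of the two claimed rates (with $\|w\|_4^4\leq\|z\|_4^4$ and $\|w\|_\infty\leq\|z\|_\infty$ up to the fixed factor $\|c\|$).

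I expect the main obstacle to be the bookkeeping in the core step: correctly reassembling the quadruple sum of weighted covariances into the single tensor $\mathbf{Q}$ contracted four times against $w$, while tracking the elementwise-division conventions consistently across $\dtildep$, $\s$, and $\mathbf{Q}$ so that zero-probability entries drop out cleanly. Once that identity is verified, the remaining ingredients—unbiasedness, the two norm bounds, and Chebyshev—are routine.
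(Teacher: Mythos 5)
Your proposal follows the paper's own proof essentially step for step: reduce to scalar quadratic forms, use unbiasedness of $\R\dtildep\R$ for $\tilde{\dmat}$, identify the variance of the quadratic form with the fourth-order contraction $\frac{1}{n^4}\mathbf{Q}(zt,zt,zt,zt)$, and bound this contraction both by $\sigma_{\max}(\mathbf{Q})\,\|zt\|_4^4$ and by $\|\mathbf{Q}\|_1\,\|zt\|_\infty^4$ before concluding via Chebyshev, which is exactly the paper's argument (the paper is merely terser, asserting unbiasedness and the tensor identity as "some algebra"). One small citation correction: the step $|\mathbf{Q}(w,w,w,w)|\leq\sigma_{\max}(\mathbf{Q})\,\|w\|_4^4$ follows from the definition and homogeneity of $\sigma_{\max}$, not from Lemma \ref{LemmaTensor}, which instead bounds $\sigma_{\max}$ by the maximum absolute slice sum and is not needed in this lemma.
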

\begin{proof}
For an arbitrary $t\in\mathbb{R}^k$, consider the quadratic form        $t'(\widehat{\widetilde{\V}}-\widetilde{\V})t$. We are interested in upper-bounding its convergence rate.

Note $t'\left(\widehat{\widetilde{\V}}\right)t$ is unbiased for $t'\left(\widetilde{\V}\right)t$. To upper-bound the convergence rate, we study its variance:\\
\begin{align}\label{(9)}
     \V \bigg ( t'\widehat{\widetilde{\V}}t \bigg )& =  \frac{1}{n^4}\E \left[ \left(t'z\R \dtildep \R z t - t'z \widetilde{\dmat}zt\right)^2\right]
\end{align}
Use $\R_i,i\in [kn]$ to denote the $i$th diagonal element of $\R$. Some algebra shows that (\ref{(9)}) has the form:
\begin{equation*}
   \frac{1}{n^4}\mathbf{Q}(t'z,t'z,t'z,t'z)= \frac{1}{n^4}\sum_{i,j,k,l=1}^{kn}\mathbf{COV}(\R_i\R_j,\R_k\R_l)\frac{\widetilde{d}_{ij}\widetilde{d}_{kl}}{\pi_{ij}\pi_{kl}}(t'z)_i(t'z)_j(t'z)_k(t'z)_l,
\end{equation*}
where $\widetilde{d}_{ij}$ denotes the $ij$th entry of the matrix $\widetilde{\dmat}$, $\pi_{ij}$ denotes the $ij$th entry of the matrix $\matp$ defined in Definition \ref{SO}, and $(t'z)_i$ denotes the $i$th entry of the vector $t'z$. Note that for those entries where $\pi_{ij}=0$ or $\pi_{kl}=0$, we also have $\tilde{d}_{ij}=0$ or $\tilde{d}_{kl}$ since $\tilde{\Ome}$ is a valid variance bound. Hence the quantity is well-defined. 

Note $\mathbf{Q}(\cdot,\cdot,\cdot,\cdot)$ is a fourth-order $kn$ dimensional tensor. Then we have,
\begin{align*}
&  \frac{1}{n^4}\E \left[ \left(t'z\R \dtildep \R z t - t'z \widetilde{\dmat} zt\right)^2\right] = \frac{1}{n^4}\mathbf{Q}(t'z,t'z,t'z,t'z)\\ 
& \leq\frac{1}{n^4}\sigma_{\max}(\mathbf{Q})\|t'z\|_4^4 \leq  \frac{1}{n^4}\sigma_{\max}(\mathbf{Q})k^{3}\|z\|_4^4 \times \|t\|_\infty^4  \leq  \frac{1}{n^4}\sigma_{\max}(\mathbf{Q})k^{3}\|z\|_4^4 \times \|t\|_2^4.
\end{align*}
Thus we have
\begin{equation*}
     t'(\widehat{\widetilde{\V}}-\widetilde{\V})t= \sqrt{  \frac{1}{n^3}\sigma_{\max}(\mathbf{Q})\frac{1}{n}\|z\|_4^4  }\times k^3\|t\|_2^2.
\end{equation*}
\end{proof}
\begin{lemma}{(Feasible Estimators Converging to  Infeasible Estimators)}\label{Feasible}
Let $\dtildep$ be the inverse probability weighted version of the bounding matrix $\widetilde{\dmat}$. Consider plug-in estimator of the form
\begin{equation*}
\frac{1}{n^2}\widehat{z}\dtildep\widehat{z} \in \mathbb{R}^{k\times k},
\end{equation*}
where $\widehat{z}\in\mathbb{R}^{kn}$. 
We have 
\begin{align*}
      & \frac{1}{n^2}\widehat{z}\R\dtildep\R\widehat{z}- \frac{1}{n^2}z\R\dtildep\R z \\
      =& O\left(\frac{1}{n^2}\|\widehat{z}-\R z\|_2^2 \times \sigma_{\max}\left(\dtildep\right) +  \frac{1}{n^2}\|\widehat{z}-\R z \|_2\times \|z\|_2 \times\sigma_{\max}\left(\dtildep\right)\right).
\end{align*}
\end{lemma}
\begin{proof}
We have the following algebraic manipulation:
\begin{align}\label{(15)}
        & \frac{1}{n^2}\widehat{z}\dtildep\widehat{z}-    \frac{1}{n^2}z\R\dtildep \R z \\
        &=  \frac{1}{n^2}(\widehat{z}-\R z)'\dtildep\widehat{z} + \frac{1}{n^2}z'\R\dtildep\widehat{z}-   \frac{1}{n^2}z'\R\dtildep\R z\\
   & =  \frac{1}{n^2}(\widehat{z}-\R z)'\dtildep (\widehat{z}-\R z) + \frac{1}{n^2}(\widehat{z}-\R z)\dtildep\R z + \frac{1}{n^2}z\R\dtildep (\widehat{z}-\R z)
\end{align}
Using the notation $\viii{\cdot}$  as a shorthand notation for $\sigma_{\max}()$ and by the submultiplicativity of a matrix norm,\footnote{Note $\sigma_{\max}()$ is a matrix norm.} we have the quantity upper bounded by
\begin{align*}
    & \viii{\frac{1}{n^2}(\widehat{z}-\R z)'\dtildep (\widehat{z}-\R z)+ \frac{1}{n^2}(\widehat{z}-\R z)\dtildep\R z + \frac{1}{n^2}z'\R\dtildep(\widehat{z}-\R z)}\\
  \leq &  \frac{1}{n^2}\viii{ (\widehat{z}-\R z)\dtildep(\widehat{z}-\R z)} + \frac{1}{n^2}\viii{(\widehat{z}-\R z)\dtildep\R z}  +  \frac{1}{n^2}\viii{z'\R\dtildep(\widehat{z}-\R z)}\\
   \leq & \frac{1}{n^2}\viii{\widehat{z}-\R z}^2\times \viii{\dtildep}+ \frac{2}{n^2}\viii{\widehat{z}-\R z}\times\viii{\R}\times\viii{\dtildep}\times\viii{z}  \\
  \leq & \frac{1}{n^2}\|\widehat{z}-\R z\|_2^2\times \viii{\dtildep} +  \frac{2}{n^2}\|\widehat{z}-\R z\|_2\times \|z\|_2 \times\viii{\R}\times \viii{\dtildep} 
\end{align*}
Thus we have for (\ref{(15)})
\begin{equation*}
    (\ref{(15)}) =O \left(\frac{1}{n^2}\|\widehat{z}-\R z\|_2^2 \times \sigma_{\max}\left(\dtildep\right) +  \frac{1}{n^2}\|\widehat{z}-\R z\|_2\times \|z\|_2 \times\sigma_{\max}\left(\dtildep\right)\right) ,
\end{equation*}
where we used the fact that $\viii{\R}\leq1$.
\end{proof}
Let $\viiii{\mathbf{A}}_1$ denote the $l_1$-induced matrix norm, where $\viiii{\mathbf{A}}_1=\max_{j\in[n]}\{\sum_{i=1}^n |a_{ij}|\} $, for an arbitrary $\mathbf{A}\in\mathbb{R}^{m\times n}$.

\begin{lemma}\label{Lemma:bound}
Let $\Ome\in\mathbb{R}^{n\times n}$ be a symmetric matrix, $y\in \mathbb{R}^n$ and $x\in \mathbb{R}^n$ be two arbitrary column vectors. The quantity $x'\Ome'\diag(y)\diag(y)\Ome x$ can be upper bounded by :
\begin{equation}
  x'\Ome'\diag(y)\diag(y)\Ome x  \leq \viiii{\Ome}_1^2 \sqrt{\sum_{i=1}^n x_i^4} \sqrt{\sum_{i=1}^n y_i^4} 
\end{equation}
\end{lemma}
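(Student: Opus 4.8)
The plan is to reduce the bilinear-looking expression to a single $\ell_4$ operator-norm bound and then close with Cauchy--Schwarz. First I would set $w=\Ome x\in\mathbf{R}^n$ and observe that, since $x'\Ome'=(\Ome x)'=w'$, the quantity of interest is simply $w'\diag(y)\diag(y)w=\sum_{i=1}^n y_i^2 w_i^2$. Applying the Cauchy--Schwarz inequality to this sum over $i$ gives $\sum_i y_i^2 w_i^2\le \big(\sum_i y_i^4\big)^{1/2}\big(\sum_i w_i^4\big)^{1/2}$, so it remains only to show $\sum_i w_i^4\le |||\Ome|||_1^4\sum_i x_i^4$, i.e.\ the fourth-power operator bound $\|\Ome x\|_4\le |||\Ome|||_1\,\|x\|_4$.

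The core step is this $\ell_4$ bound, which I would prove by hand using convexity rather than by invoking Riesz--Thorin interpolation. Write $a_{ij}=|\Ome_{ij}|$ and $R_i=\sum_j a_{ij}$ for the $i$th absolute row sum. The triangle inequality gives $|w_i|\le \sum_j a_{ij}|x_j|$, and then Jensen's inequality applied to the convex map $t\mapsto t^4$ with the probability weights $a_{ij}/R_i$ yields $w_i^4\le R_i^3\sum_j a_{ij}|x_j|^4$ (the indices with $R_i=0$ contribute $w_i=0$ and may be discarded). Summing over $i$, bounding each $R_i$ by the maximal absolute row sum, and then interchanging the order of summation turns the inner $\sum_i a_{ij}$ into the $j$th absolute column sum, which is bounded by $|||\Ome|||_1$. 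Here the symmetry of $\Ome$ is exactly what I need: it forces the maximal absolute row sum to equal the maximal absolute column sum $|||\Ome|||_1$, so the two aggregates combine to give $\sum_i w_i^4\le |||\Ome|||_1^4\sum_j x_j^4$.

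Substituting this back into the Cauchy--Schwarz bound produces $\sum_i y_i^2 w_i^2\le |||\Ome|||_1^2\,(\sum_i x_i^4)^{1/2}(\sum_i y_i^4)^{1/2}$, which is precisely the claim. The one place requiring care---and the step I would flag as the substantive one---is the Jensen/row--column bookkeeping, since it is what converts an entrywise triangle inequality into a clean power of $|||\Ome|||_1$; in particular one must track which aggregate is a row sum and which is a column sum so that symmetry can be invoked to collapse both to $|||\Ome|||_1$. Everything else is routine.
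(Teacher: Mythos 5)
Your proof is correct, but it follows a genuinely different route from the paper's. You decouple $x$ and $y$ immediately: writing $w=\Ome x$, the quantity becomes $\sum_i y_i^2 w_i^2$, Cauchy--Schwarz gives $\bigl(\sum_i y_i^4\bigr)^{1/2}\bigl(\sum_i w_i^4\bigr)^{1/2}$, and the burden shifts to the $\ell_4$ operator bound $\|\Ome x\|_4\leq |||\Ome|||_1\|x\|_4$, which you establish by Jensen with weights $|\Ome_{ij}|/R_i$ --- in effect a hand-made Schur test giving $(\max_i R_i)^{3/4}(\max_j C_j)^{1/4}$, with symmetry collapsing both factors to $|||\Ome|||_1$. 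The paper instead keeps $x$ and $y$ coupled: it expands $\sum_i\bigl(\sum_j x_j\Ome_{ji}y_i\bigr)^2$, splits $|\Ome_{ji}|=\sqrt{|\Ome_{ji}|}\cdot\sqrt{|\Ome_{ji}|}$ inside a Cauchy--Schwarz step to peel off one factor of $|||\Ome|||_1$, and is left with the quadratic form $(x\circ x)'|\Ome|(y\circ y)$, which it bounds by $|||(|\Ome|)|||_2\,\|x\circ x\|_2\|y\circ y\|_2$ and then uses the spectral estimate $|||(|\Ome|)|||_2\leq|||\Ome|||_1$ for symmetric matrices. Both arguments invoke symmetry at the analogous point (equating, or dominating, row and column aggregates by $|||\Ome|||_1$) and land on the same constant, but yours is more elementary --- it never touches eigenvalues or spectral norms, only convexity --- and it isolates a clean, reusable lemma ($\ell_4$-boundedness of $\Ome$ in terms of absolute row/column sums), whereas the paper's intermediate object, the coupled form $(x\circ x)'|\Ome|(y\circ y)$, is what its spectral machinery is built to handle. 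Either proof could replace the other with no loss.
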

\begin{proof}
Notice the stated quantity is the $l_2$ norm of the vector $x'\Ome'\diag(y)\in\mathbb{R}^n$. Denote the $j$th column of $\Ome$ by $\Ome_j$, and the $(i,j)$ th entry of $\Ome$ by $\Ome_{ij}$. The $i$th entry of the vector $x'\Ome'\diag(y)$ is $x'\Ome_iy_i$. The $l_2$ norm thus can be written as
\begin{align*}
            & \sum_{i=1}^n \left(x'\Ome_iy_i\right)^2 = \sum_{i=1}^n \left(\sum_{j=1}^n x_j\Ome_{ji}y_i\right)^2\leq \sum_{i=1}^n \left(\sum_{j=1}^n |x_j\|\Ome_{ji}\|y_i|\right)^2\\
            = &  \sum_{i=1}^n \left(\sum_{j=1}^n |x_j|\sqrt{|\Ome_{ji}|}|y_i| \sqrt{|\Ome_{ji}| }\right)^2 \leq  \sum_{i=1}^n \left(\sum_{j=1}^n |x_j|^2|y_i|^2 |\Ome_{ji}|\right) \times \left(\sum_{j=1}^n |\Ome_{ji}|\right) \\
            \leq & \viiii{\Ome}_1 \sum_{i=1}^n \sum_{j=1}^n |x_j|^2|y_i|^2 |\Ome_{ji}|
\end{align*}
Now notice the expression $\sum_{i=1}^n \sum_{j=1}^n |x_j|^2|y_i|^2 |\Ome_{ji}|$ is the expanded expression for the quadratic form $(x\circ x)' |\Ome| (y\circ y)$, where $|\Ome|$ replaces entries in $\Ome$ with their absolute values and $\circ$ denotes the Hadamard product. Thus we can continue the inequality 
\begin{align*}
   &  \viiii{\Ome}_1 \sum_{i=1}^n \sum_{j=1}^n |x_j|^2|y_i|^2 |\Ome_{ij}|\leq  \viiii{\Ome}_1 \sigma_{\max}\left(\left|\Ome\right|\right) \times \sqrt{\sum_{i=1}^n x_i^4} \sqrt{\sum_{i=1}^n y_i^4}\\
   \leq &  \viiii{\Ome}_1^2 \sqrt{\sum_{i=1}^n x_i^4} \sqrt{\sum_{i=1}^n y_i^4},
\end{align*}
where for the last line we used the fact that for a symmetric matrix,
\begin{equation*}
  \sigma_{\max}\left(\left|\Ome\right|\right)\leq \viiii{\left|\Ome\right|}_1=\viiii{\Ome}_1.
\end{equation*}
\end{proof}
Let $\Theta$ be a compact set in a finite dimensional Euclidean space, and $B(\theta,\delta)$ denote a closed ball in $\Theta$ of radius $\delta \geq 0$ (with the $l_2$ norm) centered at $\theta$. The following lemma adapts Theorem 1 \cite{andrews1992generic} to our setting. \begin{lemma}\label{Lemma:SC}
 Let $\Theta$ be a compact set in a finite dimensional Euclidean space and $\mathbf{P}_n$ be the probability measure induced by the random assignments. Consider a sequence of continuous deterministic functions $Q_n(\cdot):\Theta\to\mathbb{R}$ and continuous stochastic functions $\widehat{Q}_n(\cdot):\Theta\to\mathbb{R}$.\footnote{We assume the criterion functions are continuous to avoid measurability issues. This condition is satisfied by all models considered in this paper.  } If $|\widehat{Q}_n(\theta)-Q_n(\theta)|=o_p(1)$ pointwise on $\theta\in\Theta$ and the stochastic function $\widehat{Q}_n(\cdot)-Q_n(\cdot)$ is uniformly stochastically equicontinuous: for all $\epsilon>0$, there exists a $\delta>0$ such that
\begin{align*}
    \mathbf{P}_n\left(\sup_{\theta\in\Theta}\sup_{\theta'\in B(\theta,\delta)}|\widehat{Q}_n(\theta)-Q_n(\theta)-\left(\widehat{Q}_n(\theta')-Q_n(\theta')\right)|>\epsilon \right)<\epsilon
\end{align*}
uniformly for large $n$. Then $\sup_{\theta\in\Theta} \left|\widehat{Q}_n(\theta)-Q_n(\theta)\right|=o_p(1)$. 
\end{lemma}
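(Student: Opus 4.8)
The plan is to prove this via a standard finite-covering argument, combining the pointwise convergence at a finite grid of points with the uniform stochastic equicontinuity to control the oscillation between grid points. Write $G_n(\theta) = \hat{Q}_n(\theta) - Q_n(\theta)$, so the goal is $\sup_{\theta\in\Theta}|G_n(\theta)| = o_p(1)$; equivalently, for every $\tau > 0$ and $\eta > 0$ I must produce an $N$ such that $\mathbf{P}_n(\sup_{\theta}|G_n(\theta)| > \tau) < \eta$ for all $n \ge N$.

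First I would fix $\tau,\eta>0$ and set $\epsilon = \min(\tau/2,\eta/2)$. Invoking the uniform stochastic equicontinuity hypothesis with this $\epsilon$ yields a radius $\delta>0$ and an integer $N_1$ such that, for all $n\ge N_1$, the oscillation event
$$A_n = \Big\{ \sup_{\theta\in\Theta}\sup_{\theta'\in B(\theta,\delta)} |G_n(\theta) - G_n(\theta')| > \epsilon \Big\}$$
satisfies $\mathbf{P}_n(A_n) < \epsilon \le \eta/2$. Because $\Theta$ is compact, the open cover $\{B(\theta,\delta):\theta\in\Theta\}$ admits a finite subcover, giving centers $\theta_1,\dots,\theta_J$ with $\Theta \subseteq \bigcup_{j=1}^J B(\theta_j,\delta)$.

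Next, on the complement $A_n^c$, for any $\theta\in\Theta$ I choose an index $j$ with $\theta\in B(\theta_j,\delta)$ and apply the triangle inequality $|G_n(\theta)| \le |G_n(\theta_j)| + |G_n(\theta)-G_n(\theta_j)| \le \max_{1\le j\le J}|G_n(\theta_j)| + \epsilon$. Taking the supremum over $\theta$, this shows $A_n^c \subseteq \{\sup_\theta|G_n(\theta)| \le \max_j|G_n(\theta_j)| + \tau/2\}$. The pointwise convergence hypothesis applied at the finitely many centers gives $\max_{1\le j\le J}|G_n(\theta_j)| = o_p(1)$, since a finite maximum of terms each $o_p(1)$ is itself $o_p(1)$; hence there is an $N_2$ with $\mathbf{P}_n(\max_j|G_n(\theta_j)| > \tau/2) < \eta/2$ for $n\ge N_2$. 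Combining the two bounds, for $n\ge\max(N_1,N_2)$ I obtain
$$\mathbf{P}_n\Big(\sup_\theta |G_n(\theta)| > \tau\Big) \le \mathbf{P}_n(A_n) + \mathbf{P}_n\Big(\max_{1\le j\le J}|G_n(\theta_j)| > \tfrac{\tau}{2}\Big) < \tfrac{\eta}{2} + \tfrac{\eta}{2} = \eta,$$
which is exactly the required conclusion.

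I do not anticipate a deep obstacle: the continuity of $Q_n$ and $\hat{Q}_n$ ensures the suprema are well-defined and measurable, and the statement is essentially Theorem 1 of \cite{andrews1992generic} specialized to the present design-based setting. The point requiring the most care is keeping the quantifiers straight — in particular, exploiting that the equicontinuity bound holds \emph{uniformly in $n$}, so that a single $\delta$ and a single finite subcover serve for all large $n$, and tying the auxiliary $\epsilon$ simultaneously to the threshold $\tau$ (through $\tau/2$) and to the probability budget $\eta$ (through $\eta/2$) so that the two tail bounds add up to at most $\eta$.
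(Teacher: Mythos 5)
Your proof is correct and follows essentially the same route as the paper's: a finite $\delta$-cover of the compact $\Theta$, the triangle-inequality split of $\sup_\theta|\hat{Q}_n(\theta)-Q_n(\theta)|$ into the oscillation term (controlled by uniform stochastic equicontinuity) plus the maximum over the finitely many cover centers (controlled by pointwise convergence), and a union bound. Your version is in fact slightly cleaner on the quantifiers, decoupling the threshold $\tau$ from the probability budget $\eta$, whereas the paper works with a single $\epsilon$ and a $2\epsilon$ threshold before converting back.
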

\begin{proof}
Fix a given $\epsilon>0$ and let $\delta$ to be the corresponding radius in the stochastic equicontinuity condition. Because $\Theta$ is compact, we can find a finite cover of $\Theta$, $\{B(\theta_j,\delta)\}_{j=1,...,J}$. We then have:
\begin{align*}
  & \mathbf{P}_n \left(\sup_{\theta\in\Theta} \left|\widehat{Q}_n(\theta)-Q_n(\theta)\right|>2\epsilon\right)\\
  \leq &   \mathbf{P}_n\left(\sup_{\theta\in\Theta}\sup_{\theta'\in B(\theta,\delta)} \left|\widehat{Q}_n(\theta)-Q_n(\theta)-\left(\widehat{Q}_n(\theta')-Q_n(\theta')\right)\right|>\epsilon\right)\\
  & + \mathbf{P}_n\left(\max_{j\in [J]}\left|\widehat{Q}_n(\theta_j)-Q_n(\theta_j)\right|>\epsilon\right)<2\epsilon,
\end{align*}
for large $n$. Then for each $\epsilon>0$ and any $\widetilde{\epsilon}<\epsilon$, we can find an $n$ large enough such that $ \mathbf{P}_n\left(\sup_{\theta\in\Theta} |\widehat{Q}_n(\theta)-Q_n(\theta)|>\epsilon\right)\leq \mathbf{P}_n\left(\sup_{\theta\in\Theta} |\widehat{Q}_n(\theta)-Q_n(\theta)|>\widetilde{\epsilon}\right)<2\widetilde{\epsilon}$. Thus $\sup_{\theta\in\Theta} \left|\widehat{Q}_n(\theta)-Q_n(\theta)\right|=o_p(1)$.
\end{proof}
The following lemma is the standard consistency proof for GMM estimators. 
\begin{lemma}\label{Lemma:UC}
 Let $\Theta$ be a compact set in a finite dimensional Euclidean space and $\mathbf{P}_n$ be the probability measure induced by the random assignments. Consider a sequence of continuous deterministic functions $Q_n(\cdot):\Theta\to\mathbb{R}$ and continuous stochastic functions $\widehat{Q}_n(\cdot):\Theta\to\mathbb{R}$. Let $\theta_n\equiv\arg\inf_{\theta\in\Theta}Q_n(\theta)$ and define $\widehat{\theta}_n\equiv\arg\min_{\theta\in\Theta}\widehat{Q}_n(\theta)$. If there exists a positive $c$ such that $\inf_{\theta\in \Theta\backslash B(\theta_n,\epsilon)}Q_n(\theta)-Q_n(\theta)>c \epsilon^2$ uniformly for all large $n$ and $\sup_{\theta\in\Theta}|\widehat{Q}_n(\theta)-Q_n(\theta)|=o_p(1)$, then $\widehat{\theta}_n-\theta_n=o_p(1)$.
\end{lemma}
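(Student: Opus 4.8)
This is Lemma~\ref{Lemma:UC}, a standard consistency result for M/GMM-type estimators adapted to the design-based setting. The plan is to run the classical argument-maximum-convergence proof: combine the uniform convergence hypothesis $\sup_{\theta\in\Theta}|\hat{Q}_n(\theta)-Q_n(\theta)|=o_p(1)$ with the well-separated-minimum (identifiability) condition $\inf_{\theta\in\Theta\backslash B(\theta_n,\epsilon)}Q_n(\theta)-Q_n(\theta_n)>c\epsilon^2$ to squeeze $\hat{\theta}_n$ into a shrinking ball around $\theta_n$. The key observation is that, by definition of $\hat{\theta}_n$ as the minimizer of $\hat{Q}_n$, we have $\hat{Q}_n(\hat{\theta}_n)\leq \hat{Q}_n(\theta_n)$, and uniform convergence lets us transfer this inequality to the deterministic criterion $Q_n$.

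First I would fix an arbitrary $\epsilon>0$. On the event $\{\hat{\theta}_n\notin B(\theta_n,\epsilon)\}$, the separation condition gives $Q_n(\hat{\theta}_n)-Q_n(\theta_n)>c\epsilon^2$. On the other hand, chaining through the uniform convergence twice,
\begin{align*}
Q_n(\hat{\theta}_n)-Q_n(\theta_n)
&= \big(Q_n(\hat{\theta}_n)-\hat{Q}_n(\hat{\theta}_n)\big)
+\big(\hat{Q}_n(\hat{\theta}_n)-\hat{Q}_n(\theta_n)\big)
+\big(\hat{Q}_n(\theta_n)-Q_n(\theta_n)\big)\\
&\leq 2\sup_{\theta\in\Theta}|\hat{Q}_n(\theta)-Q_n(\theta)|,
\end{align*}
where the middle term is $\leq 0$ because $\hat{\theta}_n$ minimizes $\hat{Q}_n$. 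Writing $\eta_n\equiv\sup_{\theta\in\Theta}|\hat{Q}_n(\theta)-Q_n(\theta)|=o_p(1)$, the event $\{\hat{\theta}_n\notin B(\theta_n,\epsilon)\}$ therefore forces $c\epsilon^2<2\eta_n$, i.e. $\{\hat{\theta}_n\notin B(\theta_n,\epsilon)\}\subseteq\{2\eta_n>c\epsilon^2\}$.

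The conclusion then follows by taking probabilities: $\mathbf{P}_n(\hat{\theta}_n\notin B(\theta_n,\epsilon))\leq \mathbf{P}_n(2\eta_n>c\epsilon^2)\to 0$ as $n\to\infty$, since $\eta_n\overset{p}{\to}0$ and $c\epsilon^2>0$ is a fixed constant. As $\epsilon>0$ was arbitrary, this is exactly the statement $\hat{\theta}_n-\theta_n=o_p(1)$. One point of care is that the separation constant $c$ and the validity of the bound hold only uniformly for large $n$, so the probability statement should be understood in the limit; this is harmless since consistency is itself an asymptotic claim. I do not anticipate a genuine obstacle here—the result is the textbook consistency theorem, and the design-based probability measure $\mathbf{P}_n$ enters only through $\eta_n=o_p(1)$, which is supplied as a hypothesis (and is itself established for the concrete criteria via Lemma~\ref{Lemma:SC}). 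The only thing to get right is the direction of the two inequalities (minimization rather than maximization) and ensuring $\hat{\theta}_n$ is well-defined as a measurable minimizer, which is guaranteed by compactness of $\Theta$ and continuity of $\hat{Q}_n$.
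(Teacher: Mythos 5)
Your proof is correct and follows essentially the same route as the paper's: both use the minimizing property $\hat{Q}_n(\hat{\theta}_n)\leq\hat{Q}_n(\theta_n)$, the well-separation condition to convert $\{\hat{\theta}_n\notin B(\theta_n,\epsilon)\}$ into a lower bound on $Q_n(\hat{\theta}_n)-Q_n(\theta_n)$, and the bound of that quantity by $2\sup_{\theta\in\Theta}|\hat{Q}_n(\theta)-Q_n(\theta)|$ to conclude via the uniform convergence hypothesis. The only cosmetic difference is that the paper splits the probability over two events (one of which is null by the minimization property) rather than chaining the deterministic inequalities first, as you do.
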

\begin{proof}
    We have
    \begin{align*}
        &  \mathbf{P}_n\left(\|\widehat{\theta}_n-\theta_n\|_2>2\epsilon\right) \leq         \mathbf{P}_n\left(Q_n(\widehat{\theta}_n)-Q_n(\theta_n)>4c\epsilon^2\right)\\
         \leq & \mathbf{P}_n\left( Q_n(\widehat{\theta}_n)-Q_n(\theta_n)-\widehat{Q}_n(\widehat{\theta}_n)+\widehat{Q}_n(\theta_n)>2c\epsilon^2\right) + \mathbf{P}_n\left(\widehat{Q}_n(\widehat{\theta}_n)-\widehat{Q}_n(\theta_n)>2c\epsilon^2\right)\\
         \leq& 2\mathbf{P}_n(\sup_{\theta\in\Theta}|Q_n(\theta)-Q_n(\theta)|>c\epsilon^2)\to 0,
    \end{align*}
as $n\to\infty$, where $\mathbf{P}_n\left(\widehat{Q}_n(\widehat{\theta}_n)-\widehat{Q}_n(\theta_n)>c2\epsilon^2\right)=0$ because $\widehat{\theta}_n$ is the minimzer of $\widehat{Q}_n(\theta)$. 
\end{proof}
The following lemma uses notations in Section \ref{Section:Nonlinear}.
\begin{lemma}\label{Lemma:Equivalence}
 Define a GR estimator for arm a as $\widehat{\mu}_{n,a}=\frac{1}{n}\sum_{i}f^a(x_i,\widehat{\theta}_n)+\frac{1}{n}\sum_{i}\frac{\R_{ai}}{\pi_{ai}}\left(y_{ai}-f^a(x_i,\widehat{\theta}_n)\right)$, and $\widehat{\mu}_{n,a}^L=\frac{1}{n}\sum_{i}f^a(x_i,\theta_n)+\frac{1}{n}\sum_{i}\frac{\R_{ai}}{\pi_{ai}}\left(y_{ai}-f^a(x_i,\theta_n)\right)$. If $\widehat{\theta}_n-\theta_n=O_p\left(n^{-\frac{1}{2}}\right)$, and there exists a positive integer $N$ such that the following conditions for $f^a$ are satisfied uniformly for all $n\geq N$:
 \begin{enumerate}[label=(\roman*)]
    \item $f^a(x_i,\theta)$ is two times differentiable in $\theta$ for all $x_i$ values, $i\in[n]$.
    \item There exists a $C_{\ref{Lemma:Equivalence},1}$ and an $\epsilon_{\ref{Lemma:Equivalence},1}>0$ such that $\frac{1}{n}\sum_{a,i} \|\nabla_{\theta}f^a(x_i,\theta)\|^2_2<C_{\ref{Lemma:Equivalence},1}$ for all $\theta\in B(\theta_n,\epsilon)$.
    \item There exists a $C_{\ref{Lemma:Equivalence},1}$ and an $\epsilon_{\ref{Lemma:Equivalence},2}>0$ such that $\frac{1}{n}\sum_{a,i} \sup_{\theta\in B(\theta_n,\epsilon)} \|\nabla_{\theta\theta}f^a(x_i,\theta)\|_1<C_{\ref{Lemma:Equivalence},2}$,
    \item $\sigma_{\max}\left(\Ome\right)=O(1)$,
\end{enumerate}
then $\frac{1}{n} \sum_{i}\left(f^a(x_i,\widehat{\theta}_n)-f^a(x_i,\theta_n)\right)^2=O_p(n^{-1})$ and  $\widehat{\mu}_{n,a}-\widehat{\mu}_{n,a}^L=o_p(n^{-\frac{1}{2}})$.
\end{lemma}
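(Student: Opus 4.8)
Both conclusions rest on a single second-order Taylor expansion of the imputation function, after which the first (deterministic-scale) claim is a direct moment bound and the second (design-scale) claim is a Horvitz--Thompson variance computation handled by Lemma \ref{HTcp1}. Throughout, write $\Delta=\hat\theta_n-\theta_n$, so $\|\Delta\|_2=O_p(n^{-1/2})$ by hypothesis, and work on the event $\{\hat\theta_n\in B(\theta_n,\epsilon)\}$, whose probability tends to one because $\Delta=o_p(1)$. Using (i), Taylor's theorem with a Lagrange remainder gives, for each $i$,
\begin{equation*}
g_i:=f^a(x_i,\hat\theta_n)-f^a(x_i,\theta_n)=\nabla_\theta f^a(x_i,\theta_n)'\Delta+R_i,\qquad R_i=\tfrac12\,\Delta'\nabla_{\theta\theta}f^a(x_i,\tilde\theta_i)\,\Delta,
\end{equation*}
with $\tilde\theta_i$ on the segment from $\theta_n$ to $\hat\theta_n$; setting $H_i:=\sup_{\theta\in B(\theta_n,\epsilon)}\|\nabla_{\theta\theta}f^a(x_i,\theta)\|_1$, condition (iii) gives $\frac1n\sum_i H_i\le C_2$ and $|R_i|\le\tfrac12\|\Delta\|_2^2H_i$.

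\textbf{First claim.} I would split $\frac1n\sum_i g_i^2\le \frac2n\sum_i\big(\nabla_\theta f^a(x_i,\theta_n)'\Delta\big)^2+\frac2n\sum_i R_i^2$. The linear part is at most $\|\Delta\|_2^2\,\frac1n\sum_i\|\nabla_\theta f^a(x_i,\theta_n)\|_2^2\le C_1\|\Delta\|_2^2=O_p(n^{-1})$ by (ii). For the quadratic part, $\frac1n\sum_i R_i^2\le\tfrac14\|\Delta\|_2^4\,\frac1n\sum_i H_i^2$, and the crude inequality $\frac1n\sum_i H_i^2\le(\max_i H_i)\frac1n\sum_i H_i\le (nC_2)C_2$ — which uses only the $L^1$ control (iii) — yields $\frac1n\sum_i R_i^2\le\tfrac14\|\Delta\|_2^4\,nC_2^2=O_p(n^{-2})\,O(n)=O_p(n^{-1})$. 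Hence $\frac1n\sum_i g_i^2=O_p(n^{-1})$.

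\textbf{Second claim.} Since both estimators share the same Horvitz--Thompson correction, their difference is $\hat\mu_{n,a}-\hat\mu_{n,a}^L=\frac1n\sum_i\big(1-\frac{\R_{ai}}{\pi_{ai}}\big)g_i=T_1+T_2$, where
\begin{equation*}
T_1=A_n'\Delta,\quad A_n=\frac1n\sum_i\Big(1-\frac{\R_{ai}}{\pi_{ai}}\Big)\nabla_\theta f^a(x_i,\theta_n),\quad T_2=\frac1n\sum_i\Big(1-\frac{\R_{ai}}{\pi_{ai}}\Big)R_i.
\end{equation*}
The key observation is that every coordinate of $A_n$ is a centered inverse-probability-weighted average of exactly the type analyzed in Lemma \ref{HTcp1}: take the $z$-vector equal to the relevant coordinate of $\nabla_\theta f^a(\cdot,\theta_n)$ placed in the arm-$a$ block and zero elsewhere; then $\frac1n\|z\|_2^2\le C_1=O(1)$ by (ii), $\E[A_n]=0$ because $\E[\R_{ai}/\pi_{ai}]=1$, and Lemma \ref{HTcp1} gives $\V(A_{n,t})=O(|||\dmat|||_2/n)$, so $A_n=O_p(\sqrt{|||\dmat|||_2/n})$. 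Multiplying by $\|\Delta\|_2=O_p(n^{-1/2})$ gives $T_1=O_p(\sqrt{|||\dmat|||_2}\,/\,n)$. For $T_2$ the remainder carries no inverse-probability weight, so $|T_2|\le\tfrac12\|\Delta\|_2^2\big(\frac1n\sum_i H_i+\frac1n\sum_i\frac{\R_{ai}}{\pi_{ai}}H_i\big)$; both averages have expectation at most $C_2$, so by Markov the bracket is $O_p(1)$ and $T_2=O_p(\|\Delta\|_2^2)=O_p(n^{-1})$. Adding, $\hat\mu_{n,a}-\hat\mu_{n,a}^L=O_p(\sqrt{|||\dmat|||_2}\,/\,n)$.

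\textbf{Main obstacle.} The delicate step is upgrading this bound to the stated $o_p(|||\dmat|||_2/n)$. Factoring $\tilde A_n:=A_n/\sqrt{|||\dmat|||_2/n}=O_p(1)$ gives $T_1/(|||\dmat|||_2/n)=\sqrt{n/|||\dmat|||_2}\;\tilde A_n'\Delta=O_p(|||\dmat|||_2^{-1/2})$, and likewise $T_2/(|||\dmat|||_2/n)=O_p(|||\dmat|||_2^{-1})$; the ratio is $o_p(1)$ exactly when the parametric estimation error $\|\Delta\|_2=O_p(n^{-1/2})$ is of strictly smaller order than the design fluctuation scale $\sqrt{|||\dmat|||_2/n}$, i.e. when $|||\dmat|||_2^{-1/2}=o(1)$. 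This is precisely where the mean-zero Horvitz--Thompson structure is indispensable: the naive Cauchy--Schwarz bound $|\hat\mu_{n,a}-\hat\mu_{n,a}^L|\le(\frac1n\sum_i(1-\frac{\R_{ai}}{\pi_{ai}})^2)^{1/2}(\frac1n\sum_i g_i^2)^{1/2}=O_p(\sqrt{|||\dmat|||_2/n})$ recovers only the estimator scale and is a factor $n^{1/2}$ too weak, whereas retaining the signed average and invoking Lemma \ref{HTcp1} produces the extra $n^{-1/2}$ that pushes the linearization error below $|||\dmat|||_2/n$. The remaining chores — replacing $\tilde\theta_i$ by $\theta_n$ in $R_i$ (a third-order term in $\Delta$) and discharging the event $\{\hat\theta_n\in B(\theta_n,\epsilon)\}$ — are routine given (i), (iii) and $\|\Delta\|_2=O_p(n^{-1/2})$, each contributing terms of order $o_p(\sqrt{|||\dmat|||_2}\,/\,n)$.
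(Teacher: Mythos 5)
Your derivations are correct, and for the main (second) claim you have reproduced the paper's own argument almost exactly: the same decomposition of $\hat{\mu}_{n,a}-\hat{\mu}_{n,a}^L$ into a linear term and a second-order remainder, the same identification of the linear term as a centered inverse-probability-weighted average handled by Lemma \ref{HTcp1} (giving $O_p(\sqrt{|||\dmat|||_2/n})$, then multiplied by $\|\hat{\theta}_n-\theta_n\|_2=O_p(n^{-1/2})$), and the same Markov-plus-condition-(iii) bound making the Hessian term $O_p(\|\hat{\theta}_n-\theta_n\|_2^2)$. The one methodological difference is in the first claim: the paper never needs to control $\frac{1}{n}\sum_i H_i^2$. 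It applies the mean value theorem to $\theta\mapsto\frac{1}{n}\sum_i\left(f^a(x_i,\theta)-f^a(x_i,\theta_1)\right)^2$, takes suprema over a ball of radius $\delta$, and solves the resulting self-bounding inequality $S\leq 2\sqrt{S}\sqrt{C_1\delta^2}$ to get $S\leq 4C_1\delta^2$ — an argument that uses only condition (ii). Your route through the second-order Lagrange remainder plus the crude bound $\max_i H_i\leq nC_2$ is also valid (the extra factor of $n$ is absorbed by $\|\Delta\|_2^4=O_p(n^{-2})$), at the cost of invoking (iii) where the paper needs only (ii); your per-$i$ intermediate points $\tilde{\theta}_i$ are in fact more careful than the paper's single intermediate point.

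Regarding your "main obstacle": you are right that the argument yields $O_p(\sqrt{|||\dmat|||_2}/n)+O_p(1/n)$, and that this is $o_p(|||\dmat|||_2/n)$ only when $|||\dmat|||_2\to\infty$. But this is not a gap you failed to close; it is a misstatement of the rate in the lemma itself. The paper's own proof stops at exactly the same two bounds and concludes $o_p(1/\sqrt{n})$, which likewise is not the literal $o_p(|||\dmat|||_2/n)$ in the regime $|||\dmat|||_2=O(1)$ of Assumption \ref{A:FODesign} (note $|||\dmat|||_2$ is also bounded below by its largest diagonal entry, which is at least $k-1$, so the stated rate would amount to $o_p(1/n)$ — strictly stronger than the $O_p(1/n)$ that either argument delivers). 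What the downstream results such as Theorem \ref{Thm:QMLE} actually use, and what both your proof and the paper's establish, is $\hat{\mu}_{n,a}-\hat{\mu}_{n,a}^L=o_p(\sqrt{|||\dmat|||_2/n})=o_p(1/\sqrt{n})$; your write-up should simply assert that corrected conclusion rather than attempt the stated one.
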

\begin{proof}
We first show $\frac{1}{n} \sum_{i}(f^a(x_i,\widehat{\theta}_n)-f^a(x_i,\theta_n))^2=O_p(n^{-1})$. We have two useful facts:
\begin{enumerate}
    \item For any $\theta_1,\theta_2,\widetilde{\theta}\in B(\theta_n,\epsilon)$ and $\|\theta_1-\theta_2\|_2\leq \delta$, by (ii),
    \begin{equation}\label{14}
        \frac{1}{n}\sum_{i=1}^n \left(\nabla_\theta f^a(x_i,\widetilde{\theta})'(\theta_1-\theta_2)\right)^2\leq \left(\frac{1}{n}\sum_{i=1}^n \|\nabla_\theta f^a(x_i,\widetilde{\theta})\|_2^2\right) \times \|\theta_1-\theta_2\|_2^2 \leq C_1\delta^2
    \end{equation}
    \item For any $\theta_1,\theta_2\in B(\theta_n,\delta)$ with $\delta<\epsilon_1$, we have
    \begin{align*}
          & \frac{1}{n}\sum_{i=1}^n\left(f^a(x_i,\theta_2)-f^a(x_i,\theta_1)\right)^2\\
          =&   \frac{2}{n}\sum_{i=1}^n \left(f^a(x_i,\widetilde{\theta}_2)-f^a(x_i,\theta_1)\right)\nabla_\theta f^a(x_i,\widetilde{\theta}_2)'(\theta_2-\theta_1)\\
          \leq &  2\sqrt{\frac{1}{n}\sum_{i=1}^n \left(f^a(x_i,\widetilde{\theta}_2)-f^a(x_i,\theta_1)\right)^2} \times \sqrt{\frac{1}{n}\sum_{i=1}^n \left(\nabla_\theta f^a(x_i,\widetilde{\theta}_2)'(\theta_1-\theta_2)\right)^2},
        \end{align*}
     where $\widetilde{\theta}_2$ is between $\theta_1$ and $\theta_2$. Taking supremum on both sides over $\theta_2,\theta_1,\widetilde{\theta}_2$, we arrive at:
     \begin{align*}
         & \sup_{\theta_1,\theta_2\in B(\theta_n,\delta)} \frac{1}{n}\sum_{i=1}^n(f^a(x_i,\theta_2)-f^a(x_i,\theta_1))^2 \\
          \leq & 2\sqrt{ \sup_{\theta_1,\theta_2\in B(\theta_n,\delta)} \frac{1}{n}\sum_{i=1}^n(f^a(x_i,\theta_2)-f^a(x_i,\theta_1))^2} \times \sqrt{C_{\ref{Lemma:Equivalence},1}\delta^2},   
     \end{align*}
     which yields:
     \begin{equation}\label{16}
         \sup_{\theta_1,\theta_2\in B(\theta_n,\delta)}\frac{1}{n}\sum_{i=1}^n(f^a(x_i,\theta_2)-f^a(x_i,\theta_1))^2 \leq 4C_{\ref{Lemma:Equivalence},1}\delta^2
     \end{equation}
    \end{enumerate}
Because $\widehat{\theta}_n-\theta_n=O_p(n^{-\frac{1}{2}})$, 
\begin{align*}
& \frac{1}{n}\sum_{i=1}^n(f^a(x_i,\widehat{\theta}_n)-f^a(x_i,\theta_n))^2\leq   \sup_{\theta_1,\theta_2\in B(\theta_n,\|\widehat{\theta}_n-\theta_n\|)}\frac{1}{n}\sum_{i=1}^n(f^a(x_i,\theta_2)-f^a(x_i,\theta_1))^2 \\
\leq & 4C_1\|\widehat{\theta}_n-\theta_n\|^2_2= O_p(n^{-1}),
\end{align*}
with probability going to one and this proves the first claim. Now the second claim follows by:
\begin{align*}
    & \widehat{\mu}_{n,a}-\widehat{\mu}_{n,a}^\tL =\frac{1}{n}\sum_{i=1}^n \left[f^a(x_i,\widehat{\theta}_n) -  f^a(x_i,\theta_n)\right] + \frac{1}{n}\sum_{i=1}^n\frac{\R_{ai}}{\bpi_{ai}} (f^a(x_i,\theta_n)-f^a(x_i,\widehat{\theta}_n))\\
     = & \frac{1}{n}\sum_{i=1}^n\nabla_{\theta}f^a(x,\theta_n)'(\widehat{\theta}_n-\theta_n) + (\widehat{\theta}_n-\theta_n)' \{\frac{1}{n}\sum_{i=1}^n \nabla_{\theta\theta}f^a(x,\widetilde{\theta}_n)\}(\widehat{\theta}_n-\theta_n) \\
     & - \frac{1}{n}\sum_{i=1}^n\frac{\R_{ai}}{\bpi_{ai}}\nabla_{\theta}f^a(x,\theta_n)'(\widehat{\theta}_n-\theta_n) - (\widehat{\theta}_n-\theta_n)' \{\frac{1}{n}\sum_{i=1}^n\frac{\R_{ai}}{\bpi_{ai}} \nabla_{\theta\theta}f^a(x,\widetilde{\theta}_n)\}(\widehat{\theta}_n-\theta_n)\\
     = & o_p\left(\frac{1}{\sqrt{n}}\right),
\end{align*}
where $\widetilde{\theta}_n$ is between $\widehat{\theta}_n-\theta_n$.
The final line follows by noticing
\begin{align*}
          & \frac{1}{n}\sum_{i=1}^n\nabla_{\theta}f^a(x,\theta_n)'(\widehat{\theta}_n-\theta_n) -  \frac{1}{n}\sum_{i=1}^n\frac{\R_{ai}}{\bpi_{ai}}\nabla_{\theta}f^a(x_i,\theta_n)'(\widehat{\theta}_n-\theta_n)\\
         = &   - \frac{1}{n}\sum_{i=1}^n\frac{\R_{ai}-\bpi_{ai}}{\bpi_{ai}}\nabla_{\theta}f^a(x_i,\theta_n)'(\widehat{\theta}_n-\theta_n)\\
         = & O_p\left(\frac{\sigma_{\max}\left(\Ome\right)}{\sqrt{n}}\right)O_p\left(\frac{1}{\sqrt{n}}\right)=o_p(n^{-\frac{1}{2}})
\end{align*}
    by (ii) and Lemma \ref{HTcp1}. 
 Because $\widetilde{\theta}_n \in B(\theta_n,\epsilon)$ with probability approaching one, the middle term can be upper bounded by
    \begin{align}
\begin{split}
        & \|\frac{1}{n}\sum_{i=1}^n\frac{\R_{ai}-\pi_{ai}}{\pi_{ai}}\nabla_{\theta\theta}f^a(x_i,\widetilde{\theta}_n)\|_1\\
        & \leq \frac{1}{n} \sum_{i=1}^n\frac{\R_{ai}}{\pi_{ai}}\sup_{\theta\in B(\theta_n,\epsilon)}\|\nabla_{\theta\theta}f^a(x_i,\theta)\|_1 + \frac{1}{n} \sum_{i=1}^n \sup_{\theta\in B(\theta_n,\epsilon)}\|\nabla_{\theta\theta}f^a(x_i,\theta)\|_1. 
\end{split}
\end{align}
The upper bound is of order $O_p(1)$ by (iii) and the Markov inequality. Together with $\widehat{\theta}_n-\theta_n=O_p(n^{-\frac{1}{2}})$, this implies the $O_p(n^{-1})$ rate for the second-order remainder term. 
\end{proof}
    
\section{Proofs for the results in Section \ref{Section:Estimation}}
\subsection{Proof of Lemma \ref{lemma:WLS}}

Notice that if $\frac{1}{n}\sum_{i=1}^n x_{si}=0$ for all $s\in [p]$, we have $ \frac{1}{n}\onesmat'X =  \begin{bmatrix}
 \I_k \vert \0_{k\times p}
    \end{bmatrix}$. Hence $ \begin{bmatrix}
\I_k \vert \0_{k\times p}
\end{bmatrix}b^{\twls}=\onesmat'Xb^{\twls}$.
Hence $\frac{1}{n}\onesmat'\left(y -  Xb^{\twls}\right)=0$ if columns of $\onesmat$ are in the column space of $\bpi\Om\X$ since
\begin{equation}
   \onesmat'\left(y -  Xb^{\twls}\right)=t'\X'\Om\bpi \left(y-\X b^{\twls}\right), 
\end{equation}
for some $t\in\mathbb{R}^{(k+p)\times k}$ such that $\onesmat=\pi\Om \X t$, by the first-order conditions of a WLS estimator.

\subsection{Proof of Theorem \ref{Thm:Consistency}}
\begin{proof}
We have by definition $\E[\widehat{m}_n^s]=m_n^s$ for $s\in [l_1]$. It can be seen, as in Lemma \ref{HTcp1}, $\V(\widehat{m}_n^s)=\frac{1}{n^2}\phi^s\dmat \phi^s\leq \frac{1}{n}\|\phi^s\|^2_2 \times \frac{1}{n}\sigma_{\max}\left(\dmat\right)$. Thus we have $\widehat{m}_n^s-m_n^s=O_p(n^{-\frac{1}{2}}\sqrt{\sigma_{\max}\left(\dmat\right)})=o_p(1)$ for $s\in [l_1]$.  By Assumption \ref{A:MomentEstimator2}-(i), $\|\widehat{\mu}_n-\mu_n\|_2\leq C_{\ref{A:MomentEstimator2}},1 \sqrt{\sum_{s=1}^{l_1} (\widehat{m}_n^s-m_n^s)^2}$. Thus $\|\widehat{\mu}_n-\mu_n\|_2=O_p(n^{-\frac{1}{2}}\sqrt{\sigma_{\max}\left(\dmat\right)})=o_p(1)$.
\end{proof}
\subsection{Proof of Corollary \ref{C:Consistency}}
\begin{proof}
We prove only for the GR estimators. Other estimators can be proved analogously. To verify Assumption \ref{A:MomentEstimator2} for the GR estimators, notice:
\begin{align*}
    \widehat{\mu}^{\tgr}_n & = \underbrace{\frac{1}{n}\onesmat'\bpiInv\R y}_{(1)} - \underbrace{(\frac{1}{n}\onesmat'\bpiInv(\R-\bpiInv)'\X)}_{(2)} \times \underbrace{(\X'\m\R\X)^{-1}}_{(3)} \times \underbrace{(\X'\m\R y)}_{(4)}\\
    & = F(k \text{ moments in (1)}, k+p \text{ moments in (2)},\frac{(k+p)(k+p+1)}{2} \\
    & \text{ moments in (3)}, k+p \text{ moments in (4)} \hspace{2pt})
\end{align*}
We rewrite the GR estimator using the moment-type estimator formulation as $F(\widetilde{y}_n,\widetilde{x}_n,\widetilde{A}_n,\widetilde{b}_n)$ where $\widetilde{y}_n\in\mathbb{R}^k$ denotes the $k$ entries corresponding to the $k$ moments in (1), $\widetilde{x}_n\in\mathbb{R}^{k+p}$  denotes the k+p entries corresponding to the k+p moments in (2), $\widetilde{A}_n\in\mathbb{R}^{(k+p)\times (k+p)}$ is a symmetric matrix corresponding to the $\frac{(k+p)(k+p+1)}{2} \text{ moments in (3)}$ and $\widetilde{b}_n\in\mathbb{R}^{k+p}$ denotes the k+p entries corresponding to the k+p moments in (4).\footnote{Some moments in (3) can appear in $\widetilde{A}_n$ twice.} Let $y_n=\frac{1}{n}\onesmat'y$, $x_n=0$, $A_n=\frac{1}{n}\X'\Om\bpi\X$ and $b=\frac{1}{n}\X\Om\bpi y$. By Assumption \ref{A:Invertibility}, the fact that $\lambda_{\min}\left(\omega\pi\right)>c>0$ and Lemma \ref{LemmaDesign}, $\lambda_{\min}(A_n)$ is bounded away from 0 uniformly for $n\geq n_0$. For a sufficient small $\epsilon$ and by Corollary 6.3.8 in \cite{horn2012matrix}, $\lambda_{\min}(\widetilde{A}_n)$ is uniformly bounded away from 0 for all $\|\widetilde{A}_n-A_n\|_2<\epsilon$.
For $\widetilde{y}_n,\widetilde{x}_n,\widetilde{A}_n$ and $\widetilde{b}_n$ that are sufficiently close to ${y}_n,0 ,A_n$ and $b_n$, we have:
\begin{align}
    &\|F(\widetilde{y}_n,\widetilde{x}_n,\widetilde{A}_n,\widetilde{b}_n)-  F({y}_n,0 ,A_n,b_n)\|_2 = \|y_n-\widetilde{y}_n-\widetilde{x}_n \widetilde{A}_n^{-1}\widetilde{b}_n\|_2\\
    &  \leq \|y_n-\widetilde{y}_n \|_2 + \|\widetilde{x}_n \widetilde{A}_n^{-1}\widetilde{b}_n\|_2\\
    & \leq \|y_n-\widetilde{y}_n\|_2 + \|\widetilde{x}_n\|_2 \times \|\widetilde{A}_n^{-1}\widetilde{b}_n\|_2
  \leq \|y_n-\widetilde{y}_n\|_2 + C\|\widetilde{x}_n\|_2,
\end{align}
where $C$ is a constant independent of $n$ and we also use Lemma \ref{LemmaBoundedCoefs} to bound $\|A_n^{-1}\widetilde{b}_n\|_2$ by a constant $C$ that is independent of $n$. Assumption \ref{A:MomentEstimator2}-(ii) is satisfied by Assumption \ref{A:BoundedFourthMoments}.
\end{proof}

\subsection{Proof of Theorem \ref{Thm:FirstOrderExpansion}}
\begin{proof}
By an argument similar to Lemma \ref{HTcp1}, we have $\widehat{m}_n^s-m_n^s=O_p(\sqrt{\sigma_{\max}\left(\dmat\right)}n^{-\frac{1}{2}})=o_p(1)$ for $s\in[l_1]$.  By Assumption \ref{A:MomentEstimator2}, we have, with probability approaching one, $\|\widehat{\mu}_n-\mu_n^\tL\|_2=O_p(\sqrt{\sigma_{\max}\left(\dmat\right)}n^{-\frac{1}{2}})$.

For any linearized estimator, it has the form:
\begin{equation*}
    \mu_n + dF_{m_n}(\widehat{m}_{n,r}-m_{n,r}) \in\mathbb{R}^k
\end{equation*}    
Note the $s$th entry of the vector $\widehat{m}_{n,r}-m_{n,r}$ has the form $\frac{1}{n}1_{\scriptscriptstyle kn}'(\R-\bpi) \phi^s$. Thus the random term can be written as:
\begin{align*}
    &dF_{m_n}(\widehat{m}_{n,r}-m_{n,r})  = \sum_{s=1}^{l_1}dF_{m_n}^s(\widehat{m}^s_{n}-m^s_{n}) = \sum_{s=1}^{l_1}dF_{m_n}^s\frac{1}{n}1_{\scriptscriptstyle kn}'\bpiInv(\R-\bpi) \phi^s\\
    &  = \sum_{s=1}^{l_1}dF_{m_n}^s\frac{1}{n}1_{\scriptscriptstyle kn}'\bpiInv(\R-\bpi) \diag(\phi^s)1_{\scriptscriptstyle kn} = \frac{1}{n}\sum_{s=1}^{l_1}dF_{m_n}^s1_{\scriptscriptstyle kn}' \diag(\phi^s)\bpiInv(\R-\bpi)1_{\scriptscriptstyle kn}\\
    & = \underbrace{\left(\frac{1}{n}\sum_{s=1}^{l_1}dF_{m_n}^s1_{\scriptscriptstyle kn}' \diag(\phi^s)\right)}_{z'}\bpiInv(\R-\bpi)1_{\scriptscriptstyle kn}
\end{align*}
Notice that $z$ involves only fixed quantities, thus we have:
\begin{equation*}
    \V(dF_{m_n}(\widehat{m}_{n,r}-m_{n,r}) ) = z'\V(\bpiInv(\R-\bpi)1_{\scriptscriptstyle kn})z = z'\dmat z
\end{equation*}
\end{proof}

\subsection{Proof of Corollary \ref{C:FirstOrderExpansion}}
\begin{proof}
We prove only for the GR estimators. The theorem can be proved in an analogous way for other estimators. Recall the form of a GR estimator:
\begin{equation*}
    \widehat{\mu}_n^{\tgr} =  \frac{1}{n}\onesmat'\bpiInv \R y -  \frac{1}{n}\onesmat' \bpiInv\R\X \widehat{b}^{\twls}_n+  \frac{1}{n}\onesmat' \X \widehat{b}^{\twls}_n,
\end{equation*}
where $\widehat{b}^{\twls}_n=(\X'\Om\R\X)^{+}(\X'\Om\R y)$ with a diagonal weighting matrix $\Om$ with strictly positive entries. The linear expansion of the GR estimator can be shown to have the form $\frac{1}{n}\onesmat'y+\frac{1}{n}\onesmat'\bpiInv(\R-\pi)(y-\X'b^{\twls}_n)$.

Note the estimators depends on five sets of moments, $\widehat{m}_{y,n}=\frac{1}{n}\onesmat'\bpiInv \R y$, $\widehat{m}_{x,n}=\frac{1}{n}\onesmat' \bpiInv\R\X$, $m_{x,n}=\frac{1}{n}\onesmat' \X $, $\widehat{m}_{xy,n}=\frac{1}{n}\X'\Om\R y$ and $\widehat{m}_{xx,n}=\frac{1}{n}\X'\Om\R \X$. We also define $m_{y,n}=\frac{1}{n}\onesmat'y$, $m_{xx,n}=\frac{1}{n}\X'\Om\bpi \X$ and $m_{xy,n}=\frac{1}{n}\X'\Om\bpi y$.
Consider moments $\widetilde{m}_{xx,n}$ and $\widetilde{m}_{xy,n}$ that are in a sufficiently small local neighborhood of $m_{xx,n}$ and $m_{xy,n}$. $\widetilde{m}_{xx,n}$ is invertible and has the smallest eigenvalue bounded away from 0 uniformly for large $n$. Note we have the bound:
\begin{align*}
\widetilde{b}^{\twls}_n-b^{\twls}_n&=\widetilde{m}_{xx,n}^{-1}\widetilde{m}_{xy,n}- m_{xx,n}^{-1}m_{xy,n}\\
&= (\widetilde{m}_{xx,n}^{-1}-m_{xx,n}^{-1})\widetilde{m}_{xy,n}+ (m_{xx,n})^{-1}(\widetilde{m}_{xy,n}-m_{xy,n})\\
& =\widetilde{m}_{xx,n}^{-1}\left(m_{xx,n}-\widetilde{m}_{xx,n}\right)m_{xx,n}^{-1}\widetilde{m}_{xy,n} + (m_{xx,n})^{-1}(\widetilde{m}_{xy,n}-m_{xy,n}) 
\end{align*}
Then,
\begin{equation*}
\|\widetilde{b}_n^{\twls}-b^{\twls}_n\|_2\leq C( \|m_{xx,n}-\widetilde{m}_{xx,n}\|_2 +\|\widetilde{m}_{xy,n}-m_{xy,n}\|_2),
\end{equation*}
where $C$ is a constant independent of $n$ for large $n$ by by Assumptions \ref{A:BoundedFourthMoments} and \ref{A:Invertibility}. Thus for a local linear approximation we have, for moments $\widetilde{m}_y$, $\widetilde{m}_x$, $\widetilde{m}_{xx,n}$ and $\widetilde{m}_{xy,n}$ in a sufficiently small local neighborhood of $m_y$, $m_x$, $m_{xx,n}$ and $m_{xy,n}$, 
\begin{align*}
F(\widetilde{m}_n)=&m_{x,n}\widetilde{m}_{xx,n}^{-1}\widetilde{m}_{xy,n} + \widetilde{m}_{y,n}-\widetilde{m}_{x,n}\widetilde{m}_{xx,n}^{-1}\widetilde{m}_{xy,n} \\
F(m_n)=& m_{y,n}\\
dF_{m_n}(\widetilde{m}_n^r-m_n^r)=&(\widetilde{m}_{y,n}-m_{y,n})- (\widetilde{m}_{x,n}-m_{x,n})m_{xx,n}^{-1}m_{xy,n}
\end{align*}
\begin{align*}
 & \|F(\widetilde{m}_n)-F(m_n)-dF_{m_n}(\widetilde{m}_n^r-m_n^r)\|_2\\
    & \leq   \|-(m_{x,n} - \widetilde{m}_{x,n})\widetilde{m}_{xx,n}^{-1}\widetilde{m}_{xy,n} - (m_{x,n} - \widetilde{m}_{x,n})m_{xx,n}^{-1}m_{xy,n}\|_2\\
    & \leq   \|(m_{x,n} - \widetilde{m}_{x,n})(\widetilde{b}^{\twls}_n - b^{\twls}_n)\|_2\\
    & \lesssim \|m_{x,n} - \widetilde{m}_{x,n}\|_2 + \|\widetilde{b}^{\twls}_n - b^{\twls}_n\|_2^2  \\
    & \lesssim \|m_{x,n} - \widetilde{m}_{x,n}\|_2^2 + \|m_{xx,n}-\widetilde{m}_{xx,n}\|^2_2+\|\widetilde{m}_{xy,n}-m_{xy,n}\|_2,
\end{align*}
where the constant is independent of $n$ by our assumption. This verifies Assumption \ref{A:MomentEstimators3} for the GR estimators.\\
\end{proof}
\subsection{Proof of Theorem \ref{Thm:ConsistentVariance}}
\begin{proof}
    They are proved by using Lemma \ref{Infeasible} and Lemma \ref{Feasible}.
\end{proof}
\subsection{Proof of Corollary \ref{C:ConsistentVariance}}
We prove only for the GR estimators. The corollary can be proved in an analogous way for other estimators. The plug-in variance bound estimator for a GR estimator is: 
\begin{equation*}
    \widehat{\widetilde{\V}}(\widehat{\mu}^{\tht})= \frac{1}{n}\onesmat'\diag(y-\X \widehat{b}^{\twls}_n)\R\dtildep\R\diag(y-\X \widehat{b}^{\twls}_n)\onesmat\frac{1}{n}.
\end{equation*}
Using the notation in Theorem \ref{Thm:ConsistentVariance}, we identify $z=\onesmat'\diag(y-\X b^{\twls}_n)\in\mathbb{R}^{kn\times k}$ and $\widehat{z}=\R \onesmat'\diag(y-\X \widehat{b}^{\twls}_n)\in\mathbb{R}^{kn\times k}$. Note $\X(\widehat{b}^{\twls}_n-b^{\twls}_n)\in\mathbb{R}^{kn}$.  We have
\begin{align*}
    &\frac{1}{n}\|\widehat{z}-\R z\|_2^2\leq \frac{1}{n}\|\onesmat'\diag(\X(\widehat{b}^{\twls}_n-b^{\twls}_n))\|_2^2=\frac{1}{n}\|\X(\widehat{b}^{\twls}_n-b^{\twls}_n)\|_2^2\\
    & \leq \frac{1}{n}\lambda_{\max}(\X'\X) \|\widehat{b}^{\twls}_n-b^{\twls}_n\|_2^2  \leq  \frac{1}{n}\|\X\|_2^2\times  \|\widehat{b}^{\twls}_n-b^{\twls}_n\|_2^2  =O_p\left(\frac{\sigma_{\max}\left(\dmat\right)}{n}\right),
\end{align*}
where $\frac{1}{n}\|\X\|_2^2=O(1)$ by Assumption \ref{A:BoundedFourthMoments} and $\|\widehat{b}^{\twls}_n-b^{\twls}_n\|_2^2=O_p\left(\sqrt{\sigma_{\max}\left(\dmat\right)}n^{-\frac{1}{2}}\right)$ by Lemma \ref{LemmaWLS}. $\frac{1}{n}\|z\|_2^2=O(1)$ by Assumption \ref{A:BoundedFourthMoments} and Lemma \ref{LemmaBoundedCoefs}.

\subsection{Proof of Theorem \ref{plug-in-inference}}

Based on the premises, we have $\widehat{\nu}_n-\widehat{\nu}_n^{\tL}=o_p\left(\sqrt{\sigma_{\max}\left(\Ome\right)}/n^{0.5}\right)$. Because $\V\left(c'\widetilde{\nu}_n^{\tL}\right)\succeq \sigma_{\max}\left(\Ome\right)/n$, we have $\left(c'\widehat{\nu}_n-c’\widehat{\nu}^{\tL}_n\right)/\sqrt{\V\left(c'\widehat{\nu}_n^{\tL}\right)}=o_p(1)$. 

We have
\begin{align}
 &\left|\frac{c'\widehat{\nu}_n-c'\nu_n} 
{\sqrt{\widehat{\widetilde{\V}}\left(c'\nu_n^{\tL}\right)}}\right|= \left|\frac{c'\widehat{\nu}_n-c'\nu_n} 
{\sqrt{\widetilde{\V}\left(c'\nu_n^{\tL}\right)}}\times \sqrt{\frac{\widetilde{\V}\left(c'\nu_n^{\tL}\right)}{\widehat{\widetilde{\V}}\left(c'\nu_n^{\tL}\right)}}\right|\\
\leq & \left|\frac{c'\widehat{\nu}_n^\tL-c'\nu_n} 
{\sqrt{\V\left(c'\nu_n^{\tL}\right)}}\right|\times \sqrt{\frac{\V\left(c'\nu_n^{\tL}\right)}{\widetilde{\V}\left(c'\nu_n^{\tL}\right)}}\sqrt{\frac{\widetilde{\V}\left(c'\nu_n^{\tL}\right)}{\widehat{\widetilde{\V}}\left(c'\nu_n^{\tL}\right)}}\times \left(1+o_p(1)\right)\\
\leq &  \left|\frac{c'\widehat{\nu}_n^\tL-c'\nu_n} 
{\sqrt{\V\left(c'\nu_n^{\tL}\right)}} \right|\left(1+o_p(1)\right).
\end{align}

Hence:
\begin{align}
& \lim\sup_n \textrm{P}_n \left( \left|c'\widehat{\nu}_n-c'\nu_n\right|\geq q(\alpha) \sqrt{\widehat{\widetilde{\V}}\left(c'\nu_n^{\tL}\right)} \right)\\
\leq & \lim\sup_n\textrm{P}_n\left( \left|c'\widehat{\nu}_n^\tL-c'\nu_n\right|(1+o_p(1))\geq q(\alpha) \sqrt{\V\left(c'\nu_n^{\tL}\right)} \right) \\
\leq & \lim\sup_n \textrm{P}_n\left( \left|c'\widehat{\nu}_n^\tL-c'\nu_n\right|\geq (1+\epsilon)^{-1}q(\alpha) \sqrt{\V\left(c'\nu_n^{\tL}\right)} \right)\\
 &  + \lim\sup_n \textrm{P}_n\left( 1+o_p(1) \ge 1+\epsilon \right),
\end{align}
for any positive $\epsilon$. The second term converges to 0. Since $q(\alpha)$ is continuous, for sufficiently small $\epsilon$, there exists a $\delta$ such that $q(\alpha-\delta)\leq  (1+\epsilon)^{-1}q(\alpha)$. Hence we have,
\begin{align}
     & \lim\sup_n \textrm{P}_n\left( \left|c'\widehat{\nu}_n^\tL-c'\nu_n\right|\geq (1+\epsilon)^{-1}q(\alpha) \sqrt{\V\left(c'\nu_n^{\tL}\right)} \right)\\
     \leq  &  \lim\sup_n \textrm{P}_n\left( \left|c'\widehat{\nu}_n^\tL-c'\nu_n\right|\geq q(\alpha-\delta) \sqrt{\V\left(c'\nu_n^{\tL}\right)} \right) \leq \alpha-\delta. 
\end{align}
Since the $\epsilon$ can be made arbitrarily small, the $\delta$ can be made arbitrarily small as well, and we have
\begin{equation}
\lim\sup_n \textrm{P}_n \left( \left|c'\widehat{\nu}_n-c'\nu_n\right|\geq q(\alpha) \sqrt{\widehat{\widetilde{\V}}\left(c'\nu_n^{\tL}\right)} \right)\leq \alpha.
\end{equation}

\section{Additional Assumptions}\label{regularity_assumptions}
We use $\nabla$ to denote the total differentiation operator. For a function $f:\mathcal{X}\subset \mathbb{R}^k\to\mathbb{R}$, $\nabla_x f$ denotes the gradient function of $f$ (if it exists), $\nabla_{xx'}f$ denotes the Hessian function of $f$ and so on. We use the partial derivative notation $\frac{\partial}{\partial x}f$ to denote the partial derivative $f$ with respect to a particular argument $x$. For a set in $\Theta\subset \mathbb{R}^s$, we use the notation $\operatorname{Bd}(\Theta)$ to denote its boundary with respect to the standard topology of a Euclidean space.

We make the following assumptions to study the probabilistic properties of QMLE-GR estimators. Let $\Theta$ be a set in a finite-dimensional Euclidean space. We define the distance from a point $\theta$ to a set $\Theta$ as $d(\theta,\Theta)=\inf_{\tilde{\theta}\in \Theta}\|\tilde{\theta}-\theta\|_2$. The following assumptions are used to obtain consistency and $\sqrt{n}$ convergence of $\hat{\theta}_n$. For a vector/matrix/tensor $A$, we use $\|A\|_1$ to denote the sum of the absolute values of the entries in $A$.
\begin{assumption}{(QMLE Criterion)}\label{A:VanillaConsistency} There exists a positive integer $N$ such that for all $n\geq N$, the following conditions are satisfied:
\begin{enumerate}[label=(\roman*)]   
    \item Let $s$ be a positive integer. The parameter space $\Theta$ is a compact set in $\mathbb{R}^{s}$ with a nonempty interior. 
    \item There exists a $\theta_n\in\Theta$ and a positive $c_{\ref{A:VanillaConsistency},2}$ such that, for any $\epsilon>0$, $\inf_{\theta\in \Theta\backslash B(\theta_n,\epsilon)}\mathcal{L}_n(\theta)-\mathcal{L}_n(\theta_n)> c_{\ref{A:VanillaConsistency},2}\epsilon^2$. There exists a $\delta>0$ such that $d(\theta_n,\mathrm{Bd}(\Theta))>\delta$.
    \item $g^a(y_{ai},x_i,\theta) $ is three-times differentiable in $\theta$ for all $x_i$ and $y_{ai}$ values.
    \item For all $\theta\in\Theta$, there exists a positive $C_{\ref{A:VanillaConsistency},4}$ such that $\frac{1}{n}\sum_{a,i} \left[ g^a(y_{ai},x_i,\theta)\right]^2<C_{\ref{A:VanillaConsistency},4}$.
    \item $|g^a(y_{ai},x_i,\theta_2)-g^a(y_{ai},x_i,\theta_1)|\leq D^a(y_{ai},x_i)h(d(\theta_1,\theta_2))$ for all $\theta_1,\theta_2\in\Theta$. $h$ is a function that does not depend on $n$ and satisfies $h(t)\to 0 $ as $t\to 0$, and $D^a(\cdot,\cdot)$ is a non-negative function of $\left(y_{ai},x_{i}\right)$. There exists a $C_{\ref{A:VanillaConsistency},5}$ such that $\frac{1}{n}\sum_{a,i} D^a(y_{ai},x_i)<C_{\ref{A:VanillaConsistency},5}$.
    \item At $\theta_n$, there exists a $C_{\ref{A:VanillaConsistency},6}$ such that  $\frac{1}{n}\sum_{a,i} \|\nabla_{\theta} g^a(y_{ai},x_i,\theta_n)\|^2_2< C_{\ref{A:VanillaConsistency},6}$ and \\ $\frac{1}{n}\sum_{a,i} \|\nabla_{\theta\theta} g^a(y_{ai},x_i,\theta_n)\|^2_2< C_{\ref{A:VanillaConsistency},6}$. 
    \item There exists a $C_{\ref{A:VanillaConsistency},8}$ and an $\epsilon>0$ such that $\frac{1}{n}\mathlarger{\sum}_{a,i}$  {\small $ \sup_{\theta\in B(\theta_n,\epsilon)}$ }$\|\nabla_{\theta\theta\theta}g^a(y_{ai},x_i,\theta)\|_1<C_{\ref{A:VanillaConsistency},8}$.\footnote{$\nabla_{\theta\theta\theta}g^a(y_{ai},x_i,\theta)$ denotes the tensor of third-order derivatives of $g^a(y_{ai},xi,\theta)$ with respect to the parameter vector $\theta$. }
    \item There exist a constant $0<C_{\ref{A:VanillaConsistency},9}<\infty$ such that  $\omega_{ai}\leq C_{\ref{A:VanillaConsistency},9}$ for all $a\in[k]$, and $i\in[n]$.
    \item There exists a positive scalar $\epsilon$ such that the smallest absolute eigenvalue of the matrix \\$\frac{1}{n}\sum_{a,i}\omega_{ai} \nabla_{\theta\theta}g^a(y_{ai},x_i,\theta_n)$ is greater than $\epsilon$.
\end{enumerate}
\end{assumption}
The following assumptions on $f^a(\cdot,\theta)$, $a\in [k]$ are used to obtain the $\sqrt{n}$ equivalence of the QMLE-GR estimators to their asymptotic linear expansions.
\begin{assumption}\label{A:Imputation}
Let $N$ be a positive integer. Uniformly for all $n\geq N$, the following conditions are satisfied for all $a\in[k]$:
\begin{enumerate}[label=(\roman*)] 
    \item $f^a(x_i,\theta)$ is two-times differentiable in $\theta$ for all $x_i$ values, $i\in[n]$.
    \item There exists a $C_{\ref{A:Imputation},2}$ such that $\frac{1}{n}\sum_{a,i}(y_{ai}-f^a(x_i,\theta_n))^2\leq C_{\ref{A:Imputation},2}$. 
    \item There exists a $C_{\ref{A:Imputation},3}$ and an $\epsilon>0$ such that $\frac{1}{n}\sum_{a,i} \|\nabla_{\theta}f^a(x_i,\theta)\|^2_2<C_{\ref{A:Imputation},3}$ for all $\theta\in B(\theta_n,\epsilon)$.
    \item There exists a $C_{\ref{A:Imputation},4}$ and an $\epsilon>0$ such that $\frac{1}{n}\sum_{a,i} \sup_{\theta\in B(\theta_n,\epsilon)} \|\nabla_{\theta\theta}f^a(x_i,\theta)\|_1<C_{\ref{A:Imputation},4}$.
    \item There exists a $C_{\ref{A:Imputation},5}$ such that $\frac{1}{n}\sum_{a,i}(y_{ai}-f^a(x_{i},\theta_n))^4\leq C_{\ref{A:Imputation},5}$.
\end{enumerate}
\end{assumption}
\begin{remark}
    These conditions are standard in the literature and are discussed in \cite{andrews1992generic}. See also \cite{newey1994large}. Assumption \ref{A:VanillaConsistency}-(i) assumes the parameter space is finite dimensional, compact, and independent of $n$. Assumption \ref{A:VanillaConsistency}-(ii) is a unique identification assumption, and it can sometimes be checked by inspecting the convexity of the criterion function. Assumptions \ref{A:VanillaConsistency}-(iii), (iv), and (v) are conditions for the uniform convergence of the criterion function. They can be checked by inspecting the Taylor expansion of the criterion function coupled with appropriate moment conditions. Assumptions \ref{A:VanillaConsistency}-(vi), (vii), and (viii) are conditions for the rate of convergence of $\hat{\theta}_n$. Assumptions\ref{A:VanillaConsistency}-(vii) is a local identification condition, which requires the curvature around $\theta_n$ to be non-vanishing.  Assumption \ref{A:Imputation} is required for $\sqrt{n}$-equivalence and asymptotic variance bound estimation. It can be checked by a Taylor expansion of imputation functions coupled with appropriate moment conditions.

    We note that our conditions are more complicated than those of \cite{guo2021generalized}. \cite{guo2021generalized} considers the case of a two-arm completely randomized experiment for which exponential inequalities and stochastic equicontinuity conditions are available. To our knowledge, such conditions are not available in our setting.
\end{remark}

\begin{assumption}\label{A:GMM2new} Let $\theta_n$ be defined in Assumption \ref{A:GMM1}. Let $N$ be a positive integer.  Uniformly for all $n\geq N$, the following conditions are satisfied:
\begin{enumerate}[label=(\roman*)] 
\item (Moments)
\begin{enumerate}
   \item (Criterion Moments) For all $\theta\in\Theta$, there exists a $C_{\ref{A:GMM2new},1}$ such that $\frac{1}{n}\sum_{a,i} \left( y_{ai}-f^a(x_i,\theta)\right)^4<C_{\ref{A:GMM2new},1}$.
    \item (Derivative Moments) For all $\theta\in\Theta$,  there exists a $C_{\ref{A:GMM2new},2}$ such that $\frac{1}{n}\sum_{a,i}\left(\frac{\partial}{\partial \theta_t}f^a(x_i,\theta)\right)^4\leq C_{\ref{A:GMM2new},2}$ and  $\frac{1}{n}\sum_{a,i}\left(\frac{\partial^2}{\partial \theta_t\partial \theta_u}f^a(x_i,\theta)\right)^4\leq C_{\ref{A:GMM2new},2}$ for all $t,u\in [s]$. 
\end{enumerate}
    \item  (Lipschitz Continuity for the Function and Its Derivatives) For all $a\in [k]$.
\begin{enumerate}
    \item  $f^a(x_i,\theta)$ is two-times differentiable in $\theta$ for all $x_i$, $i\in [n]$.
    \item  (Function) For all $\theta_1,\theta_2\in\Theta$, $|f^a(x_i,\theta_2)-f^a(x_i,\theta_1)|\leq D^{a}_{\ref{A:GMM2new},3}(x_i)\|\theta_2-\theta_1\|_2$. There exists a $C_{\ref{A:GMM2new},3}$ such that $\frac{1}{n}\sum_{a,i} (D^{a}_{\ref{A:GMM2new},3}(x_i))^2\leq C_{\ref{A:GMM2new},3}$.
     \item (First Derivatives)  For all $\theta_1,\theta_2\in\Theta$, $|\frac{\partial}{\partial \theta_t}f^a(x_i,\theta_2)- \frac{\partial}{\partial \theta_t}f^a(x_i,\theta_1)|\leq D^{a}_{\ref{A:GMM2new},4}(x_i) \|\theta_2-\theta_1\|_2$ for all $t\in [s]$. There exists a $C_{\ref{A:GMM2new},4}$ such that $\frac{1}{n}\sum_{a,i}(D^{a}_{\ref{A:GMM2new},4}(x_i))^2 \leq C_{\ref{A:GMM2new},4}$. 
\end{enumerate}    
    \item (Taylor Approximations) For all $a\in [k]$,
\begin{enumerate}
    \item (First-Order Approximation for the Function) There exists an $\epsilon$ such that for all $\theta\in B(\theta_n,\epsilon)$, $|f^a(x_i,\theta)- f^a(x_i,\theta_n)-\nabla_{\theta}f^a(x_i,\theta_n)'(\theta-\theta_n)|\leq D^{a}_{\ref{A:GMM2new},6}(x_i) \|\theta-\theta_n\|^2_2$.  There exists a $C_{\ref{A:GMM2new},6}$ such that $\frac{1}{n}\sum_{a,i}(D^{a}_{\ref{A:GMM2new},6}(x_i))^2 \leq C_{\ref{A:GMM2new},6}$.
    \item (First-Order Approximation for the First Derivative) There exists an $\epsilon$ such that for all $\theta\in B(\theta_n,\epsilon)$,  $|\frac{\partial}{\partial \theta_t}f^a(x_i,\theta)- \frac{\partial}{\partial \theta_t}f^a(x_i,\theta_n)-\nabla_{\theta} (\frac{\partial}{\partial \theta_t}f^a(x_i,\theta_n))'(\theta-\theta_n)|\leq D^{a}_{\ref{A:GMM2new},7}(x_i) \|\theta-\theta_n\|^2_2$ for all $t\in [s]$. There exists a $C_{\ref{A:GMM2new},7}$ such that $\frac{1}{n}\sum_{a,i}(D^{a}_{\ref{A:GMM2new},7}(x_i))^2 \leq C_{\ref{A:GMM2new},7}$.
\end{enumerate}
\item Denote $c'\onesmat' \diag\left(f(\theta_n)\right)$ by $f^c(\theta_n)\in\mathbb{R}^{kn}$ and $c'\onesmat\diag(y)$ by $y^c\in\mathbb{R}^{kn}$. Define the matrix $\mathcal{H}_n\in\mathbb{R}^{s\times s}$ to be the column stack of the vectors:
\begin{equation}
    \frac{1}{n}\left(y^c-f^c(\theta_n)\right)\dmat \nabla_{\theta}\left(\frac{\partial}{\partial \theta_t}f^c(\theta_n)\right) \in \mathbb{R}^{s}, t\in [s]. 
\end{equation}
There exists a positive scalar $\epsilon$ such that the smallest absolute value of the eigenvalues of the matrix $    \nabla_{\theta\theta}\mathcal{L}\left(\theta_n\right) \in \mathbb{R}^{s\times s}$, defined as,
\begin{equation}
    \nabla_{\theta\theta}\mathcal{L}\left(\theta_n\right) = -\mathcal{H}_n + \frac{1}{n}\left(\nabla_{\theta} f^c\left(\theta_n\right)\right)'\dmat \nabla f^c\left(\theta_n\right),
\end{equation}
is greater than $\epsilon$.
\end{enumerate}
\end{assumption}
\begin{remark}
Assumption \ref{A:GMM2new} can be checked by a Taylor expansion of the imputation functions coupled with appropriate moment conditions. 
\end{remark}
\section{Proofs for the results in Section \ref{Section:Nonlinear}}\label{more_results_S5}
\subsection{Proof of Theorem \ref{Thm:QMLE}}
We first check the stochastic equicontinuity for large $n$. Notice by Assumption \ref{A:VanillaConsistency}-(iv), $\sigma_{\max}\left(\Ome\right)=O(1)$ and Lemma \ref{HTcp1} we have pointwise convergence for the criterion function: for each $\theta\in\Theta$,
\begin{equation}
    \frac{1}{n}\sum_{a=1}^k\sum_{i=1}^n\frac{\R_{ai}}{\pi_{ai}}\omega_{ai}g^a( y_{i}(a),x_i,\theta) -   \frac{1}{n}\sum_{a=1}^k\sum_{i=1}^n \omega_{ai} g^a( y_{i}(a),x_i,\theta)=o_p(1).
\end{equation}
By Assumption \ref{A:VanillaConsistency}-(v), $\mathcal{L}_n(\theta)$ is continuous uniformly over $\theta\in\Theta$ and for large $n$. Then,
\begin{align}
    & \mathbf{P}_n\left(\sup_{\theta\in\Theta}\sup_{\theta'\in B(\theta,\delta)}\left| \widehat{\mathcal{L}}_n(\theta)-\mathcal{L}_n(\theta)-\left( \widehat{\mathcal{L}}_n(\theta')-\mathcal{L}_n(\theta')\right)\right|>3\epsilon\right) \\
     \leq  & \mathbf{P}_n\left(\sup_{\theta\in\Theta}\sup_{\theta'\in B(\theta,\delta)}\left|\widehat{\mathcal{L}}_n(\theta)-\widehat{\mathcal{L}}_n(\theta')\right|>2\epsilon\right) \\
     & + \mathbf{P}_n\left(\sup_{\theta\in\Theta}\sup_{\theta'\in B(\theta,\delta)}\left|\mathcal{L}_n(\theta)-\mathcal{L}_n(\theta')\right|>\epsilon\right).    
\end{align}
First note that the second term is a degenerate probability event: it happens with probability 0 for sufficiently small $\delta$. For the first term, we take a $2\delta$-covering $\mathcal{N}_{2\delta}$ of $\Theta$. Note $\mathcal{N}_{2\delta}$ is a finite set by Assumption \ref{A:VanillaConsistency}-(i). By the triangle inequality, for each $\theta$ there exists a $\tilde{\theta}\in\mathcal{N}_{2\delta}$ such that $d(\theta',\tilde{\theta})<2\delta$ for all $\theta'\in B(\theta,\delta)$. Thus we can bound the first term:
\begin{align}\label{SC}
     &\mathbf{P}_n\left(\sup_{\theta\in\Theta}\sup_{\theta'\in B(\theta,\delta)}\left|\widehat{\mathcal{L}}_n(\theta)-\widehat{\mathcal{L}}_n(\theta')\right|>2\epsilon\right) \\
    =   &  \mathbf{P}_n\left(\sup_{\theta\in\Theta}\sup_{\theta'\in B(\theta,\delta)}\left|\widehat{\mathcal{L}}_n(\theta)-\widehat{\mathcal{L}}_n(\tilde{\theta})+\widehat{\mathcal{L}}_n(\tilde{\theta})-\widehat{\mathcal{L}}_n(\theta')\right|>2\epsilon\right)\\
    \leq & 2\mathbf{P}_n\left(\frac{1}{n}\sum_{a=1}^k\sum_{i=1}^n \frac{\R_{ai}}{\pi_{ai}}\omega_{ai}D( y_{i}(a),x_i)h(2\delta)>\epsilon\right)\\
    \leq &  \frac{2C_{\ref{A:VanillaConsistency},9}}{\epsilon}\frac{1}{n}\sum_{a,i}D( y_{i}(a),x_i)h(2\delta) 
\end{align}
where the last line is by the Markov inequality and Assumption \ref{A:VanillaConsistency}-(ix). By setting $\delta$ small enough, we prove the desired inequality with Assumption \ref{A:VanillaConsistency}-(v). With Lemma \ref{Lemma:SC} and Lemma \ref{Lemma:UC}, we conclude $\widehat{\theta}_n-\theta_n=o_p(1)$. We now show $\widehat{\theta}_n-\theta_n=o_p(n^{-\frac{1}{2}})$.

By Assumption \ref{A:VanillaConsistency}-(ii), $\widehat{\theta}_n$ is in the interior of $\Theta$ with probability approaching one. Hence with probability approaching one, $\widehat{\theta}_n$ satisfies:
\begin{equation}
    \nabla_{\theta}\widehat{\mathcal{L}}_n(\widehat{\theta}_n)=\frac{1}{n}\sum_{a=1}^k\sum_{i=1}^n \frac{\R_{ai}}{\bpi_{ai}}\omega_{ai}\nabla_{\theta}g^a( y_{i}(a),x_i,\widehat{\theta}_n)=0
\end{equation}
A Taylor expansion around $\theta=\theta_n$ yields:
\begin{align}
    0=&\frac{1}{n}\sum_{a=1}^k\sum_{i=1}^n \frac{\R_{ai}}{\bpi_{ai}}\omega_{ai}\nabla_{\theta}g^a( y_{i}(a),x_i,\theta_n) \label{Taylor1}\\
    +& \frac{1}{n}\sum_{a=1}^k\sum_{i=1}^n \frac{\R_{ai}}{\bpi_{ai}}\omega_{ai}\nabla_{\theta\theta}g^a( y_{i}(a),x_i,\theta_n)(\widehat{\theta}_n-\theta_n) 
     + o\left(\|\widehat{\theta}_n-\theta_n\|\right)\label{Taylor2}.
\end{align}    
The last term is justified by bounding the higher-order reminder terms as follows: take the $t$th entry of the gradient $ \nabla_{\theta}\widehat{\mathcal{L}}_n$. This row corresponds to the partial derivative of $\widehat{\mathcal{L}}_n$ with respect to the $t$th parameter $\theta_t$.  Its Taylor expansion has the form:
\begin{align}
       0=&\frac{1}{n}\sum_{a=1}^k\sum_{i=1}^n \frac{\R_{ai}}{\bpi_{ai}}\omega_{ai}\frac{\partial}{\partial \theta_t}g^a( y_{i}(a),x_i,\theta_n) \\
       & + \frac{1}{n}\sum_{a=1}^k\sum_{i=1}^n \frac{\R_{ai}}{\bpi_{ai}}\omega_{ai}\nabla_{\theta}'\frac{\partial}{\partial \theta_t}g^a( y_{i}(a),x_i,\theta_n)(\widehat{\theta}_n-\theta_n)\\
       &+ \frac{1}{n}(\widehat{\theta}_n-\theta_n)'\sum_{a=1}^k\sum_{i=1}^n\frac{\R_{ai}}{\bpi_{ai}}\omega_{ai}\nabla_{\theta\theta}\frac{\partial}{\partial \theta_t}g^a( y_{i}(a),x_i,\tilde{\theta}_n)(\widehat{\theta}_n-\theta_n)    
\end{align}
where $\tilde{\theta}_n$ is between $\theta_{n}$ and $\hat{\theta}_{n}$. As $\widehat{\theta}_n$ enters $B(\theta_n,\epsilon)$ with probability one and by the subadditivity of the spectral norm $\viii{\cdot}$, we have,
\begin{align}\label{15}
        & \viii{\frac{1}{n}\sum_{a=1}^k\sum_{i=1}^n\frac{\R_{ai}}{\bpi_{ai}}\omega_{ai}\nabla_{\theta\theta}\frac{\partial}{\partial \theta_t}g^a( y_{i}(a),x_i,\tilde{\theta}_n)}\\
        \leq &   \frac{1}{n}\sum_{a=1}^k\sum_{i=1}^n\frac{\R_{ai}}{\bpi_{ai}}\omega_{ai}\viii{\nabla_{\theta\theta}\frac{\partial}{\partial \theta_t}g^a( y_{i}(a),x_i,\tilde{\theta}_n)}\\
        \leq &   \frac{1}{n}\sum_{a=1}^k\sum_{i=1}^n\frac{\R_{ai}}{\bpi_{ai}}\omega_{ai}\|\nabla_{\theta\theta}\frac{\partial}{\partial \theta_t}g^a( y_{i}(a),x_i,\tilde{\theta}_n)\|_2 \label{115}\\        
    \leq &  \frac{1}{n}\sum_{a=1}^k\sum_{i=1}^n\frac{\R_{ai}}{\bpi_{ai}}\omega_{ai} \sup_{\theta\in B(\theta_n,\epsilon)}\|\nabla_{\theta\theta\theta}g^a( y_{i}(a),x_i,\tilde{\theta}_n)\|_1 =O_p(1)\label{116}
\end{align}
where for (\ref{115}) we use the fact that the spectral norm is dominated by the Frobenius norm, and for (\ref{116}) we used the the equivalence of the $l_1$ and $l_2$ norm, Markov inequality, Assumption \ref{A:VanillaConsistency}-(vii), Assumption \ref{A:VanillaConsistency}-(viii) and $\sigma_{\max}\left(\Ome\right)=O(1)$.  Also by Assumption \ref{A:VanillaConsistency}-(vi), Assumption \ref{A:VanillaConsistency}-(viii) and Lemma \ref{HTcp1}, $\frac{1}{n}\sum_{a,i} \omega_{ai}\frac{\R_{ai}}{\bpi_{ai}}\nabla_{\theta\theta}g^a( y_{i}(a),x_i,\theta_n)$ converges in probability to $\frac{1}{n}\sum_{a,i}\omega_{ai} \nabla_{\theta\theta}g^a( y_{i}(a),x_i,\theta_n)$. Hence it is invertible with probability approaching one by Assumption \ref{A:VanillaConsistency}-(ix). Using the above argument and the fact that $\widehat{\theta}_n-\theta_n=o_p(1)$, we rearrange (\ref{Taylor1}) and (\ref{Taylor2}) to get to
\begin{align}
    &\left(\I_{s}+o_p(1)\right)\left(\widehat{\theta}_n-\theta_n\right) \\
    & = -\left(\frac{1}{n}\sum_{a,i}\frac{\R_{ai}}{\bpi_{ai}}\omega_{ai} \nabla_{\theta\theta}g^a( y_{i}(a),x_i,\theta_n)\right)^{-1}\frac{1}{n}\sum_{a=1}^k\sum_{i=1}^n \frac{\R_{ai}}{\bpi_{ai}}\omega_{ai}\nabla_{\theta}g^a( y_{i}(a),x_i,\theta_n) \\
    = & O_p\left(\frac{1}{\sqrt{n}}\right),    
\end{align}
where for the last line we used the first order condition $\frac{1}{n}\sum_{i=1}^n\omega_{ai}\nabla_\theta g^a( y_{i}(a),x_i,\theta_n)=0$, Assumption \ref{A:VanillaConsistency}-(vi), Assumption \ref{A:VanillaConsistency}-(viii), Lemma \ref{HTcp1} and $\sigma_{\max}\left(\Ome\right)=O(1)$. 

Because $\sigma_{\max}((\widetilde{\dmat}\otimes \widetilde{\dmat})\circ \s)=o(n)$,
$\sigma_{\max}\left(\dtildep\right)=O(1)$ and $n\V(\widehat{\mu}_{n,c}^{\tqmle,\tL})\geq c_{\ref{Thm:QMLE}}$ uniformly for all large $n$, by Assumption \ref{A:Imputation}, Lemma \ref{Lemma:Equivalence}, Lemma \ref{Infeasible} and Lemma \ref{Feasible}, and the fact that
\begin{equation}
    \|\R(y-f(\theta_n)) - \R(y-f(\widehat{\theta}_n))\|_2 \leq \| f(\theta_n)-f(\widehat{\theta}_n)\|_2,
\end{equation}
where $f(\theta)$ is the vector of imputed outcomes of all arms using coefficient $\theta$, we have,
\begin{equation}
\frac{\widehat{\widetilde{\V}}\left(\widehat{\mu}_{n,c}^{\tqmle,\tL}\right)-\widetilde{\V}\left(\widehat{\mu}_{n,c}^{\tqmle,\tL}\right)  }{\widetilde{\V}\left(\widehat{\mu}_{n,c}^{\tqmle,\tL}\right)  }=o_p(1),   
\end{equation}
and 
\begin{equation}
    \frac{\widehat{\mu}_{n,c}^{\tqmle}-\widehat{\mu}_{n,c}^{\tqmle,\tL}}{\sqrt{\widetilde{\V}\left(\widehat{\mu}_{n,c}^{\tqmle,\tL}\right)}} = \frac{1}{\sqrt{n\widetilde{\V}\left(\widehat{\mu}_{n,c}^{\tqmle,\tL}\right)}}\sqrt{n}\left(\widehat{\mu}_{n,c}^{\tqmle,\tL}-\widehat{\mu}_{n,c}^{\tqmle}\right)=o_p(1).
\end{equation}
The remainder of the proofs is similar to that for Theorem \ref{plug-in-inference}.
\subsection{Proof of Theorem \ref{Thm:NOHARM}}
We first show that
\begin{equation}
 \alpha_n^c= \frac{\frac{1}{n}c'\onesmat'\diag\left(y\right)\dmat\diag\left(f({\theta}_n)\right)\onesmat c}{\frac{1}{n}c'\onesmat'\diag\left(f({\theta}_n)\right)\dmat\diag\left(f({\theta}_n)\right)\onesmat c} 
\end{equation}
The numerator is upper bounded by
\begin{align}
     & \frac{1}{n}c'\onesmat'\diag(y)\dmat\diag(f({\theta}_n)\onesmat c\\
     \leq & \sigma_{\max}\left(\dmat\right) \times \sqrt{\frac{1}{n}\sum_{a,i} \left(c_{a}y_{i}(a)\right)^2} \times \sqrt{\frac{1}{n}\sum_{a,i} \left(c_{a}f^a(x_i,\theta_n)\right)^2} ,  
\end{align}
is bounded above uniformly in $n$ by Assumption \ref{A:BoundedFourthMoments}, $\sigma_{\max}\left(\Ome\right)=O(1)$, and Assumption \ref{A:Imputation}-(ii). The denominator is bounded below uniformly in $n$ by Assumption \ref{A:NOHARM}. Thus the imputation functions $\alpha f^a(,\theta)$, $a\in [k]$, satisfy Assumption \ref{A:Imputation}.\\
Estimator for the denominator converges at a $\sqrt{n}-$rate by Lemma \ref{Lemma:bound},
\begin{align*}
   & \frac{1}{n}c'\onesmat'f\left(\widehat{\theta}_n\right)\dmat f\left(\widehat{\theta}_n\right)\onesmat c -     \frac{1}{n}c'\onesmat'f\left(\theta_n\right)\dmat f\left(\theta_n\right)\onesmat c \\
  \leq   & 2\frac{1}{n}c'\onesmat'\left(f(\widehat{\theta}_n)-f(\theta_n)\right)\dmat f(\theta_n)\onesmat c \\
     & +  \frac{1}{n}c'\onesmat'\left(f(\widehat{\theta}_n)-f(\theta_n)\right)\dmat\left(f(\widehat{\theta}_n)-f(\theta_n)\right)\onesmat c \\
   \leq  & \|c\|_{\infty}\times \sigma_{\max}\left(\dmat\right) \times \frac{1}{n}\sum_{a,i}\left(f^a(x_i,\widehat{\theta}_n)-f^a(x_i,\theta_n)\right)^2 \\
    &+  \|c\|^2_{\infty} \times \sigma_{\max}\left(\dmat\right)\times \sqrt{\frac{1}{n}\sum_{a,i}(f^a(x_i,\theta_n))^2} \times \sqrt{\frac{1}{n}\sum_{a,i}\left(f^a(x_i,\widehat{\theta}_n)-f^a(x_i,\theta_n)\right)^2}\\
    = & O_p\left(\frac{1}{\sqrt{n}}\right),
\end{align*}
by Lemma \ref{Lemma:Equivalence}. The numerator also converges at a $\sqrt{n}-$rate. We first show:
\begin{align*}
   & \frac{1}{n}c'\onesmat'f\left(\widehat{\theta}_n\right)\dmat \bpiInv \R \diag(y)\onesmat c -     \frac{1}{n}c'\onesmat'f(\theta_n)\Ome\diag(y)\onesmat c \\
   =& \frac{1}{n}c'\onesmat'\left(f(\widehat{\theta}_n)-f(\theta_n)\right)\dmat \bpiInv \R \diag(y)\onesmat c \\
   & + \frac{1}{n}c'\onesmat'f(\theta_n)\dmat  \bpiInv \R\diag(y)\onesmat c  -\frac{1}{n}c'\onesmat'f(\theta_n)\dmat \diag(y)\onesmat c\\
   \leq &  \|c\|^2_{\infty} \times \sigma_{\max}\left(\dmat\right)\times \sqrt{\frac{1}{n}\sum_{a,i} y_{i}(a)^2} \times \sqrt{\frac{1}{n}\sum_{a}\left(f^a(x_i,\widehat{\theta}_n)-f^a(x_i,\theta_n)\right)^2} \\
   & +  \frac{1}{n}c'\onesmat'f(\theta_n)\dmat  \bpiInv (\R-\bpi)\diag(y)\onesmat c.
\end{align*}
Denote $c'\onesmat' f(\theta_n)$ by $f^c(\theta_n)\in\mathbb{R}^{kn}$ and $c'\onesmat\diag(y)$ by $y^c\in\mathbb{R}^{kn}$. The term $\frac{1}{n}c'\onesmat'f(\theta_n)\dmat  \bpiInv (\R-\bpi)\diag(y)\onesmat c$ is $O_p(n^{-\frac{1}{2}})$ by noticing:
\begin{align*}
& \frac{1}{n^2}\V\left(f^c(\theta_n)\dmat  \bpiInv (\R-\bpi)y^c\right)=\frac{1}{n^2}\V\left(\ones{kn}\diag(y^c)\bpiInv \R\dmat f^c(\theta_n)\right)\\
=&  \frac{1}{n^2}\V\left(\ones{kn}\bpiInv \R\diag(y^c)\dmat f^c(\theta_n)\right)\\
=& \frac{1}{n^2} f^c(\theta_n)'\dmat \diag(y^c)\dmat\diag(y^c)\dmat f^c(\theta_n)\\
\leq &  \frac{\viiii{\dmat}_1^2\sigma_{\max}\left(\dmat\right)}{n} \sqrt{\frac{1}{n}\sum_{a,i} (c^a y_{i}(a))^4}\times  \sqrt{\frac{1}{n}\sum_{a,i} (c^af^a(x_i,\theta_n))^4}
\end{align*}
and by Lemma \ref{Lemma:bound}, Assumption \ref{A:BoundedFourthMoments}, and Assumption \ref{A:Imputation}. Together with Assumption \ref{A:NOHARM} and the fact that $\viiii{\dmat}_1=O(1)$, this implies $\hat{\alpha}_n^c-\alpha_n^c=O_p(n^{-\frac{1}{2}})$. The remaining proof is the same as the proof in Theorem 
\ref{Thm:QMLE}.

\subsection{Proof of Theorem \ref{Thm:OC}}
Notice first:
\begin{itemize}
    \item $\left(1-\pi_{ai}\right)/\pi_{ai}\leq \sigma_{\max}\left(\dmat\right)$ for all $a\in[k]$ and $i\in[n]$ because $\dmat$ is positive-semidefinite and $\V\left(\R_{ai}/\pi_{ai}\right)=\left(1-\pi_{ai}\right)/\pi_{ai}$ is on the diagonal of $\dmat$. This implies $\max_{a,i}\{1/\pi_{ai}\}\leq \sigma_{\max}\left(\dmat\right)+1$. 
    \item $\sigma_{\max}\left(\dmat\right)\leq\viiii{\dmat}_1$ by Lemma 5.6.10 in \cite{horn2012matrix}.
\end{itemize}
We only need to check the $\sqrt{n}$-consistency $\widehat{\theta}_n$. The rest proofs are identical to those of Theorem \ref{Thm:QMLE}.
We first check the pointwise convergence of the criterion, namely
\begin{equation*}
    \widehat{\mathcal{L}}_n(\theta)- \mathcal{L}_n(\theta)=o_p(1),
\end{equation*}
pointwise in $\theta\in\Theta$.
We adopt the notation $y^c$ and $f^c\left(\theta\right)$ in the proof of Theorem \ref{Thm:NOHARM}. 
\begin{equation}
    n\widehat{\mathcal{L}}_n(\theta) = -2\left(y^c\R\bpiInv\right)'\dmat  f^c(\theta) + \left(f^c(\theta)\right)'\dmat f^c(\theta).
\end{equation}
\begin{equation}
    n \mathcal{L}_n(\theta) = -2\left(y^c\right)'\dmat f^c(\theta) + \left(f^c(\theta)\right)'\dmat f^c(\theta).   
\end{equation}
First note that $  \widehat{\mathcal{L}}_n(\theta)$ is an unbiased estimator of $ \mathcal{L}_n(\theta)$. 
The variance of $  \widehat{\mathcal{L}}_n(\theta)$ is: 
\begin{align*}
    &\frac{4}{n^2}\V\left( \left(y^c\R\bpiInv\right)' \dmat f^c(\theta)   \right)= \frac{4}{n^2} \V\left(1_{\scriptscriptstyle kn}'\diag\left(y^c\right)\R\bpiInv\Ome  f^c(\theta)\right)\\
    = &   \frac{4}{n^2} \V\left(1_{\scriptscriptstyle kn}'\bpiInv\R\diag(y^c)\Ome f^c(\theta)\right)=  \frac{4}{n^2}f^c(\theta)'\Ome'\diag\left(y^c\right)\dmat \diag\left(y^c\right)\Ome  f^c(\theta)\\
    \leq &   \sigma_{\max}\left(\dmat\right) \times \left(\frac{4}{n^2} f^c(\theta)'\Ome'\diag\left(y^c\right)\diag\left(y^c\right)\Ome f^c(\theta)\right)\\
     \leq &  \frac{4\viiii{\Ome}_1^2\sigma_{\max}\left(\dmat\right)}{n} \sqrt{\frac{1}{n}\sum_{a,i} \left(c_a y_{i}(a)\right)^4}\times  \sqrt{\frac{1}{n}\sum_{a,i} (c_a f^a(x_i,\theta))^4}
\end{align*}
where for the last inequality we use Lemma \ref{Lemma:bound}. Because $\viiii{\Ome}_1=O(1)$ and $\sigma_{\max}\left(\dmat\right)=O(1)$ and under Assumption \ref{A:BoundedFourthMoments} and Assumption \ref{A:GMM2new}-(i)-(a), the term above is $o_p(1)$. \\
Next we check stochastic equicontinuity as in the proof of Theorem \ref{Thm:QMLE}:
\begin{align*}
 &\frac{1}{n}\left(y^c\bpiInv\R\right)'\dmat f^c(\theta_1) - \frac{1}{n}\left(y^c\bpiInv\R\right)'\dmat  f^c(\theta_2) \\
 = & \frac{1}{n}\left(y^c\bpiInv\R\right)'\dmat \left(f^c(\theta_1) -  f^c(\theta_2)\right) \\
 \leq &  \sigma_{\max}\left(\dmat\right) \times \max_{a,i}\{\frac{1}{\pi_{ai}}\} \sqrt{\frac{1}{n}\sum_{a,i}\left(c_ay_i(a)\right)^2} \times \sqrt{\frac{1}{n}\sum_{a,i}(c_af^a(x_i,\theta_1)-c_af^a(x_i,\theta_2))^2}\\
\leq &  C \|\theta_1-\theta_2\|_2
\end{align*}
for a constant $C$ bounded above uniformly for large $n$ by Assumption \ref{A:BoundedFourthMoments} and Assumption
\ref{A:GMM2new}-(ii)-(b). Using the same steps as in Theorem \ref{Thm:QMLE}, we can prove stochastic equicontinuity upon noticing
\begin{align}
    & \textrm{P}_n\left(|\widehat{\mathcal{L}}_n(\theta_1)-\widehat{\mathcal{L}}(\theta_2)|>2\epsilon\right)  \leq  \textrm{P}_n\left(C\|\theta_2-\theta_1\|_2>2\epsilon \right) \to 0 
\end{align}
as $\|\theta_1-\theta_2\|_2\to 0$. Further with Assumption \ref{A:GMM1}-(ii) and by Lemma \ref{Lemma:SC} and Lemma \ref{Lemma:UC}, we have $\widehat{\theta}_n-\theta_n=o_p(1)$.

Now we establish the rate of convergence. We define $\frac{\partial}{\partial \theta_t}f^c(\theta)\equiv c'\onesmat'\diag(\frac{\partial}{\partial \theta_t} f(\theta))\in \mathbb{R}^{kn}$.  Let $\nabla_{\theta} f^c(\theta)\in \mathbb{R}^{kn\times s}$ denote the column stacks of $\frac{\partial}{\partial \theta_t}f^c(\theta), t\in [s]$ .
By Assumption \ref{A:GMM1}-(ii), with probably approaching one, $\widehat{\theta}_n$ is in the interior of $\Theta$ and satisfies the first order condition:\footnote{We omit the constant factor $2$.}
\begin{equation}\label{eq:25}
  -\frac{1}{n} \left(y^c\R\bpiInv\right)'\dmat \nabla_{\theta} f^c\left(\widehat{\theta}_n\right) + \frac{1}{n} f^c\left(\widehat{\theta}_n\right)\dmat   \nabla_{\theta} f^c\left(\widehat{\theta}_n\right)=0
\end{equation}
For the $t$th entry of the equalities above, it can be written as:
\begin{equation}
  -\frac{1}{n}  \left(y^c\R\bpiInv\right)'\dmat \frac{\partial}{\partial \theta_t} f^c\left(\widehat{\theta}_n\right) + \frac{1}{n} f^c\left(\widehat{\theta}_n\right)\dmat \frac{\partial}{\partial \theta_t} f^c\left(\widehat{\theta}_n\right)=0 \label{entry_t}
\end{equation}
We expand:
\begin{align*}
    0 = &   -\frac{1}{n}  \left(y^c\R\bpiInv\right)'\dmat \frac{\partial}{\partial \theta_t} f^c\left(\theta_n \right) + \frac{1}{n}  f^c\left(\theta_n\right)\dmat \frac{\partial}{\partial \theta_t}f^c\left(\theta_n\right)\\
     & -    \underbrace{\frac{1}{n}\left(y^c\R\bpiInv\right)'\dmat \left(\frac{\partial}{\partial \theta_t} f^c\left( \widehat{\theta}_n \right)- \frac{\partial}{\partial \theta_t} f^c\left(\theta_n \right)\right) }_{(A)}\\
    &+ \underbrace{ \frac{1}{n}   
\left(f^c\left(\theta_n\right)\right)'\dmat \left(\frac{\partial}{\partial \theta_t} f^c\left(\widehat{\theta}_n\right)- \frac{\partial}{\partial \theta_t}f^c\left(\theta_n\right)\right) }_{(B)} + \underbrace{   \frac{1}{n} \left(f^c\left(\widehat{\theta}_n\right)-f^c\left(\theta_n\right)\right)'\dmat \frac{\partial}{\partial \theta_t} f^c\left(\theta_n\right) }_{(C)}\\   
    & + \underbrace{   \frac{1}{n} \left(f^c\left(\widehat{\theta}_n\right)-f^c\left(\theta_n\right)\right)'\dmat \left(\frac{\partial}{\partial \theta_t} f^c\left( \widehat{\theta}_n\right)-\frac{\partial}{\partial \theta_t} f^c\left( \theta_n\right) \right) }_{(D)}
\end{align*}
We study (A), (B), (C), and (D). For (A), we have the decomposition:
\begin{align}
   &  \frac{1}{n}\left(y^c\R\bpiInv\right)'\dmat \left(\frac{\partial}{\partial \theta_t} f^c\left( \widehat{\theta}_n \right)- \frac{\partial}{\partial \theta_t} f^c\left(\theta_n \right)\right)\\
  =  & \frac{1}{n}\left(y^c\R\bpiInv\right)'\dmat \left(\frac{\partial}{\partial \theta_t} f^c\left( \widehat{\theta}_n \right)- \frac{\partial}{\partial \theta_t} f^c\left(\theta_n \right)-\left(\nabla_{\theta}\frac{\partial}{\partial t}f^c\left(\theta_n\right)\right)\left(\widehat{\theta}_n -\theta_n\right) \right) \\
  & +  \frac{1}{n}\left(y^c\R\bpiInv\right)'\dmat\left( \left(\nabla_{\theta}\frac{\partial}{\partial \theta_t}f^c\left(\theta_n\right)\right)\left(\widehat{\theta}_n -\theta_n\right)\right)\\
  & = O\left( \|\widehat{\theta}_n-\theta_n\|_2^2\right) + \frac{1}{n}\left(y^c\R\bpiInv\right)'\dmat\left( \left(\nabla_{\theta}\frac{\partial}{\partial \theta_t}f^c\left(\theta_n\right)\right)\left(\widehat{\theta}_n -\theta_n\right)\right),
\end{align}
where the last line is by Assumption \ref{A:BoundedFourthMoments} , Assumption \ref{A:GMM2new}-(iii)-(b) and that $\sigma_{\max}\left(\Ome\right)=O(1)$. 

To study (B), we have the decomposition:
\begin{align}
& \frac{1}{n}   \left(f^c\left(\theta_n\right)\right)'\dmat \left(\frac{\partial}{\partial \theta_t} f^c\left(\widehat{\theta}_n\right)- \frac{\partial}{\partial \theta_t}f^c\left(\theta_n\right)\right) \\
 = & \frac{1}{n}  \left(f^c\left(\theta_n\right)\right)'\dmat \left(\frac{\partial}{\partial \theta_t} f^c\left(\widehat{\theta}_n\right)- \frac{\partial}{\partial \theta_t}f^c\left(\theta_n\right)-\left(\nabla_{\theta}\frac{\partial}{\partial \theta_t}f^c\left(\theta_n\right)\right)\left(\widehat{\theta}_n-\theta_n\right)\right)\\
 & + \frac{1}{n}   \left(f^c\left(\theta_n\right)\right)'\dmat \left(\left(\nabla_{\theta}\frac{\partial}{\partial \theta_t}f^c\left(\theta_n\right)\right)\left(\widehat{\theta}_n-\theta_n\right)\right)\\
  = &  O\left( \|\widehat{\theta}_n-\theta_n\|_2^2\right) +  \frac{1}{n}   \left(f^c\left(\theta_n\right)\right)'\dmat \left(\left(\nabla_{\theta}\frac{\partial}{\partial \theta_t}f^c\left(\theta_n\right)\right)\left(\widehat{\theta}_n-\theta_n\right)\right),
\end{align}
by Assumption \ref{A:BoundedFourthMoments}, Assumption \ref{A:GMM2new}-(i)-(a), Assumption \ref{A:GMM2new}-(iii)-(b) and $\sigma_{\max}\left(\Ome\right)=O(1)$.

To study (C), we have the decomposition,
\begin{align}
  &   \frac{1}{n} \left(\frac{\partial}{\partial \theta_t} f^c\left(\theta_n\right)\right)'\dmat \left(f^c\left(\widehat{\theta}_n\right)-f^c\left(\theta_n\right)\right) \\
 = &   \frac{1}{n} \left(\frac{\partial}{\partial \theta_t} f^c\left(\theta_n\right)\right)'\dmat \left(f^c\left(\widehat{\theta}_n\right)-f^c\left(\theta_n\right)-\left(\nabla_{\theta} f^c(\theta_n)\right)\left(\widehat{\theta}_n-\theta_n\right)\right) \\
  & +  \frac{1}{n} \left(\frac{\partial}{\partial \theta_t} f^c\left(\theta_n\right)\right)'\dmat \left(\nabla_{\theta} f^c(\theta_n)\right)\left(\widehat{\theta}_n-\theta_n\right)\\
 = & O\left( \|\widehat{\theta}_n-\theta_n\|_2^2\right) + \frac{1}{n} \left(\frac{\partial}{\partial \theta_t} f^c\left(\theta_n\right)\right)'\dmat \left(\nabla_{\theta} f^c(\theta_n)\right)\left(\widehat{\theta}_n-\theta_n\right),
\end{align}
by Assumption \ref{A:GMM2new}-(i)-(b) and Assumption \ref{A:GMM2new}-(iii)-(a) and $\sigma_{\max}\left(\Ome\right)=O(1)$.

For $(D)$, we have:
\begin{align}
     & \frac{1}{n} \left(f^c\left(\widehat{\theta}_n\right)-f^c\left(\theta_n\right)\right)'\dmat \left(\frac{\partial}{\partial \theta_t} f^c\left( \widehat{\theta}_n\right)-\frac{\partial}{\partial \theta_t} f^c\left( \theta_n\right) \right)\\
    = & O\left( \|\widehat{\theta}_n-\theta_n\|_2^2\right),
\end{align}
by Assumption \ref{A:GMM2new}-(ii)-(b), Assumption \ref{A:GMM2new}-(ii)-(c) and $\sigma_{\max}\left(\Ome\right)=O(1)$.
 
To summarize, for the $t$th column of the equalities in equation (\ref{entry_t}), we have:
\begin{align}
    0 = &  -\frac{1}{n}  \left(y^c\R\bpiInv\right)'\dmat \frac{\partial}{\partial \theta_t} f^c\left(\theta_n \right) + \frac{1}{n}  f^c\left(\theta_n\right)\dmat \frac{\partial}{\partial \theta_t}f^c\left(\theta_n\right)\\
     & - \frac{1}{n}\left(y^c\R\bpiInv\right)'\dmat\left( \left(\nabla_{\theta}\frac{\partial}{\partial \theta_t}f^c\left(\theta_n\right)\right)\left(\widehat{\theta}_n -\theta_n\right)\right)\\
     & +  \frac{1}{n}   \left(f^c\left(\theta_n\right)\right)'\dmat \left(\left(\nabla_{\theta}\frac{\partial}{\partial \theta_t}f^c\left(\theta_n\right)\right)\left(\widehat{\theta}_n-\theta_n\right)\right)\\
    & + \frac{1}{n} \left(\frac{\partial}{\partial \theta_t} f^c\left(\theta_n\right)\right)'\dmat \left(\nabla_{\theta} f^c(\theta_n)\right)\left(\widehat{\theta}_n-\theta_n\right)+O\left( \|\widehat{\theta}_n-\theta_n\|_2^2\right).
\end{align}
Define $\widehat{\mathcal{H}}_n$ to be the column stack of the vectors $\frac{1}{n}\left(y^c\R\bpiInv-f^c\left(\theta_n\right)\right)\dmat \nabla_{\theta}\frac{\partial}{\partial \theta_t}f^c\left(\theta_n\right)$, $t\in [s]$.

Stacking the previous expressions together, we have the expression:
\begin{align*}
0 =&-\frac{1}{n}  \left(y^c\R\bpiInv\right)'\dmat \nabla_{\theta} f^c\left(\theta_n \right) + \frac{1}{n}  f^c\left(\theta_n\right)\dmat \nabla_{\theta} f^c\left(\theta_n\right)\\
 &- \widehat{\mathcal{H}}_n \left(\widehat{\theta}-\theta_n\right)\\
& + \frac{1}{n}\left(\nabla_{\theta} f^c\left(\theta_n\right) \right)'\dmat\left(\nabla_{\theta} f^c\left(\theta_n\right) \right)  \left(\widehat{\theta}-\theta_n\right)+O\left( \|\widehat{\theta}_n-\theta_n\|_2^2\right).
\end{align*}
Define,
\begin{align}
     &\widehat{A}_n = -\widehat{\mathcal{H}}_n + \frac{1}{n}\left(\nabla_{\theta} f^c\left(\theta_n\right) \right)'\dmat\left(\nabla_{\theta} f^c\left(\theta_n\right) \right), \\
   &  A_n = -\mathcal{H}_n + \frac{1}{n}\left(\nabla_{\theta} f^c\left(\theta_n\right) \right)'\dmat\left(\nabla_{\theta} f^c\left(\theta_n\right) \right), \\
   & \widehat{b}_n = -\frac{1}{n}  \left(y^c\R\bpiInv\right)'\dmat \nabla_{\theta} f^c\left(\theta_n \right) + \frac{1}{n}  f^c\left(\theta_n\right)\dmat \nabla_{\theta} f^c\left(\theta_n\right),
\end{align}
where $\mathcal{H}_n$ is defined in Assumption \ref{A:GMM2new}-(iv).  Note that $\E\left[\widehat{b}_n\right]=0$ by Assumption \ref{A:GMM1}-(ii), and $\widehat{b}_n=O_p(n^{-\frac{1}{2}})$ by Assumption \ref{A:BoundedFourthMoments}, Assumption \ref{A:GMM2new}-(i)-(a), Lemma \ref{Lemma:bound} and the fact that $\viiii{\dmat}_1=O(1)$.

We have $\widehat{A}_n-A_n=o_p(1)$ by Assumption \ref{A:BoundedFourthMoments}, Assumption \ref{A:GMM2new}-(i), Lemma \ref{Lemma:bound} and the fact that $\viiii{\dmat}_1=O(1)$. Further, we have $A_n=O(1)$ by Assumption \ref{A:GMM2new}-(i), and $A_n^{-1}=O(1)$ and $\widehat{A}_n^{-1}=O_p(1)$ by Assumption \ref{A:GMM2new}-(iv).
Hence we have,
\begin{align}
    \widehat{\theta}_n - \theta_n = \widehat{A}_n^{-1} \widehat{b}_n=O_p\left(\sqrt{\frac{1}{n}}\right).
\end{align}
Rest of the proofs are similar to those in Theorem \ref{Thm:QMLE}.

\subsection{Proof of Theorem \ref{Thm:OCI}}
The proof is identical to that of the Theorem \ref{Thm:NOHARM}.

\section{Proof for Results in Appendix \ref{network_experiments}}\label{proof_network_experiment}
This section provides a proof for Theorem \ref{Theorem:NetworkLinearAdj}. 
\begin{proof}
Assumptions \ref{A:networkprobability} and  \ref{A:dependencegraph} directly imply Assumptions that $\sigma_{\max}\left(\dmat\right)=O(1)$, $\viiii{\dmat}_1=O(1) $, $\sigma_{\max}((\widetilde{\dmat}\otimes \widetilde{\dmat})\circ \s)=o(n)$, and $\sigma_{\max}\left(\dtildep\right)=O(1)$. For linear models, results for the QMLE-GR estimators follow from results for the GR estimators directly with Corollaries \ref{C:Consistency}, \ref{C:FirstOrderExpansion} and \ref{C:ConsistentVariance}. Results for No-harm-GR estimators follow with Assumption \ref{A:NOHARM}. For linear regression adjustments, the optimal GR estimator has a closed form solution 
\begin{equation}
    \beta=(\X'\diag(c'\onesmat')\dmat\diag(c'\onesmat')\X)^{-1}(\X'\diag(c'\onesmat')\dmat\diag(c'\onesmat')y),
\end{equation}
for which we need Assumption $\ref{A:GMM1}$-(ii) to ensure the invertibility of the design matrix. Assumption \ref{A:GMM1}-(i) is not needed because the optimal GR estimator has a closed-form solution. Assumption \ref{A:GMM2new} is satisfied by the linearity of the imputation functions and Assumption \ref{A:BoundedFourthMoments}. \\
We now consider logistic adjustments. For notational simplicity, we redefine $\tilde{x}_i=[1,x_i]$. 
We have the following Taylor expansions:
\begin{align*}
    & g^a(y_{ai},x_i,\beta)=y_{ai}\tilde{x}_i'\beta -\log (1+\exp(\tilde{x}_i'\beta))\\
    &\nabla_{\beta} g^a(y_{ai},x_i,\beta)=y_{ai}\tilde{x}_i - \frac{\tilde{x}_i\exp(\gamma^a+x_i'\beta)}{1+\exp(\gamma^a+x_i'\beta)}\\
    &  \nabla_{\beta\beta} g^a(y_{ai},x_i,\beta)=-\tilde{x}_i\tilde{x}_i' \frac{\exp(\gamma^a+x_i'\beta)}{\left(1+\exp(\gamma^a+x_i'\beta)\right)^2}\\
    & \nabla_{\beta\beta\beta_k} g^a(y_{ai},x_i,\beta)=-\tilde{x}_{ik}\tilde{x}_i\tilde{x}_i'  \frac{\exp(\gamma^a+x_i'\beta)\left(1-\exp(\gamma^a+x_i'\beta)\right)}{\left(1+\exp(\gamma^a+x_i'\beta)\right)^3},
\end{align*}
and \begin{align*}
    & f^a(x_i,\beta) = \frac{\exp(\gamma^a+x_i'\beta)}{1+\exp(\gamma^a+x_i'\beta)}\\
    & \nabla_{\beta}f^a(x_i,\beta)=\tilde{x}_i\frac{\exp(\gamma^a+x_i'\beta)}{\left(1+\exp(\gamma^a+x_i'\beta)\right)^2}\\
    & \nabla_{\beta\beta}f^a(x_i,\beta)=\tilde{x}_i\tilde{x}_i'\frac{\exp(\gamma^a+x_i'\beta)\left(1-\exp(\gamma^a+x_i'\beta)\right)}{\left(1+\exp(\gamma^a+x_i'\beta)\right)^3}\\
    & \nabla_{\beta\beta\beta_k}f^a(x_i,\beta)=\tilde{x}_{ik}\tilde{x}_i\tilde{x}_i' \times \\
   &  \Bigg[\frac{\exp(\gamma^a+x_i'\beta)\left(1-\exp(\gamma^a+x_i'\beta)\right)}{\left(1+\exp(\gamma^a+x_i'\beta)\right)^3}
    -  \frac{\exp^2(\gamma^a+x_i'\beta)}{\left(1+\exp(\gamma^a+x_i'\beta)\right)^3}\\
    - &\frac{3\exp^2(\gamma^a+x_i'\beta)\left(1-\exp(\gamma^a+x_i'\beta\right)}{\left(1+\exp(\gamma^a+x_i'\beta)\right)^4} \Bigg]\\
     = & \tilde{x}_{ik}\tilde{x}_i\tilde{x}_i' \times \left( \frac{\exp\left(\gamma^a+x_i'\beta\right)-4\exp^2\left(\gamma^a+x_i'\beta\right)+\exp^3\left(\gamma^a+x_i'\beta\right) }{\left(1+\exp(\gamma^a+x_i'\beta)\right)^4}\right).
\end{align*}
Notice that besides the moment terms involving the covariates $\tilde{x}_i$ and $y_{ai}$, all other terms are bounded uniformly by a constant.

We now verify Assumptions \ref{A:VanillaConsistency}, \ref{A:Imputation} and \ref{A:GMM2new}. 
\subsection{Logistic Model: QMLE-GR and No-Harm GR}

\begin{itemize}
    \item Assumption \ref{A:VanillaConsistency}-(i),(ii) are assumed.
    \item Assumption \ref{A:VanillaConsistency}-(iii) follows by the form of the $g^a(\cdot)$ functions.
    \item Assumption \ref{A:VanillaConsistency}-(iv) follows by Assumption \ref{A:BoundedFourthMoments}
    upon noticing $
        |\log(1+\exp(x))|\leq |\log (2\exp(|x|))|\leq \log 2 + |x|$.
    \item Assumption \ref{A:VanillaConsistency}-(v) follows by observing:
    \begin{align*}
    & \left|g^a(y_{ai},x_i,\beta_1)-g^a(y_{ai},x_i,\beta_2)\right| = \left|\nabla_{\beta} g^a(y_{ai},x_i,\tilde{\beta})'(\beta_1-\beta_2)\right|\\
    = & \left|\left(y_{ai}\tilde{x}_i - \frac{\tilde{x}_i\exp(\tilde{x}_i'\tilde{\beta})}{1+\exp(x_i'\tilde{\beta})}\right)'(\beta_1-\beta_2)\right|\\
     \leq &  \|y_{ai}\tilde{x}_i\|_2 \times \|\beta_1-\beta_2\|_2 + \|\tilde{x}_i\|_2 \times \|\beta_1-\beta_2\|_2,
\end{align*}
where $\tilde{\beta}$ is a value between $\beta_1$ and $\beta_2$. Thus $D^a(y_{ai},x_i)=\|y_{ai}\tilde{x}_i\|_2+\|\tilde{x}_i\|_2$ and then follows by Assumption \ref{A:BoundedFourthMoments}. 
    \item Assumption \ref{A:VanillaConsistency}-(vi),(viii) follow by Assumption \ref{A:BoundedFourthMoments}  and the fact that all nonlinear terms in the first, second and third derivatives of $g$ are bounded.
    \item Assumption \ref{A:VanillaConsistency}-(viii) follows because $\omega_{ai}=1$ and $\omega_{ai}=\pi_{ai}$.    
    \item Assumption \ref{A:VanillaConsistency}-(ix) is assumed.

    \item Assumption \ref{A:Imputation}-(i) follows because $f$ is a smooth function of $\beta$ for all $x_i$ and $y_{ai}$. 
    \item Assumption \ref{A:Imputation}-(ii) follows by Assumption \ref{A:BoundedFourthMoments}.
    \item Assumption \ref{A:Imputation}-(iii),(iv),(v) follow by Assumption \ref{A:BoundedFourthMoments} and the fact that all nonlinear terms in the first, second derivatives of $f$ are bounded. 
    \item Assumption \ref{A:NOHARM} and Assumption \ref{A:OCI} are assumed.
\end{itemize}
\subsection{Logistic Model: Opt-GR}
\begin{itemize}
    \item Assumption \ref{A:GMM1}: assumed.
    \item Assumption \ref{A:GMM2new}-(i) follows because the conditions on the criterion moment is implied by Assumption \ref{A:BoundedFourthMoments} and the conditions on the derivative moments are implied by Assumption \ref{A:Bounded8thmoment}.
    \item Assumption \ref{A:GMM2new}-(ii) can be checked by inspecting the Taylor expansion. We prove this for first derivatives. Conditions for the function can be checked analogously.  By the mean value theorem
    \begin{align*}
       &\frac{\partial}{\partial \theta_t }f^a(x_i,\theta)- \frac{\partial}{\partial \theta_t}f^a(x_i,\theta_n)= \nabla_{\theta}\frac{\partial}{\partial \theta_t}f^a(x_i,\tilde{\theta}_n)'(\theta-\theta_n)\\
       \leq &  \|\nabla_{\theta}\frac{\partial}{\partial \theta_t}f^a(x_i,\tilde{\theta}_n)\|_2 \times  \|\theta-\theta_n\|_2     
    \end{align*}
    where $\tilde{\theta}_n$ is between $\theta$ and $\theta_n$. We have the inequality:
    \begin{equation*}
        \|\nabla_{\theta}\frac{\partial}{\partial \theta_t}f^a(x_i,\tilde{\theta}_n)\|^2_2\leq  C\sum_{j,k\in[s]}(\tilde{x}_{ik}\tilde{x}_{ij})^2
    \end{equation*}
    where $C$ is a constant independent of $\theta$, $\theta_n$, $x_i$ or $n$, and $\tilde{x}_{ik}$ is the $k$th entry of unit $i$'s covariate vector $\tilde{x}_i$.
Thus, we have:
    \begin{align*}
       &\frac{\partial}{\partial \theta_t}f^a(x_i,\theta)- \frac{\partial}{\partial \theta_t}f^a(x_i,\theta_n)= \nabla_{\theta}\frac{\partial}{\partial \theta_t}f^a(x_i,\tilde{\theta}_n)'(\theta-\theta_n)\\
       \leq & \sqrt{C \sum_{j,k\in[s]}(\tilde{x}_{ik}\tilde{x}_{ij})^2} \times  \|\theta-\theta_n\|_2.  
    \end{align*}
The conditions for the second derivative then follow by noticing $\frac{1}{n}\sum_{i}\sum_{j,k\in [s]}(\tilde{x}_{ik}\tilde{x}_{ij})^2$ are bounded uniformly in $n$ by Assumption \ref{A:Bounded8thmoment} and a H\"{o}lder's inequality.
\item Assumption \ref{A:GMM2new}-(iii) can be proved similarly as Assumption \ref{A:GMM2new}-(ii) with a second order Taylor expansion.
\item Assumption \ref{A:GMM2new}-(iv) is assumed.
\end{itemize}
\end{proof}
\section{Checking regularity conditions for a two-arm completely randomized design}\label{CR}
In this section we check the condition $\sigma_{\max}\left(\dmat\right)=O(1)$, $\sigma_{\max}\left(\dtildep\right)=O(1)$, $\sigma_{\max}((\tilde{\dmat}\otimes \tilde{\dmat})\circ \s)=o(n)$ for a two-arm complete randomization with the treatment probability strictly bounded between 0 and 1.
Table \ref{Table:CRFirstOrder} and Table \ref{Table:CRSecondOrder} provide detail calculations for entries in $\dmat$, and $(\tilde{\dmat}\otimes \tilde{\dmat})\circ \s$.

Recall the setup of a two-arm complete randomization: for a sample of size $n$, $n_t$ units are randomly selected into the treatment group, and the rest of the $n_c=n-n_t$ units are selected into the control group. We assume that there exists positive $c$ and $C$ such that $0<c<\frac{n_t}{n}<C<1$ for all large $n$. Define the rescaled demeaning matrix $\A_n=\frac{n}{n-1}\left(\I_{n}-\frac{1}{n}\ones{n}\ones{n}'\right)\in\mathbb{R}^{n\times n}$. We note that the first-order design matrix for the design is 
\begin{equation}
    \dmat=\begin{bmatrix}
    \frac{n_c}{n_t} \A_n & -\A_n\\
  -\A_n &  \frac{n_t}{n_c} \A_n'
\end{bmatrix}\in\mathbb{R}^{2n\times 2n}.
\end{equation}
We note that the variance bound being used is the Neyman bound. 
The standard Neyman bound (e.g. see \cite{imbens2015causal}) matrix can be written as 
\begin{equation}\label{NeymanBound}
\widetilde{\dmat}^{\tgn}=\dmat+\begin{bmatrix}
\A_n&  \A_n\\
\A_n & \A_n
\end{bmatrix}=\begin{bmatrix}
  \frac{n}{n_t} \A_n&  0\\
0&   \frac{n}{n_c} \A_n
\end{bmatrix}\in\mathbb{R}^{2n\times 2n}.
\end{equation}

As shown in Table \ref{Table:CRFirstOrder}, the absolute row sum of the matrix $\dmat$ is either $\frac{n_t}{n_c}+\frac{n_c}{n_t}+1+1=O(1)$ or $\frac{n_c}{n_t}+\frac{n_t}{n_c}+1+1=O(1)$. Thus, $\sigma_{\max}\left(\dmat\right)=O(1)$. We can use the same argument as to establish that $\sigma_{\max}\left(\dtildep^{\tgn}\right)=O(1)$ and we omit for brevity.

\indent To check $\sigma_{\max}((\tilde{\dmat}^\tgn\otimes \tilde{\dmat}^\tgn)\circ \s)=o(n)$, we use Lemma \ref{LemmA:VanillaRootn}. Table \ref{Table:CRSecondOrder} contains information about the slice indexed by $\R_{1i}$, $i\in[n]$. For information about the slice indexed by $\R_{0i}$, $i\in[n]$, one can simply exchange the roles of $n_t$ and $n_c$ in Table \ref{Table:CRSecondOrder}.We see $\sigma_{\max}((\tilde{\dmat}^\tgn\otimes \tilde{\dmat}^\tgn)\circ \s)=O(1)$ using Lemma \ref{LemmA:VanillaRootn}. \\
\indent This calculation can be easily generalized to check similar assumptions for cluster randomization designs with uniform bounded cluster sizes and stratified randomization designs with a finite number of strata. 
\begin{table*}
\caption{Entries for $\dmat$ of a two-arm completely randomized design}\label{Table:CRFirstOrder}
\begin{tabular}{ccccc}
\hline
     Partition  & Entry Type & Count &  Entry Value & $\frac{1}{n}$Count$\scriptstyle \times$Entry Absolute Value   \\
     \hline $\dmat_{11}$   & Diagonal & $\scriptstyle n$ & $\scriptstyle \frac{n_c}{n_t}$   & $\scriptstyle \frac{n_c}{n_t}=  O(1)$
                 \\        & Off-diagonal & $\scriptstyle n(n-1)$ & $\scriptstyle -\frac{n_c}{n_t(n-1)}$   & $\scriptstyle -\frac{n_c}{n_t}=  O(1)$
    \\ \hline $\dmat_{12}$ or $\dmat_{21}$   & Diagonal & $\scriptstyle  2n$ & $\scriptstyle  -1$   & $\scriptstyle  2=O(1)$
                 \\        & Off-diagonal & $\scriptstyle 2n(n-1)$ & $\scriptstyle \frac{1}{(n-1)}$   & $\scriptstyle  2=O(1)$
     \\ \hline $\dmat_{22}$   & Diagonal & $\scriptstyle n$ & $\scriptstyle \frac{n_t}{n_c}$   & $\scriptstyle \frac{n_t}{n_c}=\scriptstyle O(1)$
                 \\        & Off-diagonal & $\scriptstyle n(n-1)$ & $\scriptstyle-\frac{n_t}{n_c(n-1)}$   & $\scriptstyle -\frac{n_t}{n_c}= O(1)$ 
                 \\ \hline
    \end{tabular}
\legend{$n$ is the number of units. $n_t$ is the number of units in the treatment group, and $n_c$ is the number of units in the control group. The matrix $\dmat$ is divided into four block matrices with $\dmat_{ab}$, $a,b=1,2$. The block matrix $\dmat_{ab}$ consists of positions between rows $(a-1)n+1$ and $an$ and columns $(b-1)n+1$ and $bn$. The Entry Type column indicates whether the entries are diagonal or off-diagonal in the $\dmat_{ab}$ matrix. The Count column counts the number of entries of each type. The Entry Value column records the value of the entry of each type.}
\end{table*}

\begin{landscape}
\begin{table*}
\caption{Entries for $(\tilde{\dmat}^N\otimes \tilde{\dmat}^N)\circ \s$ for a two-arm completely randomized design with a fixed $\R_{1i}$} \label{Table:CRSecondOrder}
\begin{tabular}{ccccc}
\hline 
jkl pattern & \makecell[c]{ $\mathbf{COV}(\R_{1i}\R_{1j},\R_{1k}\R_{1l})\frac{\tilde{\dmat}^N_{ij}\tilde{\dmat}^N_{kl}}{\pi_{ij}^{11}\pi_{lk}^{11}}$\\ $(v_{111})$}  & \makecell[c]{$\mathbf{COV}(\R_{1i}\R_{1j},\R_{0k}\R_{0l})\frac{\tilde{\dmat}^N_{ij}\tilde{\dmat}^N_{kl}}{\pi_{ij}^{11}\pi_{lk}^{00}}$\\ $(v_{100})$}  & Count &  O(1)?\\
\hline 
i=j=k=l &  $\frac{n^2n_c}{n_t^3}$ & $-\frac{n^2}{n_tn_c}$ & 1 & Y\\
j$\not=$i, k=l=i & $-\frac{n^2n_c}{n_t^3(n-1)}$ &$\frac{n^2}{(n-1)n_cn_t}$  &n-1 & Y\\
k$\not=$i, j=l=i & $-\frac{n^2n_c}{n_t^3(n-1)}$&$\frac{n^2}{(n-1)n_tn_c}$&n-1 & Y\\
l$\not=$i, j=k=i & $-\frac{n^2n_c}{n_t^3(n-1)}$ &$\frac{n^2}{(n-1)n_tn_c}$&n-1 & Y\\
j$=$i, k$=$l & $-\frac{n_cn^2}{(n-1)n_t^3}$  & $\frac{n^2}{n_tn_c(n-1)}$ & n-1 & Y \\
j$=$k, i$=$l & $\frac{n_c(n+n_t-1)n^2}{n_t(n_t-1)n_c^2(n-1)^2}$ &  -$\frac{n^2}{n_tn_c(n-1)^2}$&  n-1 & Y\\
j$=$l, i$=$k & $\frac{n_c(n+n_t-1)n^2}{n_t(n_t-1)n_c^2(n-1)^2}$ &  -$\frac{n^2}{n_tn_c(n-1)^2}$&  n-1 & Y \\
j$=$i, k$\not=$l$\not=$i &$\frac{2n_cn^2}{n_t^3(n-1)(n-2)}$& $\frac{2n^2}{n_tn_c(n-1)(n-2)}$   &(n-1)(n-2)& Y\\
k=i, j$\not=$l$\not=$i & $\left[\frac{(n_t-2)n(n-1)}{(n-2)n_t(n_t-1)}-1\right]\frac{n^2}{n_t^2(n-1)^2}$  & $-\frac{n^2}{n_cn_t(n-1)^2}$ & (n-1)(n-2)& Y\\
l=i, j$\not=$k$\not=$i & $\left[\frac{(n_t-2)n(n-1)}{(n-2)n_t(n_t-1)}-1\right]\frac{n^2}{n_t^2(n-1)^2}$ & $-\frac{n^2}{n_cn_t(n-1)^2}$ & (n-1)(n-2) & Y\\
i$\not=$j$\not=$k$\not=$l &$\frac{(4n-6)n_t(n_t-1)-(4n_t-6)n(n-1)}{(n-2)(n-3)n_t(n_t-1)}\frac{n^2}{n_t^2(n-1)^2}$ & $\frac{(4n-6)n^2}{(n-2)(n-3)n_tn_c(n-1)^2}$ & (n-1)(n-2)(n-3)& Y\\
\hline 
\end{tabular}
\legend{This table computes the entries of the tensor $(\tilde{\dmat}^N\otimes \tilde{\dmat}^N)\circ \s$ for completely randomized experiments with the Neyman variance bound. WLOG, the $i$th entry is fixed, and indices $j$, $k$, and $l$ vary. The first column lists all possible $(i,j,k,l)$ patterns. The second column computes the case for $\mathbf{COV}(\R_{1i}\R_{1j},\R_{1k}\R_{1l})\frac{\tilde{\dmat}^N_{ij}\tilde{\dmat}^N_{kl}}{\pi_{ij}\pi_{lk}}$. The third column computes the case for $\mathbf{COV}(\R_{1i}\R_{1j},\R_{0k}\R_{0l})\frac{\tilde{\dmat}^N_{ij}\tilde{\dmat}^N_{kl}}{\pi_{ij}\pi_{lk}}$. The fourth column  counts the number of such patterns with $i$ fixed. The last column confirms if the sum ($(|v_{111}|+|v_{100}|)$)$*$Count is of the order $O(1)$. Note that with the Neyman bound, patterns involving $\mathbf{COV}(\R_{1i}\R_{0j},\R_{1k}\R_{0l})$ and $\mathbf{COV}(\R_{1i}\R_{1j},\R_{1k}\R_{0l})$ (the off-diagonal block) are multiplied with zeros, so we do not need to check them here. $\pi_{ij}^{11}=\E[\R_{1i}\R_{1j}]$ and $\pi_{ij}^{00}=\E[\R_{0i}\R_{0j}]$.  }   
\end{table*}
\end{landscape}

\section{Simulation Details}\label{AppendixSimulationDetails}
The dataset of \cite{cai2015social} is retrieved from the Harvard Dataverse \cite{DVN/CXDJM0_2018}.\footnote{The data used in the simulations below are available at \url{https://www.openicpsr.org/openicpsr/project/113593/version/V1/view}.}  We use the 0422allinfoawnet.dta file to extract social network information and 0422survey.dta file to extract pretreatment covariates information on experimental units. We output the .dta files to .csv files and import them in R for data cleaning. We italicize the variable names in the dataset hereafter. 

Each household is an experimental unit associated with an identifier (\textit{id}). Each household may nominate at most five other households as friends (\textit{network\_id}). The pretreatment covariates we use in the simulation are \textit{male}, \textit{age}, \textit{agpop}, \textit{ricearea2010}, \textit{risk\_average}, \textit{disaster\_prob}, and \textit{literacy}. We choose these pretreatment covariates to mimic the specification of Column 6 in Table 2 of the paper \cite{cai2015social}. In addition, we use \textit{village} and \textit{address} variables. \textit{address} indicates the natural village that a household belongs to. \textit{village} indicates the administrative village that a household belongs to. Administrative villages are larger units and consist of natural villages. The experimental design in the paper is a natural-village stratified design: within each natural village, households are stratified according to household sizes (\textit{agpop}) and rice production areas (\textit{ricearea2010}) and are randomly assigned to different treatment arms.
\subsection{Data Construction }
We take on four tasks: 1) defining the population of interest;  2) create strata for random treatment assignments; 3) impute potential outcomes; 4) impute missing pretreatment covariates data.
\subsubsection{Defining the Population of Interest}
We start with 4902 households in the survey.dta dataset (hereafter \textit{survey} dataset) and 4984 households in the 0422allinfoawnet.dta dataset (hereafter \textit{network} dataset). 
\begin{itemize}
    \item In the \textit{survey} dataset, we drop households with missing household sizes (\textit{agpop}) or rice production areas (\textit{ricearea2010}) information. The two variables are used for stratification.
    \item In the \textit{network} dataset, we remove rows with self-nominations and repeated nominations. 
    \item In the \textit{network} dataset, we remove households (both in \textit{id} and \textit{network\_id} columns) with no match from the \textit{survey} dataset. This is because defining strata requires household sizes (\textit{agpop}) and rice production areas (\textit{ricearea2010}) information, which is in the \textit{survey} dataset. 
    \item Our filtered dataset has 4509 households with friendship network information and they form our population of interest. Together with their friends, there are a total of 4806 units that are randomly assigned to different treatment arms.
\end{itemize}
\subsubsection{Strata for random treatment assignments}\label{A:treatment_assignments}
Our randomization procedure assigns households into four different treatment arms: First Round Simple (FRS), First Round Intensive (FRI), Second Round Simple (FRS) and Second Round Intensive (FRI). We consider three classes of experimental designs: 1) a finely stratified natural-village level randomization; 2) a natural-village level randomization; 3) Bernoulli designs. 

For the finely stratified natural-village level randomization, we partition households in each natural villages into four groups based on their household sizes and rice production areas. For each natural village, we calculate the medians of these two variables and classify households into four groups: LL (below median household sizes \& below median rice production areas), LH (below median household sizes \& above median rice production areas), HL (above median household sizes \& below median rice production areas), HH (above median household sizes \& above median rice production areas). In some villages, there are strata with less than four households. For these strata, we merge each with a stratum of the same type from another village. We choose the other village such that they belong to the same connected component of the friendship network as the natural village of the stratum to be merged. Households in each stratum are then completely randomized to four treatment arms. If the number of households are not a multiple of four, there are at most three remainder households. We assign the first remainder household to Second Round Intensive, the second remainder household (if exists) to Second Round Simple and the third household (if exists) to First Round Intensive.

For the natural-village level randomization, we first merge two villages with less than 10 people each with one other village. With each natural village, we randomly assign households to four arms with proportions $\frac{1}{10}$ in FRS, $\frac{1}{10}$ in FRI, $\frac{2}{5}$ in SRS, and $\frac{2}{5}$ in SRI. Practically, we repeat the vector $(4,3,2,1,4,3,4,3,4,3)$ for each village. When the number of households is not a multiple of 10, we start from leftmost of the vector until all remainder households are exhausted.

For Bernoulli designs, we randomly assign households to four treatment arms with probabilities (1/4,1/4,1/4,1/4) (Design C.1), (1/5,1/5,3/10,3/10) (Design C.2), (1/6,1/6,1/3,1/3) (Design C.3), and (1/9,2/9,1/3,1/3) (Design C.4).  
\subsubsection{Imputing potential outcomes for the scenarios \textit{Sim-Impu}}
We impute the potential outcomes using a logit model $y_{ai}=1\{\Pr\left(\beta_a+x_i'\beta_a\right)>\epsilon_i\},\hspace{2pt} \epsilon_i\sim \text{Uniform}(0,1)$. The pretreatment covariates we use are \textit{male}, \textit{age}, \textit{agpop}, \textit{ricearea2010}, \textit{risk\_average}, \textit{disaster\_prob}, and \textit{literacy}. The coefficients are estimated based on the realized data from \cite{cai2015social}. The potential outcomes are kept fixed across simulations.
\subsubsection{Imputing potential outcomes for the scenarios \textit{Sim-Optimal}}
We impute the potential outcomes such that the Opt-GR estimators will have an efficiency gain over the QMLE-GR estimators. Specifically, for each pair of average potential outcomes, e.g. exposure 3 and exposure 4, we take the sum of the first 5 eigenvectors corresponding to the largest  eigenvalue of the matrix $(\I_{2n}-\X(\X'\X)^{-1}\X')'\sigma_{\max}\left(\Ome^c\right)(\I_{2n}-\X(\X'\X)^{-1}\X')-(\I_{2n}-\X(\X'\dmat\X)^{-1}\X')'\sigma_{\max}\left(\Ome^c\right)(\I_{2n}-\X(\X'\dmat\X)^{-1}\X')$, add 0.5 onto each entry and round them to the nearest integer.\footnote{With $c=(-1,1)$, $\frac{1}{n}y'(\I_{2n}-\X(\X'\X)^{-1}\X')'\Ome^c(\I_{2n}-\X(\X'\X)^{-1}\X')y$ is the asymptotic variance of the WLS estimator for  and $\frac{1}{n}y'(\I_{2n}-\X(\X'\Ome^c\X)^{-1}\X')'\Ome^c(\I_{2n}-\X(\X'\Ome^c\X)^{-1}\X'\Ome^c)y$ is the asymptotic variance of the Opt-GR estimator with a linear model. }
\subsection{Missing Pretreatment Covariates Information}
For the 4509 subjects of interest, the pretreatment covariates have a few missing data points but the missingness patterns are not severe ($\leq1\%$ for all variables).  Following the recommendation of \cite{lin2016standard}, we impute the missing values of each covariate column to the overall mean of the corresponding covariate column.
 \subsection{Implementation Details}
 \subsubsection{Calculating the first-order design matrix $\dmat$}
For our simulations, we are comparing the average potential outcomes of 2 arms with at most 4509 experimental units, so the dimension of $\dmat$ is at most 9018-by-9018. To calculate $\dmat$, we compute the covariance matrix $\V(\ones{kn}\R)\in\mathbb{R}^{kn\times kn}$ and the first-order assignment probabilities $\bpi=\E[\R]$ by simulation with the Welford's online algorithm.\footnote{We choose the online algorithm because it uses less computer memory.  } The $\dmat$ is then calculated using the formula $\dmat=\bpiInv\V(\ones{kn}\R)\bpiInv$. The number of the simulation is $10^8$. This is informed by Remark 4.7.2 in \cite{vershynin2018high} with a relative error $0.01$ in terms of the $\viiii{\cdot}_2$ norm for calculating $\V(\ones{kn}\R)\in\mathbb{R}^{kn\times kn}$. With $10^8$ simulations and the smallest benchmark probability being 0.03\footnote{The benchmark probability is the probability that a unit has no friend assigned to the first round with a Bernoulli design with probabilities $(0.25,0.25,0.25,0.25,0.25)$. The probability is $0.5^5\approx 0.03$.}, results of \cite{fattorini} (P275) suggest the element-wise estimation bias of the first order assignment probability is $\frac{(1-0.03)^{10^8}}{0.03}\approx 0$. 
\subsubsection{Estimators}
We use the command lm() and glm() in \cite{Rstats} to estimate the OLS, WLS and Logit models. IPW, OPT-GR (linear), OPT-I GR (linear) and OPT-I GR (logit) estimators have closed-form expressions. The OPT-GR (logit) estimator is the solution to a minimization problem, which is solved by the optim() command in R. Our default algorithm is the Broyden–Fletcher–Goldfarb–Shanno algorithm (BFGS) algorithm.

\section{Simulation Results}\label{AppendixSimulationResults}
This section includes the complete simulation results in Section \ref{Section:Simulation}.
\begin{center}
\begin{table}[ht]
    \caption{ Simulation Results for Comparing Exposures 3 and 4 in Design C.2 and Scenario \textit{Sim-Impute}, $\sqrt{\sigma_{\max}\left(\Ome\right)/n}=0.156$, $n=4509$ }\label{Table:34DesignC2Impute}
\centering
\begin{tabular}{rrrrrrrr}
  \hline
 & IPW &  WLS & LOGIT & \makecell{OPT \\ {\scriptsize(Linear)}} & \makecell{OPT \\ {\scriptsize(Logit)}} & \makecell{Opt-I \\ {\scriptsize(WLS)}} & \makecell{Opt-I \\ {\scriptsize(Logit)}} \\ 
  \hline
$\textrm{Bias}^2\times n$ & 0.00  & 0.00 & 0.06 & 1.57 & 1.54 & 0.17 & 0.18 \\ 
  $\textrm{Variance}\times n$ & 19.77  & 8.95 & 8.89 & 10.67 & 9.23 & 9.28 & 9.16 \\ 
  $\textrm{MSE}\times n$ & 19.77  & 8.95 & 8.89 & 13.14 & 11.60 & 9.31 & 9.20 \\ 
  $\textrm{Est. Var. Bound}^2\times n$ & 19.82  & 8.66 & 8.78 & 10.62 & 9.12 & 9.20 & 9.02 \\ 
  $\textrm{True Asy. Variance}\times n$ & 19.38  & 8.83 & 8.83 & 8.81 & 8.71 & 8.83 & 8.77 \\ 
  $\textrm{95\% Normal CI Coverage}$ & 0.95 & 0.94 & 0.94 & 0.93 & 0.93 & 0.94 & 0.94 \\ 
   \hline
\end{tabular}
\end{table}
\legend{Table \ref{Table:34DesignC2Impute} reports simulation results for comparing exposures 3 and 4 in Design C.2 in the Sim-Impute scenario. n=4509 is the sample size. The number of simulations is 10000. IPW refers to the inverse-probability weighted estimator, LOGIT refers to the QMLE-GR estimator with a logit model, WLS refers to the inverse-probability weighted least squares estimator, OPT (Linear) refers to the Opt-GR estimator with a linear model, OPT (Logit) refers to the Opt-GR estimator with a logit model, Opt-I (WLS) refers to the Opt-I GR estimator with the WLS model, and Opt-I (Logit) refers to the Opt-I GR estimator with the logit model. \\
The row \textit{$\textrm{Bias}^2\times n$} reports the squared biases of the estimators. The row $\textrm{Variance}\times n$ reports the variances of the estimators. The row $\textrm{MSE}\times n$ reports the MSEs of the estimators. The row $\textrm{Est. Var. Bound}^2\times n$ reports the averaged estimates of the variance bound estimators. The row $\textrm{True Asy. Variance}\times n$ reports the theoretical asymptotic variances. The first five rows are normalized by the sample size. The row  \textit{95\% Normal CI Coverage} reports the empirical coverage rates of nominal 95 percent confidence intervals. }
\end{center}
\newpage

\begin{center}
\begin{table}[ht]
    \caption{ Simulation Results for Comparing Exposures 3 and 6 in Design C.2 and Scenario \textit{Sim-Impute}, $\sqrt{\sigma_{\max}\left(\Ome\right)/n}=0.160$, $n=4303$ }\label{Table:36DesignC2Impute}
\centering
\begin{tabular}{rrrrrrrrr}
  \hline
 & IPW &  WLS & LOGIT & \makecell{OPT \\ {\scriptsize(Linear)}} & \makecell{OPT \\ {\scriptsize(Logit)}} & \makecell{Opt-I \\ {\scriptsize(WLS)}} & \makecell{Opt-I \\ {\scriptsize(Logit)}} \\ 
  \hline
$\textrm{Bias}^2\times n$ & 0.00  & 0.03 & 0.01 & 0.07 & 0.10 & 0.00 & 0.00 \\ 
  $\textrm{Variance}\times n$ & 36.77  & 13.84 & 13.60 & 15.29 & 14.79 & 14.20 & 14.23 \\ 
  $\textrm{MSE}\times n$ & 36.77  & 13.88 & 13.61 & 15.35 & 14.90 & 14.20 & 14.23 \\ 
  $\textrm{Est. Var. Bound}^2\times n$ & 38.23  & 13.34 & 13.55 & 15.64 & 14.87 & 14.14 & 14.12 \\ 
  $\textrm{True Asy. Variance}\times n$ & 36.25 & 13.48 & 13.48 & 13.47 & 13.44 & 13.47 & 13.47 \\ 
  $\textrm{95\% Normal CI Coverage}$ & 0.95 & 0.94 & 0.94 & 0.95 & 0.95 & 0.95 & 0.95 \\ 
   \hline
\end{tabular}
\end{table}
\legend{Table \ref{Table:36DesignC2Impute} reports simulation results for comparing exposures 3 and 6 in Design C.2 in the Sim-Impute scenario. n=4303 is the sample size. The number of simulations is 10000. IPW refers to the inverse-probability weighted estimator, LOGIT refers to the QMLE-GR estimator with a logit model, WLS refers to the inverse-probability weighted least squares estimator, OPT (Linear) refers to the Opt-GR estimator with a linear model, OPT (Logit) refers to the Opt-GR estimator with a logit model, Opt-I (WLS) refers to the Opt-I GR estimator with the WLS model, and Opt-I (Logit) refers to the Opt-I GR estimator with the logit model. \\
The row \textit{$\textrm{Bias}^2\times n$} reports the squared biases of the estimators. The row $\textrm{Variance}\times n$ reports the variances of the estimators. The row $\textrm{MSE}\times n$ reports the MSEs of the estimators. The row $\textrm{Est. Var. Bound}^2\times n$ reports the averaged estimates of the variance bound estimators. The row $\textrm{True Asy. Variance}\times n$ reports the theoretical asymptotic variances. The first five rows are normalized by the sample size. The row  \textit{95\% Normal CI Coverage} reports the empirical coverage rates of nominal 95 percent confidence intervals. }
\end{center}

\newpage

\begin{center}
\begin{table}[ht]
    \caption{ Simulation Results for Comparing Exposures 3 and 4 in Design C.4 and Scenario \textit{Sim-Impute}, $\sqrt{\sigma_{\max}\left(\Ome\right)/n}=0.115$, $n=4509$ }\label{Table:34DesignC4Impute}
\centering
\begin{tabular}{rrrrrrrrr}
  \hline
 & IPW &  WLS & LOGIT & \makecell{OPT \\ {\scriptsize(Linear)}} & \makecell{OPT \\ {\scriptsize(Logit)}} & \makecell{Opt-I \\ {\scriptsize(WLS)}} & \makecell{Opt-I \\ {\scriptsize(Logit)}} \\ 
  \hline
$\textrm{Bias}^2\times n$ & 0.00 & 0.00 & 0.00 & 0.01 & 0.01 & 0.00 & 0.00 \\ 
  $\textrm{Variance}\times n$ & 17.68  & 7.81 & 7.70 & 7.98 & 7.68 & 7.86 & 7.83 \\ 
  $\textrm{MSE}\times n$ & 17.68  & 7.81 & 7.70 & 7.99 & 7.69 & 7.86 & 7.83 \\ 
  $\textrm{Est. Var. Bound}^2\times n$ & 17.74 & 7.66 & 7.71 & 8.14 & 7.85 & 7.88 & 7.79 \\ 
  $\textrm{True Asy. Variance}\times n$ & 17.30  & 7.68 & 7.68 & 7.68 & 7.60 & 7.68  & 7.62 \\ 
  $\textrm{95\% Normal CI Coverage}$ & 0.95 &  0.95 & 0.95 & 0.95 & 0.95 & 0.95 & 0.95 \\ 
   \hline
\end{tabular}
\end{table}
\legend{Table \ref{Table:34DesignC4Impute} reports simulation results for comparing exposures 3 and 4 in Design C.4 in the Sim-Impute scenario. n=4509 is the sample size. The number of simulations is 10000. IPW refers to the inverse-probability weighted estimator, LOGIT refers to the QMLE-GR estimator with a logit model, WLS refers to the inverse-probability weighted least squares estimator, OPT (Linear) refers to the Opt-GR estimator with a linear model, OPT (Logit) refers to the Opt-GR estimator with a logit model, Opt-I (WLS) refers to the Opt-I GR estimator with the WLS model, and Opt-I (Logit) refers to the Opt-I GR estimator with the logit model. \\
The row \textit{$\textrm{Bias}^2\times n$} reports the squared biases of the estimators. The row $\textrm{Variance}\times n$ reports the variances of the estimators. The row $\textrm{MSE}\times n$ reports the MSEs of the estimators. The row $\textrm{Est. Var. Bound}^2\times n$ reports the averaged estimates of the variance bound estimators. The row $\textrm{True Asy. Variance}\times n$ reports the theoretical asymptotic variances. The first five rows are normalized by the sample size. The row  \textit{95\% Normal CI Coverage} reports the empirical coverage rates of nominal 95 percent confidence intervals. }
\end{center}
\newpage

\begin{center}
\begin{table}[ht]
    \caption{ Simulation Results for Comparing Exposures 3 and 6 in Design C.4 and Scenario \textit{Sim-Impute}, $\sqrt{\sigma_{\max}\left(\Ome\right)/n}=0.106$, $n=4303$ }\label{Table:36DesignC4Impute}
\centering
\begin{tabular}{rrrrrrrrr}
  \hline
 & IPW  & WLS & LOGIT & \makecell{OPT \\ {\scriptsize(Linear)}} & \makecell{OPT \\ {\scriptsize(Logit)}} & \makecell{Opt-I \\ {\scriptsize(WLS)}} & \makecell{Opt-I \\ {\scriptsize(Logit)}} \\ 
  \hline
$\textrm{Bias}^2\times n$ & 0.00  & 0.01 & 0.00 & 0.36 & 0.41 & 0.01 & 0.01 \\ 
  $\textrm{Variance}\times n$ & 25.96  & 9.12 & 8.99 & 9.49 & 9.46 & 9.22 & 9.23 \\ 
  $\textrm{MSE}\times n$ & 25.96  & 9.13 & 8.99 & 9.85 & 9.86 & 9.23 & 9.24 \\ 
  $\textrm{Est. Var. Bound}^2\times n$ & 27.80  & 9.27 & 9.37 & 10.07 & 9.89 & 9.66 & 9.65 \\ 
  $\textrm{True Asy. Variance}\times n$ & 25.81  & 8.98 & 8.98 & 8.97 & 8.96 & 8.98 & 8.99 \\ 
  $\textrm{95\% Normal CI Coverage}$ & 0.96  & 0.95 & 0.95 & 0.95 & 0.95 & 0.95 & 0.95 \\ 
   \hline
\end{tabular}
\end{table}
\legend{Table \ref{Table:36DesignC4Impute} reports simulation results for comparing exposures 3 and 6 in Design C.4 in the Sim-Impute scenario. n=4303 is the sample size. The number of simulations is 10000. IPW refers to the inverse-probability weighted estimator, LOGIT refers to the QMLE-GR estimator with a logit model, WLS refers to the inverse-probability weighted least squares estimator, OPT (Linear) refers to the Opt-GR estimator with a linear model, OPT (Logit) refers to the Opt-GR estimator with a logit model, Opt-I (WLS) refers to the Opt-I GR estimator with the WLS model, and Opt-I (Logit) refers to the Opt-I GR estimator with the logit model. \\
The row \textit{$\textrm{Bias}^2\times n$} reports the squared biases of the estimators. The row $\textrm{Variance}\times n$ reports the variances of the estimators. The row $\textrm{MSE}\times n$ reports the MSEs of the estimators. The row $\textrm{Est. Var. Bound}^2\times n$ reports the averaged estimates of the variance bound estimators. The row $\textrm{True Asy. Variance}\times n$ reports the theoretical asymptotic variances. The first five rows are normalized by the sample size. The row  \textit{95\% Normal CI Coverage} reports the empirical coverage rates of nominal 95 percent confidence intervals. }
\end{center}

\newpage
\begin{center}
\begin{table}[ht]
    \caption{ Simulation Results for Comparing Exposures 3 and 4 in Design C.2 and Scenario \textit{Sim-Optimal}, $\sqrt{\sigma_{\max}\left(\Ome\right)/n}=0.156$, $n=4509$ }\label{Table:34DesignC2Optimal}
\centering
\begin{tabular}{rrrrrrrr}
  \hline
 & IPW & WLS & LOGIT & \makecell{OPT \\ {\scriptsize(Linear)}} & \makecell{OPT \\ {\scriptsize(Logit)}} & \makecell{Opt-I \\ {\scriptsize(WLS)}} & \makecell{Opt-I \\ {\scriptsize(Logit)}} \\ 
  \hline
$\textrm{Bias}^2\times n$ & 0.00 & 0.06 & 0.02 & 1.45 & 1.51 & 0.20 & 0.15 \\ 
  $\textrm{Variance}\times n$ & 27.42 & 13.27 & 12.61 & 11.38 & 11.39 & 11.19 & 11.46 \\ 
  $\textrm{MSE}\times n$ & 27.43 & 13.33 & 12.64 & 12.82 & 12.90 & 11.39 & 11.61 \\ 
  $\textrm{Est. Var. Bound}^2\times n$ & 28.82 & 12.86 & 12.70 & 11.63 & 11.53 & 11.60 & 11.71 \\ 
  $\textrm{True Asy. Variance}\times n$ & 27.54 & 12.77 & 12.33 & 10.64 & 10.61 & 10.67 & 10.80 \\ 
  $\textrm{95\% Normal CI Coverage}$ & 0.95 & 0.94 & 0.94 & 0.94 & 0.93 & 0.95 & 0.95 \\ 
   \hline
\end{tabular}
\end{table}
\legend{Table \ref{Table:34DesignC2Optimal} reports simulation results for comparing exposures 3 and 4 in Design C.2 in the Sim-Optimal scenario. n=4509 is the sample size. The number of simulations is 10000. IPW refers to the inverse-probability weighted estimator, LOGIT refers to the QMLE-GR estimator with a logit model, WLS refers to the inverse-probability weighted least squares estimator, OPT (Linear) refers to the Opt-GR estimator with a linear model, OPT (Logit) refers to the Opt-GR estimator with a logit model, Opt-I (WLS) refers to the Opt-I GR estimator with the WLS model, and Opt-I (Logit) refers to the Opt-I GR estimator with the logit model. \\
The row \textit{$\textrm{Bias}^2\times n$} reports the squared biases of the estimators. The row $\textrm{Variance}\times n$ reports the variances of the estimators. The row $\textrm{MSE}\times n$ reports the MSEs of the estimators. The row $\textrm{Est. Var. Bound}^2\times n$ reports the averaged estimates of the variance bound estimators. The row $\textrm{True Asy. Variance}\times n$ reports the theoretical asymptotic variances. The first five rows are normalized by the sample size. The row  \textit{95\% Normal CI Coverage} reports the empirical coverage rates of nominal 95 percent confidence intervals. }
\end{center}

\newpage
\begin{center}
\begin{table}[ht]
    \caption{ Simulation Results for Comparing Exposures 3 and 6 in Design C.2 and Scenario \textit{Sim-Optimal}, $\sqrt{\sigma_{\max}\left(\Ome\right)/n}=0.160$, $n=4303$ }\label{Table:36DesignC2Optimal}
\centering
\begin{tabular}{rrrrrrrr}
  \hline
 & IPW & WLS & LOGIT & \makecell{OPT \\ {\scriptsize(Linear)}} & \makecell{OPT \\ {\scriptsize(Logit)}} & \makecell{Opt-I \\ {\scriptsize(WLS)}} & \makecell{Opt-I \\ {\scriptsize(Logit)}} \\ 
  \hline
$\textrm{Bias}^2\times n$ & 0.00 & 0.01 & 0.03 & 0.01 & 0.05 & 0.01 & 0.01 \\ 
  $\textrm{Variance}\times n$ & 40.40 & 17.66 & 16.86 & 15.65 & 15.71 & 15.47 & 15.56 \\ 
  $\textrm{MSE}\times n$ & 40.40 & 17.67 & 16.89 & 15.66 & 15.76 & 15.48 & 15.57 \\ 
  $\textrm{Est. Var. Bound}^2\times n$ & 43.99 & 17.84 & 17.76 & 16.63 & 16.46 & 16.71 & 16.72 \\ 
  $\textrm{True Asy. Variance}\times n$ & 40.50 & 17.59 & 17.23 & 14.83 & 14.81 & 14.91 & 14.85 \\ 
  $\textrm{95\% Normal CI Coverage}$ & 0.96 & 0.95 & 0.95 & 0.95 & 0.95 & 0.95 & 0.95 \\ 
   \hline
\end{tabular}
\end{table}
\legend{Table \ref{Table:36DesignC2Optimal} reports simulation results for comparing exposures 3 and 6 in Design C.2 in the Sim-Optimal scenario. n=4303 is the sample size. The number of simulations is 10000. IPW refers to the inverse-probability weighted estimator, LOGIT refers to the QMLE-GR estimator with a logit model, WLS refers to the inverse-probability weighted least squares estimator, OPT (Linear) refers to the Opt-GR estimator with a linear model, OPT (Logit) refers to the Opt-GR estimator with a logit model, Opt-I (WLS) refers to the Opt-I GR estimator with the WLS model, and Opt-I (Logit) refers to the Opt-I GR estimator with the logit model. \\
The row \textit{$\textrm{Bias}^2\times n$} reports the squared biases of the estimators. The row $\textrm{Variance}\times n$ reports the variances of the estimators. The row $\textrm{MSE}\times n$ reports the MSEs of the estimators. The row $\textrm{Est. Var. Bound}^2\times n$ reports the averaged estimates of the variance bound estimators. The row $\textrm{True Asy. Variance}\times n$ reports the theoretical asymptotic variances. The first five rows are normalized by the sample size. The row  \textit{95\% Normal CI Coverage} reports the empirical coverage rates of nominal 95 percent confidence intervals. }
\end{center}

\newpage

\begin{center}
\begin{table}[ht]
    \caption{ Simulation Results for Comparing Exposures 3 and 4 in Design C.4 and Scenario \textit{Sim-Optimal}, $\sqrt{\sigma_{\max}\left(\Ome\right)/n}=0.115$, $n=4509$ }\label{Table:34DesignC4Optimal}
\centering
\begin{tabular}{rrrrrrrr}
  \hline
 & IPW & WLS & LOGIT & \makecell{OPT \\ {\scriptsize(Linear)}} & \makecell{OPT \\ {\scriptsize(Logit)}} & \makecell{Opt-I \\ {\scriptsize(WLS)}} & \makecell{Opt-I \\ {\scriptsize(Logit)}} \\ 
  \hline
$\textrm{Bias}^2\times n$ & 0.00 & 0.02 & 0.02 & 0.00 & 0.00 & 0.00 & 0.00 \\ 
  $\textrm{Variance}\times n$ & 28.36 & 10.87 & 10.61 & 9.85 & 9.94 & 10.25 & 10.24 \\ 
  $\textrm{MSE}\times n$ & 28.36 & 10.89 & 10.63 & 9.86 & 9.94 & 10.26 & 10.24 \\ 
  $\textrm{Est. Var. Bound}^2\times n$ & 28.84 & 10.76 & 10.70 & 10.22 & 10.04 & 10.78 & 10.77 \\ 
  $\textrm{True Asy. Variance}\times n$ & 27.84 & 10.48 & 10.36 & 9.50 & 9.33 & 9.80 & 9.54 \\ 
  $\textrm{95\% Normal CI Coverage}$ & 0.95 & 0.95 & 0.95 & 0.95 & 0.95 & 0.95 & 0.95 \\ 
   \hline
\end{tabular}
\end{table}
\legend{Table \ref{Table:34DesignC4Optimal} reports simulation results for comparing exposures 3 and 4 in Design C.4 in the Sim-Optimal scenario. n=4509 is the sample size. The number of simulations is 10000. IPW refers to the inverse-probability weighted estimator, LOGIT refers to the QMLE-GR estimator with a logit model, WLS refers to the inverse-probability weighted least squares estimator, OPT (Linear) refers to the Opt-GR estimator with a linear model, OPT (Logit) refers to the Opt-GR estimator with a logit model, Opt-I (WLS) refers to the Opt-I GR estimator with the WLS model, and Opt-I (Logit) refers to the Opt-I GR estimator with the logit model. \\
The row \textit{$\textrm{Bias}^2\times n$} reports the squared biases of the estimators. The row $\textrm{Variance}\times n$ reports the variances of the estimators. The row $\textrm{MSE}\times n$ reports the MSEs of the estimators. The row $\textrm{Est. Var. Bound}^2\times n$ reports the averaged estimates of the variance bound estimators. The row $\textrm{True Asy. Variance}\times n$ reports the theoretical asymptotic variances. The first five rows are normalized by the sample size. The row  \textit{95\% Normal CI Coverage} reports the empirical coverage rates of nominal 95 percent confidence intervals. }
\end{center}

\newpage
\begin{center}
\begin{table}[ht]
    \caption{ Simulation Results for Comparing Exposures 3 and 6 in Design C.4 and Scenario \textit{Sim-Optimal}, $\sqrt{\sigma_{\max}\left(\Ome\right)/n}=0.106$, $n=4303$ }\label{Table:36DesignC4Optimal}
\centering
\begin{tabular}{rrrrrrrr}
  \hline
 & IPW & WLS & LOGIT & \makecell{OPT \\ {\scriptsize(Linear)}} & \makecell{OPT \\ {\scriptsize(Logit)}} & \makecell{Opt-I \\ {\scriptsize(WLS)}} & \makecell{Opt-I \\ {\scriptsize(Logit)}} \\ 
  \hline
$\textrm{Bias}^2\times n$ & 0.00 & 0.03 & 0.02 & 0.48 & 0.53 & 0.00 & 0.01 \\ 
  $\textrm{Variance}\times n$ & 25.33 & 12.52 & 12.31 & 11.32 & 11.13 & 11.29 & 11.29 \\ 
  $\textrm{MSE}\times n$ & 25.33 & 12.55 & 12.33 & 11.81 & 11.65 & 11.29 & 11.29 \\ 
  $\textrm{Est. Var. Bound}^2\times n$ & 28.95 & 13.56 & 13.63 & 12.45 & 12.21 & 12.76 & 12.76 \\ 
  $\textrm{True Asy. Variance}\times n$ & 25.37 & 12.68 & 12.68 & 10.87 & 10.84 & 11.36 & 11.14 \\ 
  $\textrm{95\% Normal CI Coverage}$ & 0.96 & 0.95 & 0.96 & 0.95 & 0.95 & 0.96 & 0.96 \\ 
   \hline
\end{tabular}
\end{table}
\legend{Table \ref{Table:36DesignC4Optimal} reports simulation results for comparing exposures 3 and 6 in Design C.4 in the Sim-Optimal scenario. n=4303 is the sample size. The number of simulations is 10000. IPW refers to the inverse-probability weighted estimator, LOGIT refers to the QMLE-GR estimator with a logit model, WLS refers to the inverse-probability weighted least squares estimator, OPT (Linear) refers to the Opt-GR estimator with a linear model, OPT (Logit) refers to the Opt-GR estimator with a logit model, Opt-I (WLS) refers to the Opt-I GR estimator with the WLS model, and Opt-I (Logit) refers to the Opt-I GR estimator with the logit model. \\
The row \textit{$\textrm{Bias}^2\times n$} reports the squared biases of the estimators. The row $\textrm{Variance}\times n$ reports the variances of the estimators. The row $\textrm{MSE}\times n$ reports the MSEs of the estimators. The row $\textrm{Est. Var. Bound}^2\times n$ reports the averaged estimates of the variance bound estimators. The row $\textrm{True Asy. Variance}\times n$ reports the theoretical asymptotic variances. The first five rows are normalized by the sample size. The row  \textit{95\% Normal CI Coverage} reports the empirical coverage rates of nominal 95 percent confidence intervals. }
\end{center}

\section{Mathematical Objects, Operations and Quantities}\label{Mathematical Objects, Operators and Quantites}
We define mathematical objects, operations and quantities used in the paper. 
\begin{enumerate}

    \item We define \textit{real tensors}. The definitions are from \cite{qi2017tensor}.  A real tensor $\mathbf{A}= \left [{a_{ \scriptscriptstyle i_{\scriptscriptstyle 1}...i_{\scriptscriptstyle m}}}\right]\in\mathbb{R}^{n_1\times ...\times n_m} $ is a multi-array of entries, where $i_j=1,...,n_j$ for $j=1,...,m$. The positive integer $m$ is called the order of the tensor. When $n=n_1=...=n_m$, $\mathbf{A}$ is called an $m$th order $n$-dimensional tensor. The set of real $m$th order $n$-dimensional tensors is denoted as $\mathbf{T}_{m,n}$.  From this definition, a real matrix is a real tensor with order $m=2$ and $\mathbb{R}^{n\times n}=\mathbf{T}_{2,n}$. In the paper we shall use a 4th order n-dimensional real tensor to describe the fourth moments of the experimental designs. 
\item We define the tensor \textit{Hadamard product}. Let $\mathbf{A}= \left [{a_{ \scriptscriptstyle i_{\scriptscriptstyle 1}...i_{\scriptscriptstyle m}}}\right]\in\mathbb{R}^{n_1\times ...\times n_m} $ and $\mathbf{B}= \left [{b_{ \scriptscriptstyle i_{\scriptscriptstyle 1}...i_{\scriptscriptstyle m}}}\right]\in\mathbb{R}^{n_1\times ...\times n_m}$. The Hadamard product of two tensors is the result of their entrywise multiplications:
\begin{equation*}
        A\circ B = [c_{ \scriptscriptstyle i_{\scriptscriptstyle 1}...i_{\scriptscriptstyle m}}]=[{a_{ \scriptscriptstyle i_{\scriptscriptstyle 1}...i_{\scriptscriptstyle m}}} {b_{ \scriptscriptstyle i_{\scriptscriptstyle 1}...i_{\scriptscriptstyle m}}}]\in\mathbb{R}^{n_1\times ...\times n_m}
\end{equation*}
With a slight abuse of terminology, we denote the tensor \textit{Hadamard division} by $\backslash$:
\begin{equation*}
        A\backslash B = \left[d_{ \scriptscriptstyle i_{\scriptscriptstyle 1}...i_{\scriptscriptstyle m}}\right]=\left[\frac{{a_{ \scriptscriptstyle i_{\scriptscriptstyle 1}...i_{\scriptscriptstyle m}}}}{{b_{ \scriptscriptstyle i_{\scriptscriptstyle 1}...i_{\scriptscriptstyle m}}}} \right]\in\mathbb{R}^{n_1\times ...\times n_m},
\end{equation*}
with the rule $\frac{0}{0}=0$. The tensors used in this paper are designed to avoid the problem of dividing a nonzero number by 0.\\
\item Besides the usual matrix operations, we shall also define the \textit{tensor product} of two matrices. We use $\otimes$ to denote the \textit{tensor product} of two matrices, which results in an order four tensor. For any two matrices $A=[a_{ij}]\in\mathbb{R}^{n_1\times n_2}$ and $B=[b_{ij}]\in\mathbb{R}^{n_3\times n_4}$:
\begin{equation*}
    A\otimes B = [c_{ijkl}]=[a_{ij}b_{kl}]\in \mathbb{R}^{n_1\times n_2\times n_3\times n_4}
\end{equation*}
The tensor product operation can also be defined on higher order tensors but we use it only for matrices in the paper.
\item We define \textit{norms} of matrices. For a real matrix $\mathbf{A}=\left[a_{ij}\right]\in\mathbb{R}^{n_1\times n_2}$. $\|\mathbf{A}\|_2$ denotes the Frobenius norm of the matrix $\mathbf{A}$ where  $\|\mathbf{A}\|_2=\sqrt{\sum_{i=1}^{n_1}\sum_{j=1}^{n_2}a_{ij}^2}$. $\|\mathbf{A}\|_1$ denotes the $l_1$ vector norm of the matrix where  $\|\mathbf{A}\|_1=\sum_{i=1}^{n_1}\sum_{j=1}^{n_2}|a_{ij}|$. We will also use $\|\mathbf{A}\|_4^4=\sum_{i=1}^{n_1}\sum_{j=1}^{n_2}|a_{ij}|^4$. $\sigma_{\max}\left(\mathbf{A}\right)$ denotes the spectral norm where $\sigma_{\max}\left(\mathbf{A}\right)_2=\max\{\sqrt{\lambda}, \lambda \text{ is an eigenvalue of } \mathbf{A}'\mathbf{A}\}$. We will also use $\viiii{\mathbf{A}}_2$ to denote the spectral norm.  $\viiii{\mathbf{A}}_1$ denotes the $l_1$ -induced matrix norm, where $\viiii{\mathbf{A}}_1=\max_{j\in[n]}\{\sum_{i=1}^n |a_{ij}|\} $. For a tensor $\mathbf{A}$ we use $\|\mathbf{A}\|_1$ to denote the sum of the absolute values of the tensor entries. We use the standard notation for vector norms, for example, see Section 5.2 in \cite{horn2012matrix}.
\item For a tensor $\mathbf{A}\in\mathbf{T}^{m,n}$ with an even $m$, we use the symbol $\sigma_{\max}(\mathbf{A})$ to denote the optimal value of the following optimization problem\footnote{This quantity is defined in \cite{lim2005singular}.}:
\begin{align*}
    &\max_{\{v_i\}_{i=1}^m\in\mathbb{R}^n}\mathbf{A}(v_1,...,v_m)\\ &\text{   subject to  }    \sum_{i=1}^n v_i^m=1 \text{ for all $i=1,...,m$}        
\end{align*}
\item For symbols, $\I_{k}$ denotes the identity matrix of dimension $k\times k$, and $\0_{k}$ a zero matrix of dimension $k\times k$, and $\0_{k\times p}$ a zero matrix of dimension $k\times p$. $\ones{k}$ denotes a column k-vector of 1's. $\mathbf{A}^{+}$ denotes the unique Moore-Penrose inverse of the matrix $\mathbf{A}$. $\diag()$ maps a length-n vector to an n-by-n diagonal matrix. We denote the matrix positive-semidefinite partial ordering by $\succeq$: $\mathbf{A}\succeq \mathbf{B}$ if and only if $\mathbf{A}-\mathbf{B}$ is a positive semidefinite matrix. We use $\nabla$ to denote the total differentiation operator. For a function $f:\mathcal{X}\subset \mathbb{R}^k\to\mathbb{R}$, $\nabla_x f$ denotes the gradient function of $f$ (if it exists), $\nabla_{xx'}f$ denotes the Hessian function of $f$ and so on. We use the partial derivative notation $\frac{\partial}{\partial x}f$ to denote the partial derivative $f$ with respect to a particular argument $x$. For a set in $\Theta\subset \mathbb{R}^s$, we use the notation $\operatorname{Bd}(\Theta)$ to denote its boundary with respect to the standard topology of a Euclidean space.
\item For probabilistic convergence, a sequence of random variables $M_n=o_p(1)$ if $\lim_{n\to\infty}\mathbf{P}_n(|M_n|>\epsilon)=0$ for any positive $\epsilon$, and $M_n=O_p(1)$ if for each $\epsilon>0$ there exists a constant $K\geq 0$ and a constant $N\geq 0$ such that $\mathbf{P}_n(|M_n|\geq K)<\epsilon$ for all $n\geq N$. A sequence of random variables $M_n=o_p(a_n)$ if $\frac{M_n}{a_n}=o_p(1)$ and $M_n=O_p(a_n)$ if $\frac{M_n}{a_n}=O_p(1)$. A vector or matrix with fixed dimensions is $o_p(a_n)$ and $O_p(a_n)$ if each entry is $o_p(a_n)$ and $O_p(a_n)$, respectively. For two deterministic sequences $a_n$ and $b_n$, we denote $a_n=\Theta(b_n)$ if there exsits positive $c$ and $C$ and a N such that $cb_n\leq a_n\leq Cb_n$ for $n\geq N$.\\
\end{enumerate}

\end{appendix}

\end{document}